\def\argmin{{\rm argmin}}
\def\argmax{{\rm argmax}}
\def\qed{\quad \mbox{$\vcenter{\vbox{\hrule height.4pt
				\hbox{\vrule width.4pt height.9ex \kern.9ex \vrule
					width.4pt} \hrule height.4pt}}$}}
\begin{document}
	\newtheorem{theorem}{Theorem}
	\newtheorem{lemma}{Lemma}
	\newtheorem{assumption}{Assumption}
	\newtheorem{definition}{Definition}
	\newtheorem{corollary}{Corollary}
	\newtheorem{remark}{Remark}
	\newtheorem{proposition}{Proposition}
	\newtheorem{conjecture}{Conjecture}

\title{Fork and Join Queueing Networks with Heavy Tails: Scaling Dimension and Throughput Limit}

\numberofauthors{3}

\author{
	\alignauthor
	Yun Zeng\\
	\affaddr{The Ohio State University}\\
	\affaddr{1971 Neil Ave}\\
	\affaddr{Columbus, OH}\\
	\email{zeng.153@osu.edu}
	\alignauthor
	Jian Tan\\
	\affaddr{The Ohio State University}\\
	\affaddr{2015 Neil Ave}\\
	\affaddr{Columbus, OH}\\
	\email{tan.252@osu.edu}
	\alignauthor Cathy H. Xia\\
	\affaddr{The Ohio State University}\\
	\affaddr{1971 Neil Ave}\\
	\affaddr{Columbus, OH}\\
	\email{xia.52@osu.edu}
}

\maketitle

\begin{abstract}
	Parallel and distributed computing systems are foundational to the success of cloud computing and big data analytics.  
	These systems process computational workflows in a way that can be mathematically modeled by a fork-and-join queueing network with blocking (FJQN/B). 
	While engineering solutions have long been made to build and scale such systems, it is challenging to rigorously characterize their throughput performance at scale theoretically.
	What further complicates the study is the presence of heavy-tailed delays that have been widely documented therein. 
	To this end, we introduce two fundamental concepts for networks of arbitrary topology (scaling dimension and extended metric dimension) and utilize an infinite sequence of growing FJQN/Bs to study the throughput limit. The throughput is said to be scalable if the throughput limit infimum of the sequence is strictly positive as the network size grows to infinity. 
	We investigate throughput scalability by focusing on heavy-tailed service times that are regularly varying (with index $\alpha>1$) and featuring the network topology described by the two aforementioned dimensions. 
	In particular, we show that an infinite sequence of FJQN/Bs is throughput scalable if the extended metric dimension $<\alpha-1$ and only if the scaling dimension $\le \alpha-1$. These theoretical results provide new insights on the scalability of a rich class of FJQN/Bs with various structures, including tandem, lattice, hexagon, pyramid, tree, and fractals. 
\end{abstract}

\keywords{Fork/join, queueing network, scalability, heavy tails, network dimension, throughput limit}

\section{Introduction}
Parallel and distributed computing systems are foundational to the success of cloud computing and big data analytics, 
witnessed by the wide applications deployed on, e.g., 
Amazon AWS~\cite{aws}, Google Cloud~\cite{GoogleCloud}, Microsoft Azure~\cite{MicrosoftAzure}, IBM BlueMix~\cite{IBMBlueMix}. 
Numerous large-scale analytics have been developed over distributed servers to achieve high performance,
e.g., for DNA sequencing analysis~\cite{genomics} and for astronomical data analysis~\cite{astronomy}. Parallel and distributed computing also exhibits itself in wireless sensor and ad-hoc networks~\cite{Sivrikaya2004,adhoc}, in composite web services~\cite{Menasce2004}, in distributed stream computing~\cite{S4}, in distributed file systems~\cite{GFS}, in MapReduce/Hadoop computing frameworks~\cite{Dean2008MapReduce,Shvachko2010Hadoop}, and in end-system multicast~\cite{Baccelli2005TCP},  etc. 


The above parallel and distributed computing systems can be naturally modeled as Fork-Join Queueing Networks with Blocking (FJQN/Bs), see, e.g.,~\cite{Olvera-Cravioto2014,Osman2015,XiaSig} and a recent survey by \cite{Thomasian14}.
A fork operation corresponds to a job being separated into subtasks for parallel processing at different service stations. A join operation corresponds to outputs of parallel subtasks being aggregated together at a synchronization point. Multiple service stations and fork/join operations exist and form a network. Intermediate jobs and subtasks are queued in buffers. Services and fork/join operations can be blocked when related buffers are fully occupied. Due to such synchronization and blocking mechanism, exact analyses of FJQN/Bs can be challenging and possess high complexity. Much of the literature focuses on performance properties such as stability, duality, and comparison results, 
e.g.,~\cite{BLiu89, Bacelli, Dallery}, 
approximation or bounding techniques, e.g., \cite{Balsamo98,FJPEVA2015,Rizk2015,Rizk2016,Tancrez2013,Varma94}, or heavy traffic limits, e.g.,~\cite{Lu2015,Lu2017}. 

As the sizes of various parallel and distributed computing systems continue to grow, their throughput performance could degrade due to synchronization delays, processing time variations, or data storage, I/O, and bandwidth constraints. The problem has been well recognized in distributed stream processing \cite{Cherniack2003,Jain2006,XiaSig}, in end-system multicast \cite{Baccelli2005TCP,Bhattacharyya1999,Chaintreau2002TCP}, in wireless networks \cite{Fu2002,Jelenkovic2007}, in cloud computing~\cite{Dean2013}, and in many other distributed computing environments. 
One critical issue concerns {\it throughput scalability}: can we properly design a parallel and distributed processing system in massive scale so that the throughput performance can be sustained independent of the size? 
While practical engineering solutions have long been made to scale computing systems, the mathematical foundations toward understanding the throughput performance of ever-growing systems remain rudimentary. 
What further complicates the investigation is the presence of heavy-tailed processing times that have been widely observed in such systems~\cite{Haddad2010,GoogleTrace,Kavulya2010,Scharf2005,Tan2012,Wang2015}. 
These heavy-tailed processing times can cause extremal delays that directly impact the synchronization and bring down system throughput. But how to quantify the throughput degradation? What are the key factors to determine scalability?

To investigate the throughput limit, we utilize an infinite sequence of FJQN/Bs $\mathcal{N}=\{N_1,N_2,\dots,N_i,\dots\}$ to characterize the way the system grows. 
Each $N_i$ is a FJQN/B of finite size (in number of nodes) while the network size goes to infinity as $i\rightarrow\infty$. 
This sequence $\mathcal{N}$ is said to be {\it throughput scalable} if the limit infimum of the network throughput is strictly positive. 
This scalability problem has been studied under light-tailed service times in \cite{Zeng2016}, which shows that, 
\vspace{-0.1in}
\begin{equation}\label{eqn:level-degree-bounded}\vspace{-0.07in}
	\limsup_{i\rightarrow\infty}D_i<\infty ~\text{and}~\limsup_{i\rightarrow\infty}L^*_i<\infty
\end{equation}
is a necessary and sufficient condition for throughput scalability of FJQN/Bs,
where $D_i$ and $L^*_i$ represent respectively the network degree and the minimum level of $N_i$. 
But the scalability condition remains open when the service times are heavy tailed. 


In this paper, we focus on scalability of FJQN/Bs under heavy-tailed service times; 
see \cite{Sigman1999} for different types of heavy-tailed distributions. In particular, we focus on an important class 
where a random service time $\sigma$ is regularly varying with index $\alpha>1$. In this case, we have $\mathbb{E}\left[\sigma^{\beta}\right]<\infty$, for any $\beta<\alpha$, and $\mathbb{E}\left[\sigma^{\beta}\right]=\infty$, for any $\beta>\alpha$; for more details on regularly varying see \cite{RegularVariation-Bingham}. 
We examine conditions for FJQN/Bs to be throughput scalable. 
The networks are assumed to be connected, directed, acyclic, and homogeneous in buffer sizes and service time distributions. For the non-homogeneous cases, we can always bound the throughput by that of homogeneous networks using the monotonicity property (see \cite{Bacelli,Dallery2}).  

Different from the light-tailed counterpart,  we show that,  for the heavy-tailed scenarios, the throughput scalability of FJQN/Bs is further determined by the following two concepts of network dimension: the \emph{scaling dimension} and the \emph{extended metric dimension}. The scaling dimension is formally defined in Section \ref{sec:Scaling-Dim}. Briefly speaking, the scaling dimension is given by the ratio of log network size over log diameter as the network expands. One can interpret the scaling dimension as a metric to measure how fast network grows as a function of network size and diameter. In particular, if $\mathcal{N}$ converges to a connected infinite graph that is locally-finite, then the scaling dimension is in analog with the growth degree in geometric group theory \cite{Imrich1991,Seifter1991}, or the upper internal scaling dimension in Physics \cite{Nowotny1998,Requardt2006}. If $\mathcal{N}$ converges to a fractal, then the scaling dimension is in analog with the box counting dimension \cite{FractalGeometry}, or the Hausdorff dimension \cite{Durhuus2009,Kron2004}. The extended metric dimension is formally defined in Section \ref{sec:Metric-Dim}. The concept derives from a graph's metric dimension: the minimum cardinality of a basis that uniquely identifies every node by its distance to the basis (see e.g.~\cite{MetricDimensionSurvey}). The extended metric dimension is given by the minimum cardinality of a basis that identifies nodes up to a constant level as network expands. One can interpret the extended metric dimension as the least number of coordinates needed to describe the network viewed far away as it expands.

Our main result includes a necessary condition and a sufficient condition on throughput scalability of FJQN/Bs under heavy-tailed service times. 
\vspace{-0.0in}
\begin{theorem}\label{thm:main-result}
	Consider an infinite sequence of FJQN/Bs $
	\mathcal{N}=\{N_i\}_{i=1}^\infty$, where $N_i=(V_i,E_i)$ is a finite-sized FJQN/B with $|V_{i}|<\infty$, $\forall i\in \mathbb{Z}^+$, and $\limsup_{i\rightarrow\infty}|V_i|=\infty$. 
	The service times are i.i.d. regularly varying with index $\alpha\!>\!1$. 
	Under condition~\eqref{eqn:level-degree-bounded}, the sequence $\mathcal{N}$ is throughput scalable if the extended metric dimension $dim_{EM}(\mathcal{N})$ satisfies\vspace{-0.05in}
	\begin{equation}\vspace{-0.05in} \label{eqn:suf}
		dim_{EM}(\mathcal{N})<\alpha-1
	\end{equation}
	and only if the scaling dimension $dim_S(\mathcal{N})$ satisfies\vspace{-0.05in}
	\begin{equation}\vspace{-0.05in} \label{eqn:nec}
		dim_S(\mathcal{N})\le \alpha-1.
	\end{equation}
\end{theorem}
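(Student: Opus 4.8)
The plan is to express the throughput of each $N_i$ through a $(\max,+)$ recursion and reduce the scalability question to the finiteness of a last-passage time constant on a space-time graph. Writing $D(v,k)$ for the departure epoch of the $k$-th job from node $v$ in the saturated network, the blocking and precedence constraints yield a recursion of the form $D(v,k)=\sigma_v(k)+\max\{\max_{u\to v}D(u,k),\,D(v,k-1),\,\max_{w}D(w,k-b)\}$ (with $b$ the common buffer size), where the $\sigma_v(k)$ are i.i.d.\ regularly varying with index $\alpha$. This is precisely directed last-passage percolation on the space-time graph $G_i\times\mathbb{Z}_{\ge 0}$, so the cycle time $\gamma_i:=\lim_{K\to\infty}D(\mathrm{sink},K)/K$ is its temporal time constant and $\theta_i=1/\gamma_i$. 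Scalability is therefore equivalent to $\limsup_i\gamma_i<\infty$. Since condition \eqref{eqn:level-degree-bounded} already forces the mean-level contribution to stay bounded (this is exactly the light-tailed regime of \cite{Zeng2016}), the whole heavy-tailed effect is carried by the extremes picked up along space-time paths, and the role of the two dimensions is to pin down the effective dimension of this percolation: the space-time graph has growth exponent $dim_S(\mathcal{N})+1$, the extra $+1$ coming from the job axis.

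For the sufficient direction I would upper bound $\gamma_i$. A basis realizing the extended metric dimension supplies $dim_{EM}(\mathcal{N})$ almost-isometric coordinates on $N_i$, so the space-time graph embeds into an integer lattice of dimension at most $dim_{EM}(\mathcal{N})+1$ with bounded distortion, and the bounded degree and bounded minimum level in \eqref{eqn:level-degree-bounded} keep this distortion uniform in $i$. On such an embedding I would invoke a Martin-type moment criterion for last-passage percolation: the time constant is finite provided $\int_0^\infty \overline F(x)^{1/(dim_{EM}(\mathcal{N})+1)}\,dx<\infty$, and for a regularly varying tail $\overline F(x)\asymp x^{-\alpha}$ this integral converges precisely when $\alpha>dim_{EM}(\mathcal{N})+1$, i.e.\ when $dim_{EM}(\mathcal{N})<\alpha-1$. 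The probabilistic heart is the estimate that the maximum of path-sums of regularly varying weights, taken over a family whose cardinality within radius $r$ grows like $r^{dim_{EM}(\mathcal{N})}$, has expectation at most linear in the path length; a chaining argument over dyadic scales converts the tail integral into the required uniform bound on $\gamma_i$.

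For the necessary direction I would lower bound $\gamma_i$ and exhibit divergence once $dim_S(\mathcal{N})>\alpha-1$. Here only volume growth matters: a ball of radius $r$ in $N_i$ contains $\asymp r^{dim_S(\mathcal{N})}$ nodes, hence $\asymp r^{dim_S(\mathcal{N})+1}$ space-time sites across $\asymp r$ jobs, so the largest service time among them is of order $r^{(dim_S(\mathcal{N})+1)/\alpha}$; and because every site lying between the source-at-job-$1$ and the sink-at-job-$r$ in the space-time order sits on some monotone path, this extreme value is captured by the critical path at a cost of only $\asymp r$ steps. Summing these gains over dyadic scales $r=2^j$ up to $\mathrm{diam}(N_i)$, the per-job rate accumulates like $\sum_j 2^{\,j((dim_S(\mathcal{N})+1)/\alpha-1)}$, which stays summable and harmless when $dim_S(\mathcal{N})<\alpha-1$ but diverges once $dim_S(\mathcal{N})>\alpha-1$; equivalently the $(dim_S(\mathcal{N})+1)$-dimensional time constant is infinite because $\mathbb{E}[\sigma^{dim_S(\mathcal{N})+1}]=\infty$, forcing $\gamma_i\to\infty$. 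Thus a scalable sequence must satisfy $dim_S(\mathcal{N})\le\alpha-1$.

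The main obstacle I anticipate is twofold. First, the blocking arcs make the space-time graph only a directed (non-lattice) DAG, so the clean shape-theorem machinery of $\mathbb{Z}^d$ last-passage percolation does not apply off the shelf; the reduction to a percolation with a well-defined growth exponent must be derived directly from \eqref{eqn:level-degree-bounded}, and the monotonicity needed to guarantee that extreme values are genuinely capturable along a single admissible path has to be argued from the fork-join geometry rather than assumed. Second, and more seriously, the gap between the two dimensions is intrinsic: the lower bound needs only volume growth ($dim_S$), whereas the upper bound must control the maximum over exponentially many space-time paths using only polynomial volume information, which is exactly what the metric-dimension embedding ($dim_{EM}$) provides. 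Ruling out rare path conspiracies that would inflate the last-passage value beyond the moment-based estimate — so that $dim_{EM}(\mathcal{N})<\alpha-1$ truly suffices — is where the delicate multiscale estimate must do its work, and it is also the reason the sufficient and necessary thresholds are phrased through two different notions of dimension.
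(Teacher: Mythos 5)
Your overall strategy coincides with the paper's: represent departure times as last-passage percolation on the precedence (space-time) graph; prove sufficiency by embedding that graph into a $(dim_{EM}(\mathcal{N})+1)$-dimensional lattice via resolving-set coordinates and invoking Martin's moment criterion $\int_0^\infty \overline{F}_{\sigma}(x)^{1/K}dx<\infty$ (realized in the paper through greedy lattice animals, with the $\Lambda$-fold multiplicity of the many-to-one embedding absorbed into a convolution tail); and prove necessity by showing that the largest of the roughly $\Delta^{dim_S(\mathcal{N})+1}$ service times in a space-time block is captured by an admissible path of length of order $\Delta$, then applying extreme value theory in the Fr\'echet domain of attraction. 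Your dyadic layering appears in the paper as a single-scale partition of $[0,m]$ into layers of width proportional to $\Delta_i b$, with super-additivity of maximal path weights playing the role of your ``capture'' step.

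There is, however, one step in your necessary direction that fails as written: you assert that ``a ball of radius $r$ in $N_i$ contains $\asymp r^{dim_S(\mathcal{N})}$ nodes.'' The scaling dimension is defined as a supremum over subsequences and over connected \emph{subnetworks} of the ratio $\log|\bar{V}_{i_n}|/\log\Delta(\bar{N}_{i_n})$; it gives no lower bound on ball growth in $N_i$ itself. In the paper's tandem-plus-binary-tree example (Remark~\ref{rmk:scal-dim-tandem-bitree}), $dim_S(\mathcal{N})=\infty$ while almost every ball grows linearly, so your volume estimate is false there even though the theorem must still apply. The paper repairs exactly this: from the definition it extracts a subsequence of subnetworks $\bar{N}'_n$ with $|\bar{V}'_n|\ge c\,\Delta(\bar{N}'_n)^{K-1}$ for some $K-1>\alpha-1$, and then uses monotonicity of throughput with respect to network inclusion, $\theta(N_{i_j})\le\theta(\bar{N}_{i_j})$ (removing nodes can only raise throughput), so that driving the subnetwork's throughput to zero drives the full sequence's throughput to zero. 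Without this reduction, which your sketch never mentions, the extreme-value lower bound cannot be localized to the region of the network that actually possesses the claimed polynomial growth; with it, your argument becomes the paper's.
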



Theorem \ref{thm:main-result} reveals that Condition \eqref{eqn:level-degree-bounded} is not enough to address throughput scalability in heavy-tailed cases. We need additional conditions on 
network dimension to ensure that the growth degree of the networks is bounded by the heavy tail index of the service time distribution. 
This result provides new insights on the scalability of a rich class of FJQN/Bs under various structures, including tandem, lattice, hexagon, pyramid, tree, and fractals. 
Table \ref{tbl:summary} provides a list of network examples with scalability conditions in addition to Condition \eqref{eqn:level-degree-bounded}, which will be further discussed in Section~\ref{sec:app}.

\begin{table}[h!]
	\centering
	\caption{\label{tbl:summary}Examples with Scalability Conditions}
	\vspace{0.05in}
	\small
	\begin{tabular}{| p{0.15\columnwidth}<{\centering} | p{0.3\columnwidth} | p{0.2\columnwidth}<{\centering} | p{0.17\columnwidth}<{\centering} |}
		\hline \rule{0pt}{8pt} 
		\multirow{2}{*}{Name} &     \hspace{0.25in}\multirow{2}{*}{Structure} & \multicolumn{2}{c|}{Scalability Conditions}  \\
		\cline{3-4} \rule{0pt}{8pt} 
		&& Necessary & Sufficient\\
		\hline    \rule{0pt}{10pt}
		Tandem &  \raisebox{-0.2\height}{\includegraphics[width=0.25\columnwidth]{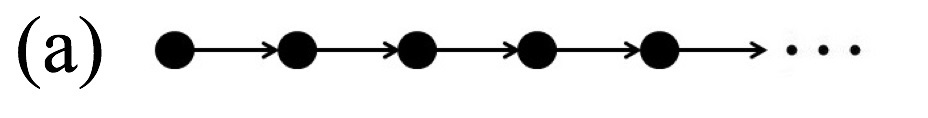}}& $\alpha\ge 2$ & $\alpha>2$ \\
		\hline  \rule{0pt}{25pt}
		Tandem-alike & \raisebox{-0.5\height}{\includegraphics[width=0.33\columnwidth]{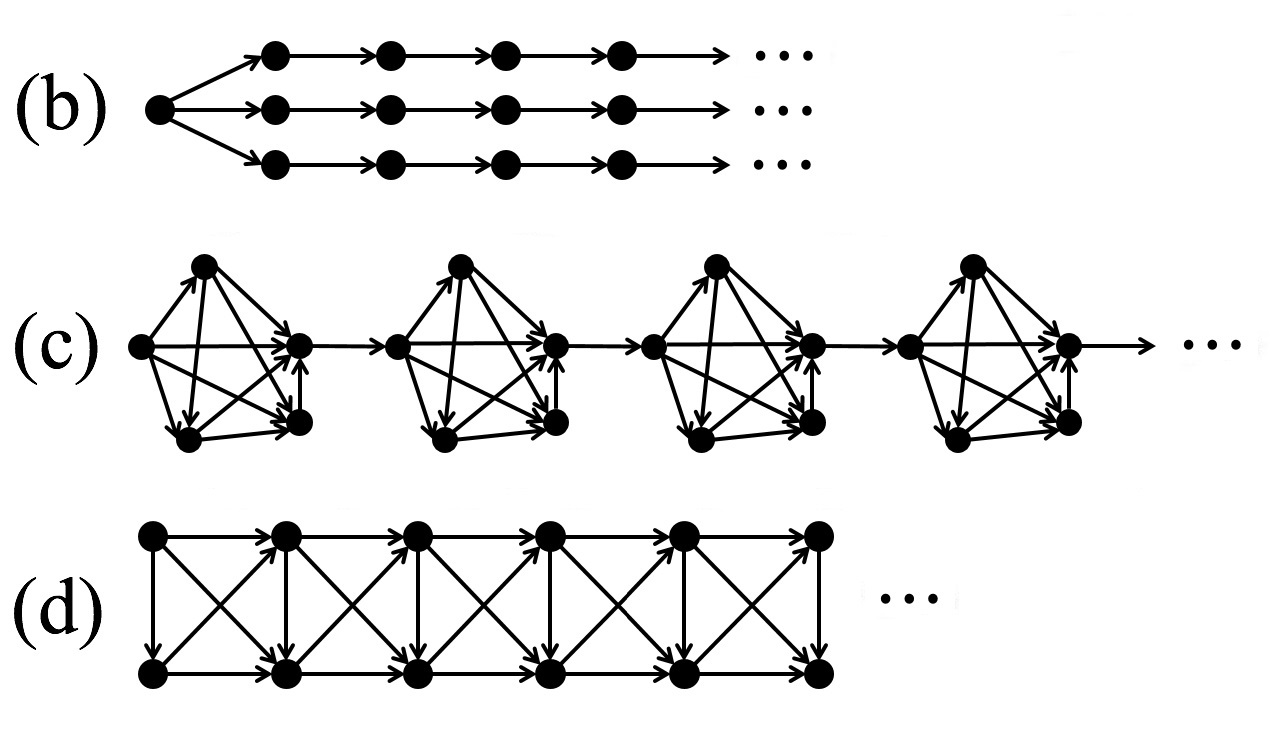}} & $\alpha\ge 2$ &  $\alpha>2$ \\
		\hline \rule{0pt}{32pt}
		$d$-D Lattice & \raisebox{-0.5\height}{\includegraphics[width=0.2\columnwidth]{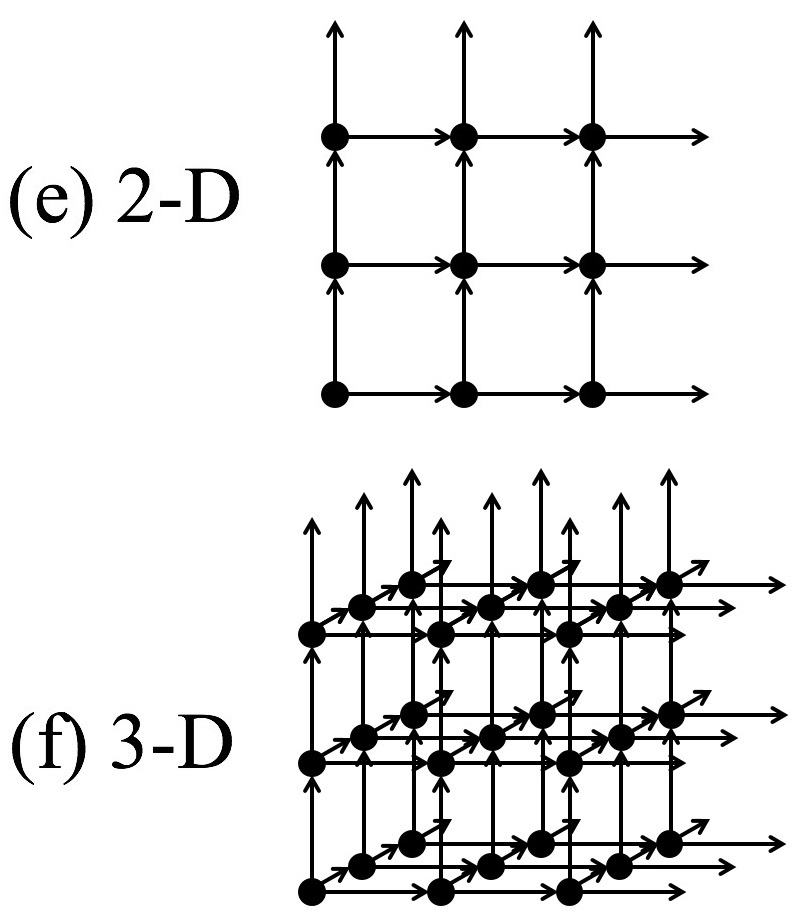}} \vspace{0.05in} & $\alpha\ge d+1$ & $\alpha>d+1$ \\
		\hline \rule{0pt}{24pt}
		Hexagon & \raisebox{-0.4\height}{\includegraphics[width=0.19\columnwidth]{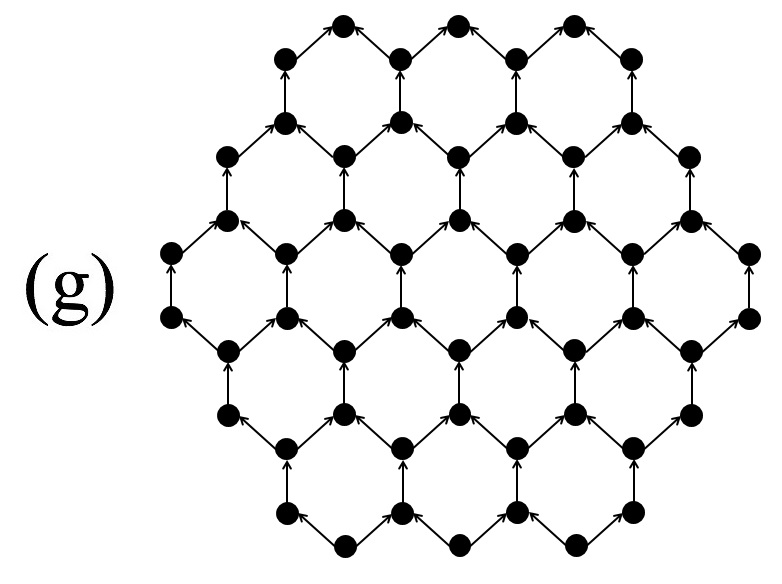}} \vspace{2pt} & $\alpha\ge 3$ & $\alpha>3$ \\
		\hline \rule{0pt}{15pt}
		Tetrahedron Pyramid & \raisebox{-0.75\height}{\includegraphics[width=0.2\columnwidth]{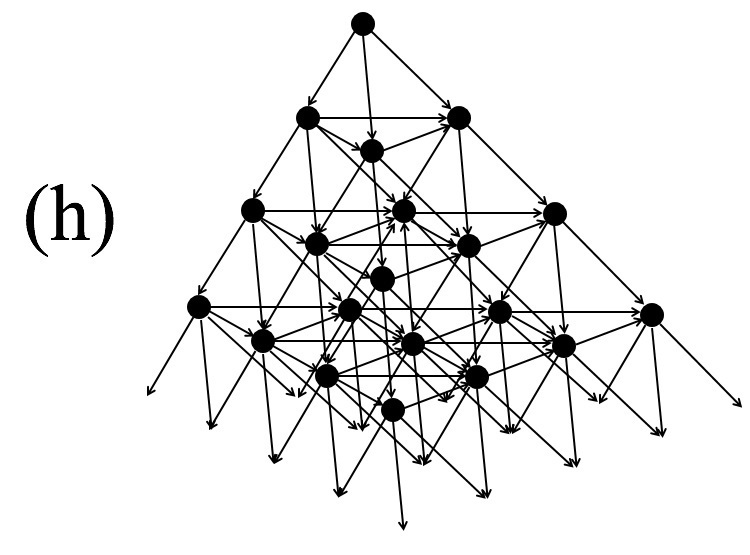}} & \vspace{0.01in} $\alpha\ge 4$ & \vspace{0.01in} $\alpha>4$ \\
		\hline \rule{0pt}{12pt}
		Sierpinski Triangle & \raisebox{-0.7\height}{\includegraphics[width=0.16\columnwidth]{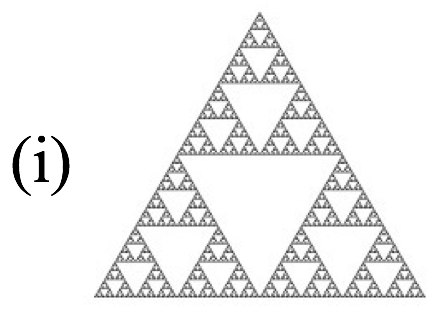}} \vspace{0.05in} & \vspace{0.01in} $\alpha\!\ge\! 1\!+\!\log_2 3$ & \vspace{0.01in} $\alpha>3$ \\
		\hline \rule{0pt}{20pt}
		Binary Tree & \raisebox{-0.5\height}{\includegraphics[width=0.18\columnwidth]{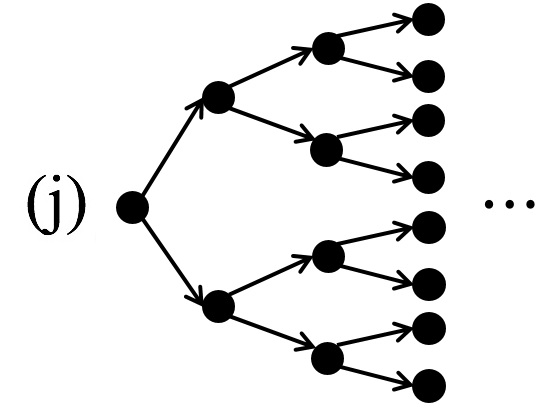}} \vspace{0.05in} & light-tailed & light-tailed \\
		\hline
	\end{tabular}
	\normalsize
\end{table}

\subsection{Contribution and Limitation}
We provide conditions for throughput scalability of general FJQN/Bs under heavy-tailed service times. Our contributions include: 
\begin{itemize}
	
	\item We introduce two important topological concepts on the dimension of an infinite sequence of FJQN/B:  scaling dimension and extended metric dimension. We demonstrate the relationship of the two dimensions, and  establish strong connections among the two dimensions, service time tails, and throughput limits.
	
	\item We propose a necessary condition on throughput scalability of FJQN/Bs depending on the scaling dimension and the service time tails. We show that, under the assumption that service times are i.i.d. regularly varying with index $\alpha$, a sequence of FJQN/Bs is not throughput scalable if the scaling dimension is strictly larger than $\alpha-1$. Thus, \eqref{eqn:nec} is necessary. The proof is based on last-passage percolation and extreme value theory.

	\item We propose a sufficient condition on throughput scalability of FJQN/Bs depending on the extended metric dimension and the service time tails. We show that, under the assumption that service times are i.i.d. with cdf $F_{\sigma}$  where
	\begin{equation} \label{eqn:FsigmaK}\vspace{-0.05in}
		\int_0^\infty \big(1-F_{\sigma}(x)\big)^{1/K}dx<\infty, 
	\end{equation} 
	for some finite~$K\in \mathbb{Z}^+$, 
	a sequence of FJQN/Bs is throughput scalable if the extended metric dimension is no larger than $K-1$. 
	When the service times are regularly varying with index~$\alpha$,  \eqref{eqn:FsigmaK} holds for any $K < \alpha$. Thus, \eqref{eqn:suf} is sufficient. The proof is based on mapping networks to lattices and bounding the throughput by growth of lattice animals \footnote{A lattice animal is a connect subset of points on a lattice; see \cite{Cox1993,Martin-LatticeAnimal} for the formal definition.}.
	
	\item We demonstrate that our proposed scalability conditions are almost tight (with only a marginal difference between ``$<$'' and ``$\le$'' in \eqref{eqn:suf} and \eqref{eqn:nec} of Theorem \ref{thm:main-result}) when the scaling dimension is an integer that equals the extended metric dimension.  This includes most of the commonly seen networks, including all the examples in Table \ref{tbl:summary} of tandem, lattice, hexagon, and tetrahedron pyramid networks. 
	
\end{itemize}


However, in the intriguing cases when the network converges to a fractal with a non-integral scaling dimension (e.g. the Sierpinski triangle in Table \ref{tbl:summary}), we observe that there exists a gap between the scaling dimension and the extended metric dimension. This leads to a non-trivial gap between the necessary and the sufficient conditions on throughput scalability. In general, $dim_{S}(\mathcal{N})\le dim_{EM}(\mathcal{N})$ as we establish in Lemma~\ref{lem:relation-S-le-EM}. 
We also conjecture that $dim_{EM}(\mathcal{N})\le \lceil dim_{S}(\mathcal{N})\rceil$,
which, if true, implies that the size of the gap is within $[0,1)$ for fractals and is marginal for common networks with integral scaling dimensions. 

To the best of our knowledge, this work is among the first attempts to  develop necessary and/or sufficient conditions of FJQN/Bs under heavy-tailed service times and establish the strong connections among the network dimensions, service time tails, and throughput limits. The results not only can cover FJQN/Bs with heavy-tailed service times but also can be applied to analysis of other types of networks or fractals such as social networks, electrical grid, Internet of Things, etc. 
Our investigation on the two network dimensions could also be of independent interest to a list of broad topics such as graph theory, geometric group theory, fractal geometry, and space-time physics.

\subsection{Related Work}
Previous studies on scalability of FJQN/Bs either focus on special network structures or assume  light-tailed service times. \cite{Martin} first shows that the throughput of a tandem queueing network is scalable, 
under condition \eqref{eqn:FsigmaK} for $K = 2$. 
\cite{Baccelli2005TCP} shows that the throughput of a one-to-many multicast tree 
is scalable, under light-tailed service times and bounded degree of the tree.  \cite{Chaintreau-sharpness} shows that the throughput of a pattern grid with dimension $d$
is scalable, if there exists a sharp vector of dimension $d$ and \eqref{eqn:FsigmaK} holds for $K = d$.
In \cite{Chaintreau}, the author gives an example that the throughput of a tree network is not scalable under heavy-tailed service times. 
For generally structured networks,  \cite{Zeng2016} proposes a necessary and sufficient condition for throughput scalability under light-tailed service times; 
\cite{XiaSig} presents necessary conditions for throughput scalability when service times are either light-tailed or of Pareto distributions. 
The question remains on how to characterize the throughput scalability of generally structured FJQN/Bs under heavy-tailed service times.

\medskip
The rest of the paper is organized as follows.  Section~\ref{sec:model} sets up the FJQN/B model and provides preliminary analysis. Section \ref{sec:scaling} introduces the concepts of scaling dimension and extended metric dimension, which is followed by a detailed discussion on applications in Section~\ref{sec:app}. Our main result is proved in Section \ref{sec:pf-main}. Section~\ref{sec:conclusion} concludes the paper.

\section{Model and Preliminaries}\label{sec:model}
\subsection{FJQN/B Model}
A Fork and Join Queueing Network with Blocking (FJQN/B), denoted by $N=(V, E; B)$, consists of a set of nodes $V$,  a set of directed arcs $E$, and a set of buffers $B$.  
Nodes represent servers, arcs represent routing of jobs. Associated with each arc $(u, v)$, there is a buffer of finite capacity representing intermediate storage of jobs between services. Arc $(u,v)$ is called an outgoing arc of node $u$ and an incoming arc of node $v$. Nodes with no incoming (outgoing) arcs are sources (sinks). The buffer on arc $(u,v)$ is called a downstream buffer of node $u$ and an upstream buffer of node $v$. One example of an FJQN/B is given in Figure \ref{fig:example}.

\begin{figure}
	\centering
	\includegraphics[width=0.6\columnwidth]{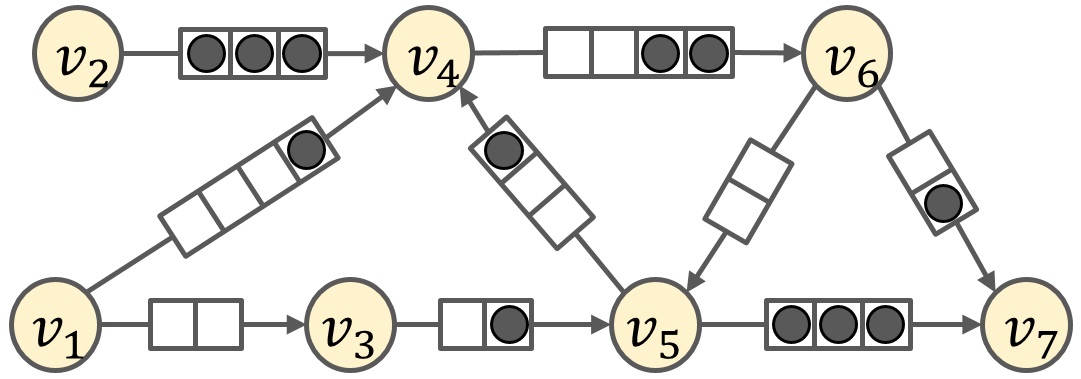}
	\caption{Example of FJQN/B. Blocked: $v_2,~v_5$; starved: $v_3,~v_5$.}
	\label{fig:example}
\end{figure}

Each node models a single server that serves incoming jobs according to the First Come First Serve (FCFS) policy. Services are conducted in a fork-join manner: each service consumes exactly one job from every upstream buffer and generates exactly one job to every downstream buffer. A server is starved (blocked) if one of the upstream (downstream) buffers is empty (full). Sources are never starved and sinks are never blocked. For example, in Figure \ref{fig:example}, servers on $v_2$ and $v_5$ are blocked; servers on $v_3$ and $v_5$ are starved. An idle server can schedule a service only when it is neither blocked nor starved. During the service, jobs remain in the buffers of incoming arcs. Upon completion of a service, one job is removed from each upstream buffer and one job is added to each downstream buffer. 
	Such mechanism, referred to as the blocking-before-service mechanism, can equivalently represent several other blocking mechanisms (see \cite{Dallery2}). Assume initially all servers are available. Such initial timing conditions have been shown independent of the throughput~\cite{Dallery}. 

For simplicity, we consider a homogeneous setting where all buffers are of constant size $b<\infty$ and are empty at time zero, and all service times are i.i.d. of the same distribution $F_{\sigma}$.
For the non-homogeneous cases, we can always bound the throughput by that of homogeneous networks using the monotonicity property (see \cite{Bacelli,Dallery2}).
We refer to the FJQN/B as $N=(V, E)$, and assume the underlying graph is connected, directed, and acyclic (DAG). In this paper, we focus on in particular the cases when $F_{\sigma}$ is regularly varying with index $\alpha$ defined as follows (see e.g. \cite{RegularVariation-Mikosch}) and we assume $\alpha>1$ .
\begin{definition}
	Distribution $F_{\sigma}$ is regularly varying with index 
	$\alpha$ if the tail distribution $\overline{F}_{\sigma}(x)=1-F_{\sigma}(x)$ satisfies
	\begin{equation}
		\lim_{x\rightarrow\infty}\frac{\overline{F}_{\sigma}(tx)}{\overline{F}_{\sigma}(x)}=t^{-\alpha},~~~\text{for all~}t>0.
	\end{equation} 
\end{definition}

Let $S_{m,v}(N)$ denote the $m$-th service time at node $v$, 
and  $T_{m,v}(N)$ the $m$-th service completion time at node $v$. 
The throughput at node $v\in V$ is defined as the average number of service completions in a unit time in the long run, 
namely, 
\begin{equation}\label{eqn:throughput-def-v}
	\theta_v(N)=\mathbb{E}\left[\left(\lim_{m\rightarrow \infty}\frac{T_{m,v}(N)}{m}\right)^{-1}\right].
\end{equation}
\normalsize
It is shown in \cite{Dallery} that when service times form jointly stationary and ergodic sequences (including i.i.d. as a special case): i) the limit in \eqref{eqn:throughput-def-v} exists; ii) the throughput at every node is identical; iii) the throughput of the network can be expressed as 
\begin{equation}\label{eqn:throughput-def}
	\theta(N)=\theta_v(N)=\left(\lim_{m\rightarrow \infty}\frac{\mathbb{E}\left[T_{m,v}(N)\right]}{m}\right)^{-1}.
\end{equation}

\subsection{Precedence Graph}\label{sec:precedence}
According to the block before service mechanism, $T_{m,v}(N)$ obeys the following recurrence equation (see e.g. \cite{Dallery,Zeng2016}):

\vspace{-0.1in}
\begin{equation}\label{eqn:rec}
	\begin{aligned}
		T_{m,v}(N)=S_{m,v}(N)+\max~
		&\{T_{m,u}(N)\big|(u,v)\in E\}\\
		&\cup\{T_{m-1,v}(N)\big|m\ge 1 \}\\
		&\cup\{T_{m-b,w}(N)\big|(v,w)\in E, m\ge b\}
	\end{aligned}
\end{equation}
with initial condition $T_{0,s}(N)=S_{0,s}(N)$, $s\in V^{source}$, where $V^{source}$ is the set of sources in $N$. The max term corresponds to the three conditions (the server is not starved; the previous job finishes process; the server is not blocked) under which the server on node $v$ can start processing job $m$. 

The recurrence equation \eqref{eqn:rec}
can be equivalently expressed as a last-passage percolation time on a directed graph in the following way (see e.g. \cite{Chaintreau,Martin,Zeng2016}); see~\cite{Martin-LPP} for a survey on last-passage percolation. Consider a {\it precedence graph} $\mathcal{G}=(\mathcal{V},\mathcal{E})$ which represents the collection of services and their precedence constraints as follows: 
\begin{eqnarray}
	&\hspace{-0.3in}&\bullet ~~  \mathcal{V}   =  (\{0\}\cup \mathbb{Z}^+) \times V    \label{eqn:G1}  \\\vspace{-0.1in}
	&\hspace{-0.3in}&\begin{aligned}\bullet  ~~ \mathcal{E}  =~&\mathcal{E}^I\cup \mathcal{E}^{II}\cup \mathcal{E}^{III},\\
		\text{~where~}
		&\mathcal{E}^I=\{(m,v)\!\rightarrow\!(m,u)\big|(u,v)\in E\}\\   
		&\mathcal{E}^{II}=\{(m,v)\!\rightarrow\!(m\!-\!1,v)\big|m\ge 1\}\\ &\mathcal{E}^{III}=\{(m,v)\!\rightarrow\!(m\!-\!b,w)\big|(v,w)\in E,m\ge b\}
	\end{aligned}    \label{eqn:G2}
	\\\vspace{-0.1in}
	&\hspace{-0.3in}&\bullet ~~ \mbox{weight $S_{m,v}(N)$ associated with each node $(m,v)$ in $\mathcal{V}$.} \nonumber
\end{eqnarray}

Let $\pi\!:\!(m,v)\!\leadsto\!(m',v')$ denote a directed simple path in $\mathcal{G}$ from node $(m,v)$ to node $(m',v')$. Let $Wei(\pi)$ denote the total weight of all nodes on $\pi$. By construction, we have the following lemma which is the foundation for us to graphically represent $T_{m,v}(N)$ using last-passage percolation. 
\begin{lemma}\label{lem:max-weight}
	$T_{(m,v)}(N)$ is given by the maximum weighted path from $(m,v)$ to $(0,v')$ for all $v'\in V$, i.e., 
	\begin{equation}\label{eqn:max-weight-lpp}
		T_{m,v}(N)=\max\{Wei(\pi)\big| \pi\!:\!(m,v)\leadsto (0,v'),v'\in V\}.
	\end{equation} 
\end{lemma}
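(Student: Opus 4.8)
The plan is to show that the right-hand side of \eqref{eqn:max-weight-lpp}, viewed as a function $W(m,v):=\max\{Wei(\pi)\mid \pi:(m,v)\leadsto(0,v'),\,v'\in V\}$, obeys the very same recurrence \eqref{eqn:rec} as $T_{m,v}(N)$ together with the same initialization, and then to invoke uniqueness of the solution on a well-founded order to conclude $T_{m,v}(N)=W(m,v)$. The first thing I would verify is that $\mathcal{G}$ is a DAG, so that every directed path out of $(m,v)$ is finite and the induction is well-founded: along any edge the time index $m$ is nonincreasing, type-$\mathcal{E}^{II}$ and type-$\mathcal{E}^{III}$ edges strictly decrease it, while type-$\mathcal{E}^{I}$ edges keep $m$ fixed but move from $v$ to an in-neighbor $u$ in the acyclic network $N$. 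Hence the lexicographic rank $(m,\,\mathrm{depth}_N(v))$ strictly decreases along every edge, giving the well-founded order I will induct on. The same observation shows that from any node a chain of type-$\mathcal{E}^{II}$ edges descends to level $0$, so the maximum defining $W(m,v)$ is taken over a nonempty set.

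Next I would establish the recurrence for $W$ by a first-edge decomposition. Any weight-maximizing path from $(m,v)$ contributes the node weight $S_{m,v}(N)$ and then proceeds along one outgoing edge; the out-neighbors of $(m,v)$ are exactly $(m,u)$ with $(u,v)\in E$, $(m-1,v)$ when $m\ge 1$, and $(m-b,w)$ with $(v,w)\in E$ when $m\ge b$, which are precisely the three index sets appearing in \eqref{eqn:rec}. Two points need care here. First, because $\mathcal{G}$ is a DAG, no continuation path from a successor can return to $(m,v)$ (the rank can only decrease), so prepending the edge from $(m,v)$ to any simple path out of a successor again yields a simple path; this is what lets me replace ``max over simple paths from $(m,v)$'' by ``$S_{m,v}(N)+\max$ of $W$ over the successors'' without fighting the simplicity constraint. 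Second, for $m\ge 1$ a path cannot terminate at $(m,v)$ since it must reach level $0$, so a step must be taken and the decomposition reads $W(m,v)=S_{m,v}(N)+\max\{W(m,u),W(m-1,v),W(m-b,w)\}$ over the admissible indices, matching \eqref{eqn:rec} term for term.

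The remaining case is the boundary $m=0$. A source $s$ has no in-arcs, so $(0,s)$ has no outgoing edges in $\mathcal{G}$ and is a sink; the only path is trivial, giving $W(0,s)=S_{0,s}(N)=T_{0,s}(N)$, matching the initial condition. For a non-source $v$ at level $0$ the path may either stop or continue along a type-$\mathcal{E}^{I}$ edge, so $W(0,v)=S_{0,v}(N)+\max\big(\{0\}\cup\{W(0,u):(u,v)\in E\}\big)$; since service times are nonnegative we have $W(0,u)\ge 0$, so the artificial $\{0\}$ option is absorbed and this reduces to $S_{0,v}(N)+\max\{W(0,u):(u,v)\in E\}$, which is again the $m=0$ specialization of \eqref{eqn:rec}. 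With both the recurrence and the initialization now shown to coincide, strong induction along the rank $(m,\,\mathrm{depth}_N(v))$ forces $W\equiv T$.

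I expect the only genuinely delicate step to be the reduction of the maximum over \emph{simple} paths to the recursive form: one must argue that restricting to simple paths costs nothing and that the subpath realizing $W$ at a successor can always be prepended by the edge from $(m,v)$ without creating a repeated vertex. Both follow from acyclicity of $\mathcal{G}$, together with nonnegativity of the weights to settle the stop-or-continue ambiguity at level $0$; once these are pinned down the rest is routine bookkeeping.
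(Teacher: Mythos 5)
Your proof is correct, and it takes the same route the paper intends: the paper's entire justification is ``by construction,'' since the out-edges of $(m,v)$ in the precedence graph $\mathcal{G}$ were defined to mirror exactly the three max-terms of \eqref{eqn:rec}. Your well-founded induction along the DAG order (with the first-edge decomposition, the nonempty-path observation via type-$\mathcal{E}^{II}$ edges, and the careful $m=0$ boundary analysis using nonnegativity of the weights) is precisely the formalization of that remark, so the two arguments coincide in substance.
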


	This Lemma together with the throughput definition \eqref{eqn:throughput-def} suggest that the asymptotic behavior of the max term in \eqref{eqn:max-weight-lpp} plays a critical role in determining the throughput limit of FJQN/Bs. This max term is subject to the structure of the precedence graph $\mathcal{G}=(\mathcal{V},\mathcal{E})$ which is essentially a representation of the structure of the fork-join network and its synchronization constraints. 
	The proof of our main result (see Section \ref{sec:pf-main} and Appendices \ref{apd:pf-nec},\ref{apd:pf-suf}) mainly focuses on charactering the asymptotic behavior of the max term in \eqref{eqn:max-weight-lpp} using the concepts of network dimensions introduced in Section \ref{sec:scaling}.

\subsection{Topological Concepts}\label{sec:topo-concepts}
We need the following topological concepts defined on a FJQN/B $N=(V,E)$ which is a DAG. 
\begin{definition}
	The network degree of $N$ is 
	\begin{equation}
		D(N)=\max\{deg(v)\big|v\in V\},
	\end{equation}
	where $deg(v)$ is the total number of arcs (in and out) connected to node $v$. 
\end{definition}

\begin{definition}\cite{Zeng2016}
	The minimum level of $N$ is
	\begin{equation}\vspace{-0.05in}
		L^*(N)=\mathop{\argmin}_{l:V\mapsto \mathbb{Z}}\{\max_{(i,j)\in E}\{l(j)-l(i)\}\},
	\end{equation}
	where $l:V\rightarrow \mathbb{Z}$ is a topological labelling \footnote{A topological labelling is a generalization of a topological sort which exists for every connected directed acyclic graph, see e.g. Kahn's algorithm \cite{Kahn}.} that maps each node $v\in V$ to an integer number $l(v)\in \mathbb{Z}$ such that $l(j)-l(i)\ge 1$, $\forall (i,j)\in E$. A topological labelling $l^*$ that achieves the minimum level is referred to as an optimal topological labelling of $N$ and is denoted by $l^*_{N}$.
\end{definition}

Let $G\!=\!(V,E)$ be the undirected counterpart of $N\!=\!(V,E)$. The diameter of $N$ is defined as follows. 

\begin{definition}
	The distance of two nodes $u,v$ in a graph~$G$, denoted as $dis(u,v)$, is the minimum number of arcs among all undirected paths connecting $u$ and $v$. 
\end{definition}

\begin{definition}
	The diameter of a graph $G$, denoted as~$\Delta(G)$, is the maximum of the distance of any pair of nodes in the graph, i.e. $\Delta(G)=\max\{dis(u,v)\big| \forall u,v\in V \}$. The diameter of a network $N$ is the diameter of its undirected counterpart~$G$, i.e. $\Delta(N)=\Delta(G)$. 
\end{definition}

\subsection{Throughput Scalability}
To discuss the throughput scalability of a FJQN/B as it grows in size, we introduce an infinite sequence of FJQN/Bs $\mathcal{N}=\{N_1,N_2,\dots,N_i,\dots\}$, 
where each $N_i=(V_i,E_i)$ is a finite-sized FJQN/B with $|V_{i}|<\infty$ and $\limsup_{i\rightarrow\infty}|V_i|=\infty$. 
That is, while each $N_i$ is a FJQN/B of finite size, the network sizes grow infinitely large along the sequence. Each network~$N_i$ is associated with network degree~$D_i$, minimum level $L^*_i$, and diameter $\Delta_i$. 
In addition, the service time distribution $F_\sigma$ and the buffer size $b$ are independent of $i$. We say that the sequence $\mathcal{N}$ is throughput scalable if the following condition holds.
\begin{definition}
	A sequence of FJQN/Bs $\mathcal{N}=\{N_i\}_{i=1}^\infty$ is \underline{throughput scalable} if and only if 
	\begin{equation}
		\liminf_{i\rightarrow \infty} \theta(N_i) > 0.
	\end{equation}
\end{definition}

It is shown in \cite{Zeng2016} that, under light-tailed service times, a sequence of FJQN/Bs is throughput scalable if and only if Condition~\eqref{eqn:level-degree-bounded} holds. In the next section, we demonstrate through examples that such condition is not enough to guarantee throughput scalability in heavy-tailed cases. 

\section{Preliminary Analysis}\label{sec:preliminary-heavy}
	In this section, we present preliminary analysis on three special examples and illustrate their scalability conditions under regularly varying service times. Such conditions are beyond Condition~\ref{eqn:level-degree-bounded} and depend on complicated characterizations of how the network scales, which motivates the propositions of the network dimensions in Section \ref{sec:scaling}. 
	
	\bigskip
	
	\noindent{\bf Tandem Network}: Consider a sequence of FJQN/Bs $\mathcal{N}=\{N_i\}_{i=1}^\infty$ where $N_i$ is a tandem network with a single source and $i$ downstream nodes. As $i \rightarrow \infty$, 
	the sequence converges to an infinite sequence of tandem queues, see Figure \ref{fig:tandem}. 
	\begin{figure}[ht!]
		\centering
		\includegraphics[width=0.32\columnwidth]{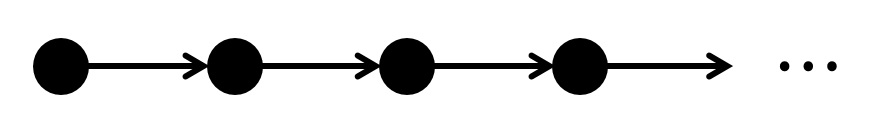}
		\caption{Tandem Network.}
		\label{fig:tandem}
	\end{figure}
	It is easily verified that $D_i=2$, $L^*_i=1$, and $\Delta_i= i$, for all~$N_i$, and Condition~\eqref{eqn:level-degree-bounded} is satisfied. Thus, the system is throughput scalable under light-tailed service times. 
	
	However, in heavy-tailed scenarios, if the service times are regularly varying with index $\alpha<2$, then the sequence will not be throughput scalable. To see this, consider the recurrence equation for $T_{m,v}(N_i)$ and the precedence graph~$\mathcal{G}_i$ for $N_i$. For simplicity, assume $b\!=\!1$ (the argument easily extends to the cases of any other constant buffer sizes). 
	For large $m$ as a multiple of $3 \Delta_i$, by dividing $[0, m]$ into equal intervals of length $3\Delta_i$, we can partition the precedence graph into 
	$\frac{m}{3 \Delta_i}$ layers as shown in Figure \ref{fig:tandem-lpp}. 	
	\begin{figure}[ht!]
		\centering
		\includegraphics[width=1\columnwidth]{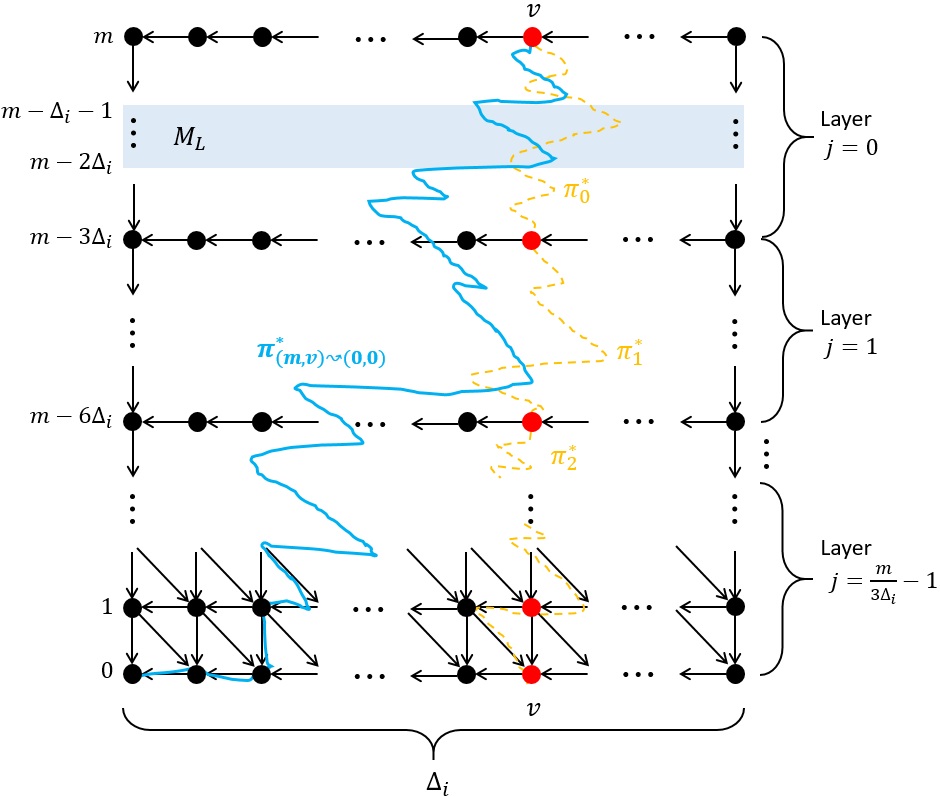}
		\caption{Last Passage Percolation on Tandem Network.}
		\label{fig:tandem-lpp}
	\end{figure}
	By super-additivity of the maximum weighted path (see Lemma \ref{lem:super-additive}), the weight of the maximum weighted path from $(m,v)$ to $(m\!-\!6\Delta_i,v)$ is bounded below by the weight of the maximum weighted path from $(m,v)$ to $(m\!-\!3\Delta_i,v)$ plus the weight of the maximum weighted path from $(m\!-\!3\Delta_i,v)$ to $(m\!-\!6\Delta_i,v)$ minus the weight on the duplicated point $(m\!-\!3\Delta_i,v)$. In essence, this action is to add an additional constraint on the path to go through the node $(m\!-\!3\Delta_i,v)$, which yields a lower bound on the maximum weighted path. Repeating the argument for all layers, we can bound $T_{m,v}(N_i)$, which is given by the maximum weighted path $\pi^*_{(m,v)\leadsto(0,0)}$ by definition, from below by the summation of $Wei(\pi^*_j)-S_{m-3j\Delta_i,v}$ for all $j\!=\!0,1,\dots,\frac{m}{3 \Delta_i}\!-\!1$, where $\pi^*_j$ denotes the maximum weighted path from $(m\!-\!3j\Delta_i,v)$ to $(m\!-\!3(j\!+\!1)\Delta_i,v)$.
	In layer $j\!=\!0$, a path from $(m,v)$ to $(m\!-\!3\Delta_i,v)$ may go through any node $(m',v')\!\in\! M_L$ where $M_L\!=\!\{(m', v'):  m'\!\in\![m\!-\!2\Delta_i,m\!-\!\Delta_i\!-\!1]$, $v'\!\in\! V_i\}$ represents the `middle layer' of layer $j\!=\!0$. The weight of the maximum weighted path in layer $j\!=\!0$ (excluding the weight on the starting point $(m,v)$) must be larger than the maximum weight of each individual node in the `middle layer', i.e. $Wei(\pi^*_0)\!-\!S_{m,v}\ge\max_{(m', v')\in M_L}\{S_{m',v'}(N_i)\}$. Repeat the argument for all layers and combine it with the lower bound on $T_{m,v}(N_i)$. As service times are i.i.d., 
	the expected value $\mathbb{E}\left[T_{m,v}(N_i)\right]$ is bounded below by $\frac{m}{3\Delta_i}\cdot \mathbb{E}\left[\max_{(m', v')\in M_L}\{S_{m',v'}(N_i)\}\right]$, where the total number of choices for $(m',v')$ is in the order of $\Delta_i|V_i|\sim i^2$. 
	By extreme value theory (see e.g. \cite[Theorem~3.3.7]{EVTEmbrechts}), the maximum of $n$ i.i.d. regularly varying (with index $\alpha$) random variables scaled by $n^{1/\alpha}L_1(n)$ converges weakly to a Fr{\'e}chet distribution, where $L_1$ is some slowly varying function. Hence, we can show that $\mathbb{E}\left[T_{m,v}(N_i)/m\right]$ grows at least in the order of $\frac{1}{3\Delta_i}\cdot \left(\Delta_i|V_i|\right)^{1/\alpha}\sim i^{2/\alpha-1}$ as $i\rightarrow\infty$. Then the throughput is at most in the order of $i^{1-2/\alpha}$ as $i\rightarrow\infty$. Thus, for regularly varying service times with index $\alpha<2$, the throughput decays to zero as the tandem network expands. In fact, the existence of the second moment of the service time distribution is known necessary for scalability of tandem networks \cite{Martin}. 
	
	\bigskip
	
	\noindent{\bf Binary Tree Network}: In comparison, consider a sequence $\mathcal{N}=\{N_i\}_{i=1}^\infty$ where~$N_i$ is a binary tree network with a root and $i$ layers (see Figure~\ref{fig:binarytree}). 
	\begin{figure}[ht!]
		\centering
		\includegraphics[width=0.35\columnwidth]{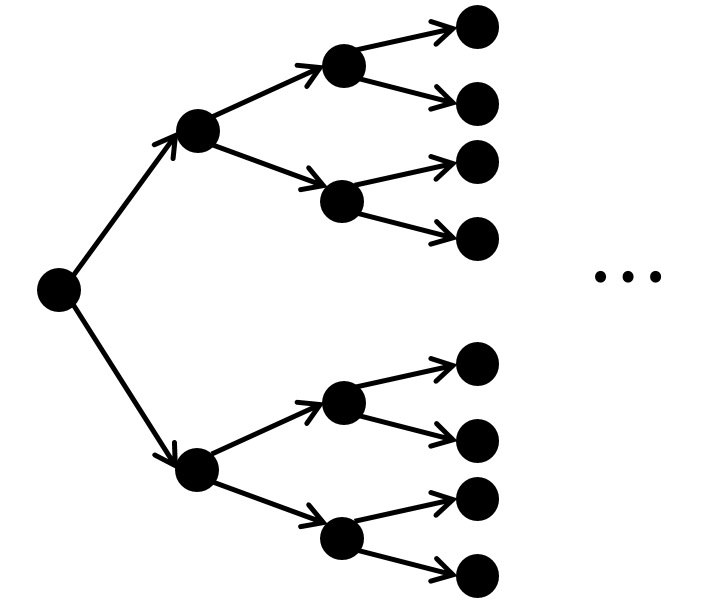}
		\caption{Binary Tree Network.}
		\label{fig:binarytree}
	\end{figure}
	It can be verified that $D_i=3$, $L^*_i=1$, and $\Delta_i= i$, for all~$N_i$, and Condition~\eqref{eqn:level-degree-bounded} is again satisfied, which is enough to guarantee throughput scalability if service times are light-tailed. However, in heavy-tailed cases where service times are regularly varying with any index $\alpha\in\mathbb{Z}^+$, the throughput will not be scalable. To see this, note that similar to the discussions in the tandem network case, $\mathbb{E}\left[T_{m,v}(N_i)\right]$ is bounded below by $\frac{m}{3\Delta_i}\cdot \mathbb{E}\left[\max_{(m', v')\in M_L}\{S_{m',v'}(N_i)\}\right]$, where the total number of choices for $(m',v')$ is in the order of $\Delta_i|V_i|\sim i2^i$. This leads to an exponential growth of $\mathbb{E}\left[T_{m,v}(N_i)/m\right]$ and hence the throughput decays exponentially fast to zero for any given index $\alpha\in \mathbb{Z}^+$ as the binary tree network expands. Similar discussions appear in \cite{Chaintreau}. 
	
	\bigskip
	
	From the above two examples, we observe that in addition to the network degree and the minimum level, the throughput limit under heavy-tailed service times depends on how fast the network grows, or essentially how many terms the middle layer $M_L$ contains. In addition, we need to identify the growth of the most critical part of the network. For instance, if the network consists of a tandem part and a binary tree part, then the asymptotic throughput will be dominated by the binary tree part and will decrease to zero under regularly varying service times with any index~$\alpha\in \mathbb{Z}^+$. These observations motivate us to introduce the metric of scaling dimension in Section~\ref{sec:Scaling-Dim}. 
	
	On the other hand, to provide sufficient conditions for throughput scalability, we need additional analyses to establish an upper bound on $\mathbb{E}\left[T_{m,v}(N_i)\right]$ and to derive a strictly positive lower bound on throughput limit. 
	Here we briefly demonstrate how to guarantee the scalability of regular lattice networks. 
	
	\bigskip
	
	\noindent{\bf Lattice Network}: Consider a sequence $\mathcal{N}=\{N_i\}_{i=1}^\infty$ where $N_i$ is a $d$-dimensional lattice network with $i$ nodes on each side (see e.g. Figure \ref{fig:lattice} for a 2-D lattice). 
	\begin{figure}
		\centering
		\includegraphics[width=0.3\columnwidth]{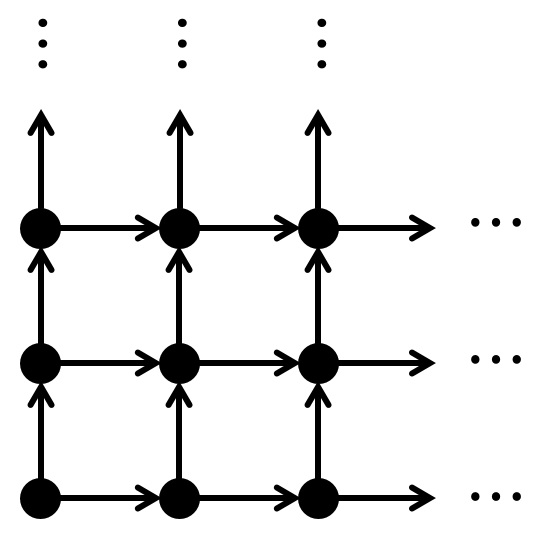}
		\caption{2-D Lattice Network.}
		\label{fig:lattice}
	\end{figure}
	%
	Again, assume $b\!=\!1$.
	Similar to the discussions above, we can show that $\mathbb{E}\left[T_{m,v}(N_i)/m\right]$ grows at least in the order of $\frac{1}{3\Delta_i}\cdot \left(\Delta_i|V_i|\right)^{1/\alpha}\sim i^{(d+1)/\alpha-1}$ and hence any $\alpha\!<\!d\!+\!1$ will make the sequence not scalable. Meanwhile, we can show that any $\alpha\!>\!d\!+\!1$ will ensure the scalability of the sequence. To see this, first divide $[0,m]$ into equal intervals of length $\Delta_i$, for large $m$ as a multiplication of~$\Delta_i$. The weight of the maximum weighted path from $(m,v)$ to $(m-2\Delta_i,v')$ for any $v,v'$ is bounded above by the weight of the maximum weighted path from $(m,v_1')$ to $(m\!-\!\Delta_i,v_1'')$, for all $v_1',v_1''$, plus the weight of the maximum weighted path from $(m\!-\!\Delta_i,v_2')$ to $(m\!-\!2\Delta_i,v_2'')$, for all $v_2',v_2''$. In essence, this action is to relax the constraint on the path to be connected between two adjacent layers by choosing $v_1''$ and $v_2'$ freely, which yields an upper bound on the maximum weighted path. Repeat the argument for all layers. As service times are i.i.d., the expected value $\mathbb{E}\left[T_{m,v}(N_i)\right]$ is bounded above by $\frac{m}{\Delta_i}\cdot \mathbb{E}\left[\max\{Wei(\pi)|\pi\!:\!(2\Delta_i,v')\!\leadsto\! (\Delta_i,v'')\},\forall v',v''\in V_i\right]$. Then we can show that the max term is bounded above by the weight of a greedy lattice animal of size $f(\Delta_i)$ on a $(d+1)$-dimensional lattice, where $f(\Delta_i)$ is a linear function of $\Delta_i$. Then the sufficient condition $\alpha>d+1$ follows from the result on the linear growth of lattice animals \cite{Martin-LatticeAnimal}. 
	
	\bigskip
	
	In the above example of lattice networks, the scalability is subject to the relationship between the service time tail index $\alpha$ and the dimension of the lattice $d$. This motivates us to further investigate such relationship in other networks. However, for other networks that are not as regular as lattices, we need to find a way to map the networks onto lattices so as to measure their dimensions. As the network expands in an arbitrary way along the sequence, we also need to develop a method to map the whole irregular sequence onto regular lattices and identify the lattice dimension that allows such mapping. Such dimension is introduce as the extended metric dimension in Section \ref{sec:Metric-Dim}. 
	
	Overall, we observe that the throughput scalability conditions in heavy-tailed cases are more complicated than that in light-tailed cases. Essentially, the necessary condition depends on the characterization of network growth as a function of network size and diameter; the sufficient condition depends on the construction of lattices on which we can embed the entire network sequence. These observations motivate us to propose two important geometrical concepts of network dimensions in the next section. 

\section{Characterization of Scaling}\label{sec:scaling}
Consider an infinite sequence of FJQN/Bs $
\mathcal{N}=\{N_i\}_{i=1}^\infty$, where each $N_i=(V_i,E_i)$ is a finite-sized FJQN/B with $|V_{i}|<\infty$ and $\limsup_{i\rightarrow\infty}|V_i|=\infty$. Let $G_i$ be the undirected counterpart of $N_i$. 
Each network~$N_i$ is associated with network degree $D_i$, minimum level $L^*_i$, and diameter~$\Delta_i$. The following lemma is immediate as a graph with bounded degree and diameter must have bounded size. 

\begin{lemma}\label{lem:degree-diameter-scaling}
	For an infinite sequence of FJQN/Bs $
	\mathcal{N}=\{N_i\}_{i=1}^\infty$, if $\limsup_{i\rightarrow\infty}|V_i|=\infty$, then we must have either $\limsup_{i\rightarrow\infty} D_i=\infty$, or $\limsup_{i\rightarrow\infty} \Delta_i=\infty$, or both. 
\end{lemma}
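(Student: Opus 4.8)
The plan is to prove the contrapositive: assuming both $\limsup_{i\rightarrow\infty} D_i < \infty$ and $\limsup_{i\rightarrow\infty}\Delta_i < \infty$, I would show $\limsup_{i\rightarrow\infty}|V_i| < \infty$, contradicting the hypothesis $\limsup_{i\rightarrow\infty}|V_i| = \infty$. Since a real sequence with finite limit superior is bounded above (its finitely many initial terms are each finite, and its tail is bounded by any number strictly exceeding the limsup), these two assumptions furnish finite constants $D^*$ and $\Delta^*$ with $D_i \le D^*$ and $\Delta_i \le \Delta^*$ for every $i$.

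The core is a Moore-type counting bound on the undirected counterpart $G_i=(V_i,E_i)$, which is connected by assumption, has maximum degree at most $D^*$, and diameter at most $\Delta^*$. I would fix any vertex $v\in V_i$; by the definition of diameter every vertex lies within distance $\Delta_i\le\Delta^*$ of $v$, so a breadth-first exploration from $v$ reaches all of $V_i$ in at most $\Delta^*$ layers. Counting layer by layer: the layer at distance $0$ contains only $v$; the layer at distance $1$ contains at most $D^*$ vertices; and for $k\ge 1$ each vertex at distance $k$ contributes at most $D^*-1$ new neighbors (one incident edge leads back toward $v$), so the layer at distance $k$ has at most $D^*(D^*-1)^{k-1}$ vertices. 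Summing the layers yields
\begin{equation}
|V_i| \le 1 + D^*\sum_{k=0}^{\Delta^*-1}(D^*-1)^{k} =: C(D^*,\Delta^*),
\end{equation}
a finite constant independent of $i$ (one may instead use the cruder bound $(D^*+1)^{\Delta^*+1}$).

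Since the right-hand side does not depend on $i$, I would conclude $\sup_i |V_i| \le C(D^*,\Delta^*)<\infty$, hence $\limsup_{i\rightarrow\infty}|V_i|<\infty$, contradicting the assumption. Therefore at least one of $\limsup_{i\rightarrow\infty}D_i$ and $\limsup_{i\rightarrow\infty}\Delta_i$ must be infinite. There is no genuine obstacle here; the only point requiring mild care is the bookkeeping that converts the two finite-limsup hypotheses into the uniform-in-$i$ bounds $D^*$ and $\Delta^*$, together with the elementary observation that the BFS layer sizes grow at most geometrically with ratio $D^*-1$. The degenerate case $D^*\le 1$ forces each $G_i$ to be a single vertex or edge, for which the bound holds trivially.
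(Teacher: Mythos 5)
Your proof is correct and takes exactly the route the paper has in mind: the paper disposes of this lemma in one line (``immediate as a graph with bounded degree and diameter must have bounded size''), and your contrapositive argument with the BFS/Moore-type bound $|V_i|\le 1+D^*\sum_{k=0}^{\Delta^*-1}(D^*-1)^k$ is precisely the fleshed-out version of that observation. The bookkeeping converting the finite limsups into uniform bounds $D^*,\Delta^*$ and the handling of the degenerate case are both sound, so nothing is missing.
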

The condition $\limsup_{i\rightarrow\infty} D_i<\infty$ is shown necessary for throughput scalability of FJQN/Bs with light-tailed service times \cite{XiaSig,Zeng2016}. The same argument holds in our heavy-tailed service time settings. Thus, we focus on the cases where $\limsup_{i\rightarrow\infty} D_i\!<\!\infty$ and $\limsup_{i\rightarrow\infty} \Delta_i\!=\!\infty$. 

In the rest of this section, we first introduce the scaling dimension as a way to characterize the growth of the most critical part of the sequence $\mathcal{N}$ by a function of network size and diameter. Then we introduce the extended metric dimension as a way to map networks to lattices. The relationship between these two dimensions is further explored. 

\subsection{Scaling Dimension}\label{sec:Scaling-Dim}
As discussed in Section \ref{sec:preliminary-heavy} through the tandem and the binary tree examples, we need to characterize how fast the network grows so as to investigate throughput scalability. To this end, we propose the scaling dimension as follows. The scaling dimension determines the throughput upper bound and is used to provide a necessary condition on throughput scalability of FJQN/Bs in our main result.

\begin{definition}\label{def:scaling-dim}
	Consider an infinite sequence of FJQN/Bs $
	\mathcal{N}\!=\!\{N_i\}_{i=1}^\infty$ under Condition \ref{eqn:level-degree-bounded}. 
	Let $\Omega\left(\mathcal{\bar{I}},\mathcal{\bar{N}}\right)$ be the collection of $\left(\mathcal{\bar{I}},\mathcal{\bar{N}}\right)$ satisfying the following: 
	
	1) $\mathcal{\bar{I}}\!=\!\{i_n\}_{n=1}^\infty$ is a sequence of strictly increasing natural numbers;
	
	2) $\mathcal{\bar{N}}\!=\!\{\bar{N}_{i_n}\}_{n=1}^\infty$, where $\bar{N}_{i_n}\!=\!(\bar{V}_{i_n},\bar{E}_{i_n})$ is a connected subnetwork of $N_{i_n}$ with $\bar{V}_{n}\subseteq V_{i_n}$ and $\bar{E}_{n}\subseteq E_{i_n}$;
	
	3) $\Delta(\bar{N}_{i_n})\rightarrow\infty$ as $n\rightarrow\infty$.
	
	\noindent The \underline{scaling dimension} of the sequence $
	\mathcal{N}$ is defined as
	\begin{equation}
		dim_S(\mathcal{N})=\sup_{\left(\mathcal{\bar{I}},\mathcal{\bar{N}}\right)\in \Omega\left(\mathcal{\bar{I}},\mathcal{\bar{N}}\right)}\left\{
		\limsup_{n\rightarrow\infty}\frac{\log |\bar{V}_{i_n}|}{\log \Delta(\bar{N}_{i_n})}
		\right\}.
	\end{equation}
\end{definition}	

In words, the scaling dimension of a sequence of FJQN/Bs is defined by the limsup ratio of log network size over log diameter among all subsequences and subnetworks such that the diameter goes to infinity. This characterizes the growing speed of the most critical part of the sequence. The following lemmas demonstrate network examples and their scaling dimensions. Note that the scaling dimension does not depend on the direction of arcs in the networks. 

\begin{remark}\label{rmk:scal-dim-tandem}
	If $\mathcal{N}$ converges to an infinite tandem network (see Figure~\ref{fig:tandem}) where $N_i$ has a source and $i$ downstream nodes, then we have $|V_i|=i+1$, $\Delta_i=i$, and $dim_S(\mathcal{N})=1$. 
\end{remark}
\begin{remark}\label{rmk:scal-dim-lattice}
	If $\mathcal{N}$ converges to a $d$-dimensional lattice where $N_i$ has $i$ nodes on each side (see e.g. Figure \ref{fig:lattice} for a 2-D lattice), then we have $|V_i|= i^d$ and $\Delta_i=di-d$. In this case, the scaling dimension $dim_S(\mathcal{N})$ is an integer and equals the lattice dimension $d$. 
\end{remark}
\begin{remark}\label{rmk:scal-dim-hexagon}
	If $\mathcal{N}$ converges to an infinite hexagon network where $N_i$ has $i$ hexagons on each side (see Figure \ref{fig:hexagon}), then we have $|V_i|=6i^2$, $\Delta_i=4i-1$, and $dim_S(\mathcal{N})=2$. 
\end{remark}
\begin{figure}[ht!]
	\centering
	\includegraphics[width=0.6\columnwidth]{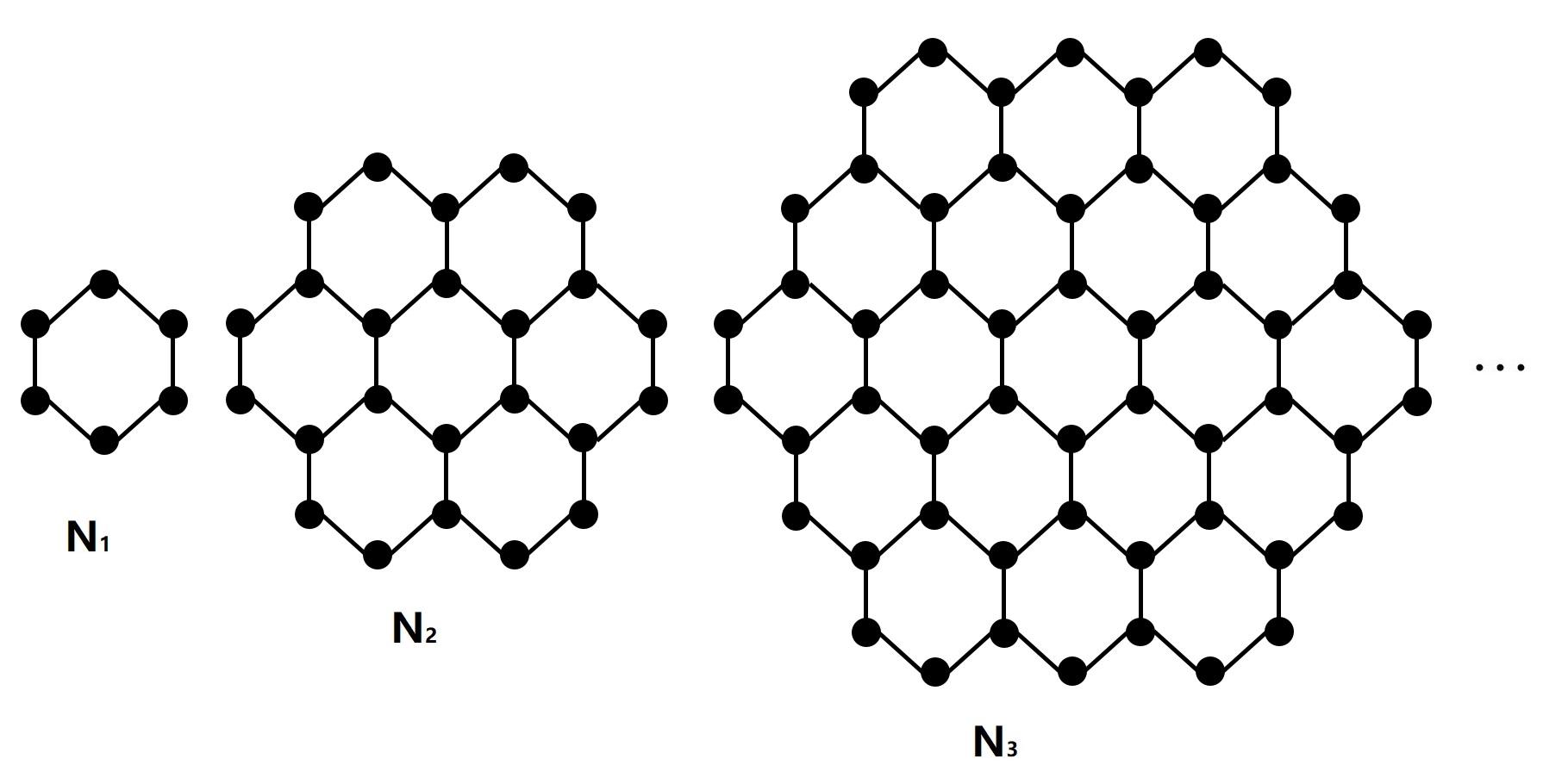}
	\caption{Hexagon Network.}
	\label{fig:hexagon}
\end{figure}
\begin{remark}\label{rmk:scal-dim-tetrahedron}
	If $\mathcal{N}$ converges to an infinite tetrahedron pyramid network where $N_i$ has $i$ layers (see Figure \ref{fig:tetrahedron}), then we have $|V_i|= \frac{1}{6}i^3+i^2+\frac{11}{6}i+1$, $\Delta_i=i$, and $dim_S(\mathcal{N})=3$. 
\end{remark}
\begin{figure}[ht!]
	\centering
	\includegraphics[width=0.35\columnwidth]{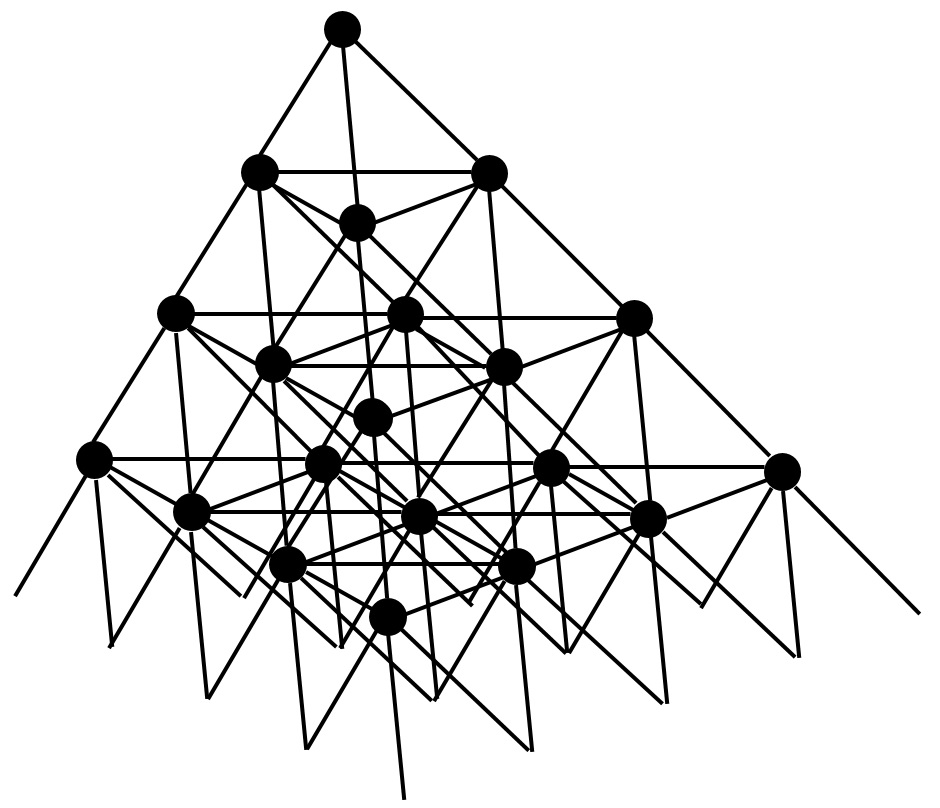}
	\vspace{-0.1in}
	\caption{Tetrahedron Pyramid Network.}
	\vspace{-0.1in}
	\label{fig:tetrahedron}
\end{figure}
\begin{remark}\label{rmk:scal-dim-triangle}
	If $\mathcal{N}$ converges to a Sierpinski triangle in a way shown in Figure \ref{fig:Sierpinski}, then we have $|V_i|=3^{i-1}\cdot \frac{3}{2}+\frac{3}{2}$ and $\Delta_i=2^{i-1}$. The scaling dimension $dim_S(\mathcal{N})$ is equal to $\log_2 3\approx 1.585$, which is the Hausdorff dimension (see \cite{FractalGeometry}). 
\end{remark}
\begin{figure}[ht!]
	\centering
	\includegraphics[width=0.8\columnwidth]{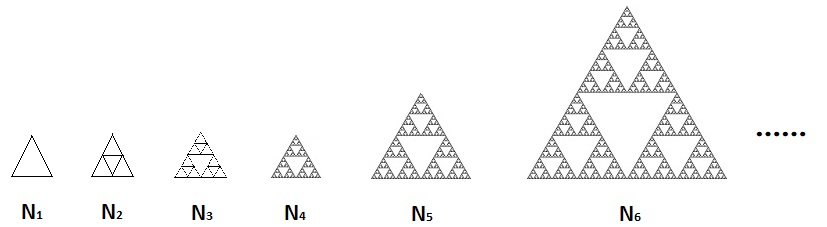}
	\vspace{-0.1in}
	\caption{Sierpinski Triangle.}
	\vspace{-0.1in}
	\label{fig:Sierpinski}
\end{figure}
\vspace{-0.1in}
\begin{remark}\label{rmk:scal-dim-binarytree}
	If $\mathcal{N}$ converges to a binary tree (see Figure~\ref{fig:binarytree}) where $N_i$ is a binary tree network with a root and $i$ layers, then we have $|V_i|=2^{i+1}-1$, $\Delta_i=i$, and hence $dim_S(\mathcal{N})=\infty$. In general, a tree with node degree $\ge 2$ grows exponentially fast (network size is an exponential function of the diameter) which makes the scaling dimension infinite. 
\end{remark}
\begin{remark}\label{rmk:scal-dim-tandem-bitree}
	In cases where $\mathcal{N}$ does not converge to a regular pattern or even does not converge, the construction of supremum over all subsequences and subnetworks enforces that we focus on the most critical part as the network size expands. For instance, consider a sequence $\mathcal{N}=\{N_i\}_{i=1}^\infty$ where $N_i$ is a tandem network of $2^i$ arcs (shown in Figure~\ref{fig:1D-tandem}) if $i$ is odd, and $N_i$ is a tandem network of $2^i$ arcs attached by a binary tree of $i$ layers (shown in Figure \ref{fig:1D-tandem-binarytree}) if $i$ is even. For the sequence, we have
	\begin{equation}
		\limsup_{i\rightarrow\infty}\frac{\log |V_{i}|}{\log \Delta(N_{i})}\!=\!\limsup_{i\rightarrow\infty}\frac{\log \left(2^i\!+\!1\!+\!\left(2^{i+1}\!-\!2\right)\!\cdot\! \mathbbm{1}_{i \text{~even}}\right)}{\log \left(2^i+i\cdot \mathbbm{1}_{i \text{~even}}\right)}\!=\! 1,
	\end{equation}
	where $\mathbbm{1}$ is the indicator function. However, consider a subsequence where $i_n=2n$ and let $\bar{N}_{i_n}=(\bar{V}_{i_n},\bar{E}_{i_n})$ be the binary tree part of $i_n$ layers. We have
	\begin{equation}
		\limsup_{n\rightarrow\infty}\frac{\log |\bar{V}_{i_n}|}{\log \Delta(\bar{N}_{i_n})}=
		\limsup_{n\rightarrow\infty}\frac{\log \left(2^{2n+1}\!-\!1\right)}{\log (4n)}= \infty.
	\end{equation}
	Hence, the scaling dimension of the sequence is infinite, which is determined by the binary tree part. 
\end{remark}
\begin{figure}[ht!]
	\centering
	\hspace{-1.3in}
	\includegraphics[width=0.4\columnwidth]{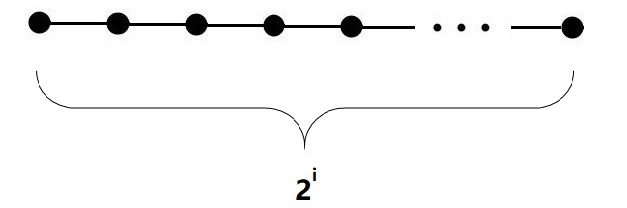}
	\caption{Tandem Network.}
	\label{fig:1D-tandem}
\end{figure}
\begin{figure}[ht!]
	\centering
	\includegraphics[width=0.6\columnwidth]{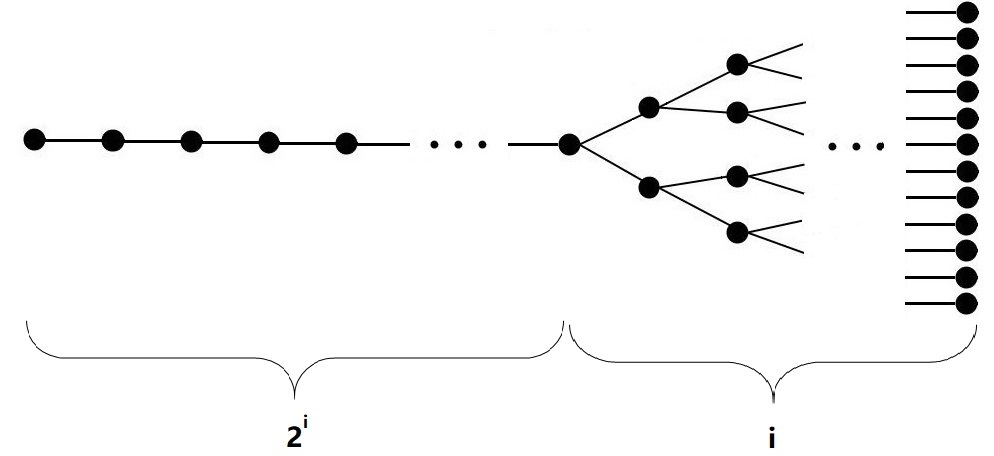}
	\caption{Tandem + Binary Tree Network.}
	\label{fig:1D-tandem-binarytree}
\end{figure}

\subsection{Extended Metric Dimension}\label{sec:Metric-Dim}
	As illustrated by the lattice network example in Section \ref{sec:preliminary-heavy}, we look for methods to map networks onto regular lattices and to identify the lattice dimension that allows such mapping. 
	Toward this purpose, we introduce in this section the traditional concept of a graph's metric dimension \cite{Slater1975,Harary1976}, which allows a one-to-one mapping from nodes on a network to points on a lattice. Then we propose the extended metric dimension as a method to map a sequence of networks onto lattices such that, for each network regardless of its size, constantly many nodes can share the same position on a lattice. The extended metric dimension determines the throughput lower bound and is used to provide a sufficient condition on throughput scalability of FJQN/Bs in our main result. Such extension from the metric dimension to the extended metric dimension is critical to characterize how the network scales along the sequence (as discussed in Remarks \ref{rmk:extm-dim-cycle},\ref{rmk:extm-dim-complete-tandem}, and \ref{rmk:extm-dim-ladder}) and to diminish the gap between the necessary and the sufficient conditions in our main result.

Let $G=(V,E)$ be the undirected counterpart of a DAG $N=(V,E)$. The metric dimension of~$G$, introduced by \cite{Slater1975} and \cite{Harary1976}, is defined as follows. See \cite{MetricDimensionSurvey} for a detailed survey on the metric dimension of a graph. 

\begin{definition}
	Let $W=\{w_1,w_2,\dots,w_k\}$ be an ordered subset of nodes in a graph $G=(V,E)$ with $w_t\in V$, $t=1,2,\dots,k$. The metric representation of a node $v$ with respect to $W$ is given by
	\begin{equation}
		\bm{r}(v|W)=\big(dis(v,w_1),dis(v,w_2),\dots,dis(v,w_k)\big). 
	\end{equation}
	where $dis(v,w)$ denotes the distance between $v$ and $w$. 
\end{definition}

\begin{definition}
	A set $W=\{w_1,w_2,\dots,w_k\}\subseteq V$ is a resolving set for $G=(V,E)$ if $\forall u,v\in V$, $u\neq v$, we have $\bm{r}(u|W)\neq \bm{r}(v|W)$. 
\end{definition}

\begin{definition}\cite{Slater1975,Harary1976}
	The metric dimension of a graph $G=(V,E)$, denoted as $dim_M(G)$, is the minimum cardinality $k$ of a resolving set $W=\{w_1,w_2,\dots,w_k\}$ for $G$. A resolving set with cardinality $k=dim_M(G)$ is called a basis for $G$. 
\end{definition}
%

Figure \ref{fig:MetricDimEg1} shows a graph with a resolving set with cardinality~2 and the corresponding metric representation of each node. It is easy to verify that the graph has metric dimension~2, as no single node can be a  resolving set of the graph. 
\begin{figure}[ht!]
	\centering
	\includegraphics[width=0.5\columnwidth]{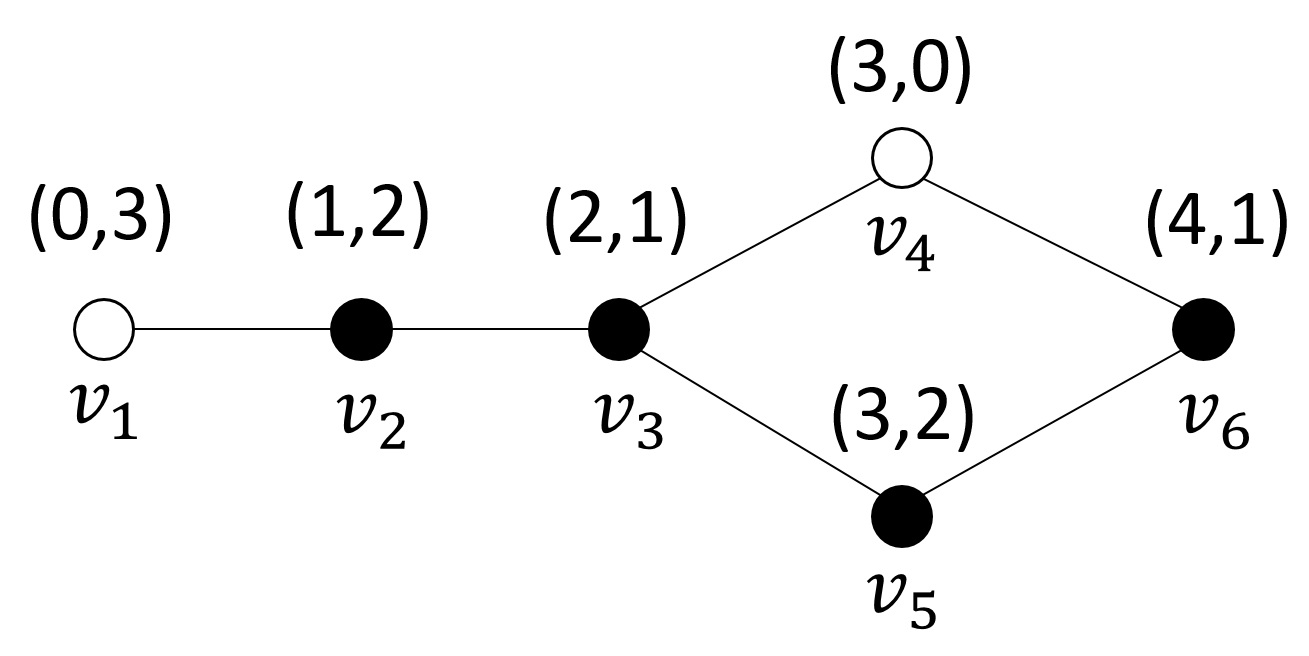}
	\caption{Example of Resolving Set $\bm{\{v_1,v_4\}}$ and Metric Representation.}
	\label{fig:MetricDimEg1}
\end{figure}

A finite graph with $n$ nodes can have metric dimension from 1 to $n-1$. See~\cite{Caceres2009} for characterizations of the metric dimension of some infinite graphs. Finding the metric dimension for general graphs is NP-Complete \cite{Garey1979}. However, we have the following lemmas on the metric dimensions of some special graphs. 
\begin{lemma}\cite[Theorem 2]{MetricDimensionSurvey}\label{lem:metric-dim-tandem}
	The metric dimension of a path is 1.
\end{lemma}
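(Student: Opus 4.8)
The plan is to establish the equality $dim_M(P) = 1$ for a path $P$ by bounding the metric dimension from both above and below. I would first label the vertices of the path as $v_1, v_2, \dots, v_n$ in their natural linear order, so that consecutive vertices $v_t$ and $v_{t+1}$ are joined by an edge and no other edges are present. The two bounds then amount to (i) producing a single-node resolving set and (ii) ruling out the empty set.

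For the upper bound $dim_M(P) \le 1$, I would exhibit an explicit resolving set of cardinality one, namely $W = \{v_1\}$ consisting of a single endpoint. The key observation is that in a path the unique shortest undirected walk from $v_1$ to $v_i$ traverses exactly the edges $v_1 v_2, v_2 v_3, \dots, v_{i-1} v_i$, so that $dis(v_1, v_i) = i - 1$. Hence the metric representation $\bm{r}(v_i \mid W) = (i-1)$ takes the distinct values $0, 1, \dots, n-1$ as $i$ ranges over $1, \dots, n$. Since distinct vertices then receive distinct one-dimensional representations, $W$ is a resolving set, and the metric dimension is at most its cardinality, giving $dim_M(P) \le 1$.

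For the lower bound $dim_M(P) \ge 1$, I would observe that for any path with at least two vertices the empty set cannot be resolving: with $W = \emptyset$ every vertex carries the same (empty) representation, so two distinct vertices cannot be distinguished, which violates the definition of a resolving set. Combining the two bounds yields $dim_M(P) = 1$. I do not anticipate any genuine obstacle here, as the statement is elementary; the only point requiring care is the justification that $dis(v_1, v_i) = i - 1$, which rests on the simple fact that a path coincides with its own unique geodesic, so that distances measured from a single endpoint are strictly monotone and therefore already separate all vertices.
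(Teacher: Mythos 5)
Your proof is correct. The paper gives no proof of this lemma at all—it simply cites it from the metric-dimension survey—and your two-sided argument (a single endpoint $v_1$ resolves the path because the distances $0,1,\dots,n-1$ to it are pairwise distinct, while the empty set cannot resolve any graph on at least two vertices) is exactly the standard argument behind the cited result, so there is nothing to reconcile between your approach and the paper's.
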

\begin{lemma}\cite[Theorem 3]{MetricDimensionSurvey}\label{lem:metric-dim-complete}
	The metric dimension of a complete graph with size $n$ is $n-1$.
\end{lemma}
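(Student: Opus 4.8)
The plan is to exploit the defining feature of a complete graph: every two distinct nodes are adjacent, so $dis(u,v)=1$ whenever $u\neq v$, and $dis(v,v)=0$. Consequently, for any ordered subset $W=\{w_1,\dots,w_k\}$, each coordinate of the metric representation $\bm{r}(v|W)$ is either $0$ (exactly when $v=w_t$) or $1$ (otherwise), so $\bm{r}(v|W)\in\{0,1\}^k$ takes a very restricted form. The whole argument reduces to a counting observation on these $0/1$ vectors.

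First I would prove the lower bound $dim_M(G)\ge n-1$. Suppose $W$ has cardinality $k\le n-2$. Then at least two distinct nodes $u,v$ lie outside $W$. For such a node, $dis(u,w_t)=1$ for every $t$ (since $u$ differs from each $w_t$), so $\bm{r}(u|W)=(1,1,\dots,1)=\bm{r}(v|W)$. Thus $u$ and $v$ are not resolved, and $W$ cannot be a resolving set; hence any resolving set has cardinality at least $n-1$.

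Next I would establish the matching upper bound by exhibiting a resolving set of size $n-1$. Take $W$ to consist of any $n-1$ of the nodes, leaving out a single node $v_0$. For $w_t\in W$, the representation $\bm{r}(w_t|W)$ has a $0$ in position $t$ and $1$ everywhere else, so these $n-1$ vectors are pairwise distinct (each has its unique zero coordinate), while $\bm{r}(v_0|W)=(1,\dots,1)$ differs from all of them. Therefore all $n$ representations are distinct and $W$ resolves $G$, giving $dim_M(G)\le n-1$. Combining the two bounds yields $dim_M(G)=n-1$.

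There is no serious obstacle here; the only point requiring care is the bookkeeping in the lower bound, namely that the obstruction is precisely that all nodes \emph{not} in $W$ collapse to the common all-ones vector, which forces at most one such node and hence $|W|\ge n-1$. The result is a clean pigeonhole argument on the $\{0,1\}^k$-valued representations, and I would expect the proof to be short once the distance structure of the complete graph is recorded.
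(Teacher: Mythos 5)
Your proof is correct: the lower bound (any two nodes outside $W$ share the all-ones representation, so at most one node can be omitted) and the matching upper bound (leave out one node; each $w_t$ is identified by the unique zero coordinate) together give $dim_M(G)=n-1$, which is the standard argument for this classical fact. Note that the paper does not prove this lemma itself but imports it from \cite[Theorem 3]{MetricDimensionSurvey}, and your argument is essentially the canonical proof found there, so there is no divergence to report.
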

\begin{lemma}\cite[Theorem 2.5]{Khuller1996}\label{lem:metric-dim-lattice}
	The metric dimension of a $d$-dimensional grid ($d\ge 2$) is $d$.
\end{lemma}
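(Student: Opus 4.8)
The plan is to prove the two inequalities $dim_M(G)\le d$ and $dim_M(G)\ge d$ separately, where $G$ denotes the $d$-dimensional grid realized on the vertex set $\{0,1,\dots,n_k-1\}$ in coordinate $k$. The single structural fact I would invoke throughout is that the graph distance in such a grid is the Manhattan ($\ell_1$) distance, $dis(u,v)=\sum_{k=1}^d|u_k-v_k|$; every step below is then linear bookkeeping over the sign structure of this formula.

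For the upper bound $dim_M(G)\le d$, I would exhibit an explicit resolving set of size $d$. Take $w_1=(0,\dots,0)$ and, for $j=2,\dots,d$, let $w_j$ be the vertex with $j$-th coordinate $n_j-1$ and all other coordinates $0$. Since $w_1$ is the minimal corner, $dis(v,w_1)=\sum_k v_k=:S(v)$ for every vertex $v$ with no absolute-value ambiguity, and similarly $dis(v,w_j)=S(v)-2v_j+(n_j-1)$ because $v_j\le n_j-1$. Hence $dis(v,w_1)-dis(v,w_j)=2v_j-(n_j-1)$ recovers $v_j$ for $j=2,\dots,d$, and then $v_1=S(v)-\sum_{j\ge 2}v_j$. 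Distinct vertices thus receive distinct representations, so $\{w_1,\dots,w_d\}$ is resolving and $dim_M(G)\le d$; note this half works for every side length.

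For the lower bound $dim_M(G)\ge d$, I would show that no set $W=\{w_1,\dots,w_{d-1}\}$ resolves $G$, by producing two distinct vertices with identical representation. Fix a vertex $v$ deep in the interior, at distance exceeding $(d-1)!$ from the boundary and from each hyperplane $\{x_k=w_t^{(k)}\}$ in every coordinate; such $v$ exists once the sides are large enough, since the forbidden set in each coordinate has size bounded by a constant depending only on $d$. This choice fixes a sign vector $s_t\in\{-1,+1\}^d$ per landmark via $s_t^{(k)}=\mathrm{sign}(v_k-w_t^{(k)})$, and locally $dis(\cdot,w_t)$ coincides with the linear form $x\mapsto\langle s_t,x\rangle+\text{const}$. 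Stacking the $s_t$ yields a $(d-1)\times d$ integer matrix $S$ of rank $\le d-1<d$, whose kernel contains a nonzero integer vector $z$ with $\|z\|_\infty\le (d-1)!$ (by Cramer's rule applied to a maximal nonsingular square submatrix). By the choice of $v$, the point $v+z$ still lies in the grid and on the same side of every landmark hyperplane, so $dis(v+z,w_t)=dis(v,w_t)+\langle s_t,z\rangle=dis(v,w_t)$ for all $t$. Thus $v$ and $v+z\ne v$ share the same representation, $W$ is not resolving, and $dim_M(G)\ge d$.

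The main obstacle is the lower bound, specifically ensuring the kernel move $z$ flips no sign, i.e.\ that $v$ can be placed simultaneously far from the boundary and from every coordinate-hyperplane through the $d-1$ landmarks. This is exactly where the scale of the grid enters: the argument requires the side lengths to exceed a constant depending on $d$, and it genuinely fails for short sides (the hypercube $P_2^{\square d}$ has metric dimension below $d$ for large $d$), which is consistent with the growing-lattice regime relevant to the embedding. I would isolate the counting of the admissible interior region as a short auxiliary claim; the remainder is routine $\ell_1$ computation.
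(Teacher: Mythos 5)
Your proposal cannot match the paper's approach because the paper has none: this lemma is imported verbatim as a citation to \cite[Theorem~2.5]{Khuller1996}, so your self-contained argument is by construction a different (and more informative) route. Both halves of your proof are correct. The upper bound is the standard corner-landmark construction: with $w_1$ the all-zeros corner and $w_j$ obtained by pushing the $j$-th coordinate to its maximum, the identity $dis(v,w_1)-dis(v,w_j)=2v_j-(n_j-1)$ recovers every coordinate, and this half holds for all side lengths. The lower bound --- linearizing each $\ell_1$-distance $dis(\cdot,w_t)$ into $\langle s_t,\cdot\rangle+\mathrm{const}$ near a vertex $v$ chosen off every landmark hyperplane, extracting a nonzero integer kernel vector $z$ of the $(d-1)\times d$ sign matrix with $\|z\|_\infty\le(d-1)!$ via the adjugate/Cramer bound, and translating $v$ by $z$ --- is also sound, and you quantify it correctly: it needs each side length to exceed a constant depending only on $d$ so that the admissible interior region is nonempty and the translation flips no sign. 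The caveat you flag is not a weakness of your write-up but a genuine defect of the lemma as stated (and of the cited theorem itself): without a restriction on side lengths the claim is false, e.g.\ the $5$-dimensional grid of side length $2$ (the hypercube $Q_5$) has metric dimension $4<5$, and hypercube metric dimension grows only like $2d/\log_2 d$. What your restricted version buys is exactly what this paper needs: in Remark~\ref{rmk:lattice-dim} the lemma is applied to $d$-dimensional lattices with $i\rightarrow\infty$ nodes per side and $d$ fixed, which is precisely your large-side regime --- and in fact only the unconditional upper bound $dim_M(G_i)\le d$ is used there to conclude $dim_{EM}(\mathcal{N})\le d$, the matching lower bound coming from the scaling dimension rather than from metric dimension.
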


Next, we adopt the idea of representing nodes by distances to others whereas we focus on how to represent the nodes so that boundedly many nodes can share the same representation as the network size grows to infinite. We propose the following definitions of the extended metric representation and the extended resolving set for a graph. 

\begin{definition}
	Let $\mathcal{W}=\{W_1,W_2,\dots,W_k\}$ be an ordered set of subsets of nodes in a graph $G=(V,E)$ with $W_t\subseteq V$, $t=1,2,\dots,k$. The extended metric representation of a node $v$ with respect to $\mathcal{W}$ is given by
	\begin{equation}
		\bar{\bm{r}}(v|\mathcal{W})=\big(dis(v,W_1),dis(v,W_2),\dots,dis(v,W_k)\big), 
	\end{equation}
	where $dis(v,W_t)$ is the shortest distance between $v$ and any node in $W_t$, $t=1,2,\dots,k$. 
\end{definition}

\begin{definition}
	A set $\mathcal{W}=\{W_1,W_2,\dots,W_k\}$ of subsets of $V$ is a $\Lambda$-extended resolving set for $G=(V,E)$, if $\forall v\in V$, the number of nodes $u\in V$ with $\bar{\bm{r}}(u|\mathcal{W})=\bar{\bm{r}}(v|\mathcal{W})$ is bounded above by a constant $\Lambda>0$. 
\end{definition}
Based on the extended metric representation and the extended resolving set, we propose the following concept of extended metric dimension. 				
\begin{definition}
	Consider an infinite sequence of FJQN/Bs $
	\mathcal{N}=\{N_i\}_{i=1}^\infty$. The \underline{extended metric dimension} of $\mathcal{N}$, denoted as $dim_{EM}(\mathcal{N})$, is the minimum integer~$k$ such that $\forall i\in \mathbb{Z}^+$, the undirected counterpart $G_i=(V_i,E_i)$ of $N_i$ has a $\Lambda$-extended resolving set $\mathcal{W}_i$ with cardinality $\le k$, where $\Lambda>0$ is a constant independent of $i$. 
\end{definition}

In words, the extended metric dimension of a sequence of FJQN/Bs is defined by the minimum integer $k$ such that every network within the sequence has a set of subsets of nodes as a coordinate system that identifies all nodes in the network up to a constant level. This characterizes the dimension of the coordinate system to embed the whole network sequence in a way that constantly many nodes can be mapped to the same position. One can interpret the extended metric dimension as the least number of coordinates needed to describe the network viewed far away as it expands. Note that the extended metric dimension does not rely on the direction of arcs in the networks. Also, note that the definition of the extended metric dimension does not require $\limsup_{i\rightarrow\infty}D_i<\infty$. However, if $\limsup_{i\rightarrow\infty}D_i=\infty$, then we can show that $dim_{EM}(\mathcal{N})=\infty$ as we will need a coordinate system of infinite size to describe a node with infinitely many neighbors. Hence we focus on the cases where $\limsup_{i\rightarrow\infty} D_i\!<\!\infty$ and $\limsup_{i\rightarrow\infty} \Delta_i\!=\!\infty$. 

Regarding with the difference between the metric dimension and the extended metric dimension, we first note that $dim_{EM}(\mathcal{N})\!\le\!\sup_i dim_M(G_i)$ as the extended metric dimension is a generalization of the metric dimension. 
In some cases, the inequality holds tight (see e.g. Remarks~\ref{rmk:tandem-dim} and~\ref{rmk:lattice-dim}). However, in many other cases, the gap exists and can be small, or large, or even infinitely large as shown in the following examples. This observation partially reveals the reason why it is the extended metric dimension but not the metric dimension that determines the throughput scalability of FJQN/Bs. 

\begin{remark}\label{rmk:extm-dim-cycle}
	Consider an example where $N_i$ is a cycle network shown in Figure \ref{fig:cycle-scaling}. Suppose $N_i$ has~$i$ nodes on the top path and $i$ nodes on the bottom path. The metric dimension of the underlying $G_i$ is~$2$, as any pair of adjacent nodes forms a resolving set and any single node cannot resolve the graph. However, the extended metric dimension of $\mathcal{N}=\{N_i\}_{i=1}^\infty$ is $1$, as any single node is a $2$-extended resolving set with cardinality $1$ for $G_i$. 
\end{remark}
\begin{figure}[ht!]
	\centering
	\includegraphics[width=0.35\columnwidth]{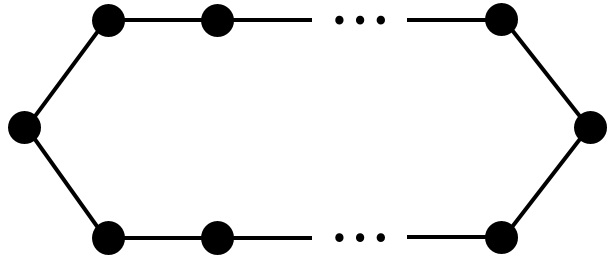}
	\caption{Cycle Network.}
	\label{fig:cycle-scaling}
\end{figure}
\begin{remark}\label{rmk:extm-dim-complete-tandem}
	Consider an example where $N_i$ is a complete-graph $C$ of fixed size $|C|\in \mathbb{Z}^+$ attached by a tandem network of size $i$ shown in Figure \ref{fig:clique-tandem}. The metric dimension of $G_i$ is $|C|-1$ by Lemma~\ref{lem:metric-dim-complete}. However, the extended metric dimension of $\mathcal{N}=\{N_i\}_{i=1}^\infty$ is $1$ by choosing $\mathcal{W}_i=\{C\}$ which forms a $|C|$-extended resolving set with cardinality $1$ for $G_i$, $\forall i\in \mathbb{Z} ^+$. In this case, the gap between the metric dimension and the extended metric dimension can be any large constant depending on $|C|$. 
	Also, note that the scaling dimension of $\mathcal{N}$ is 1 which is determined by the scaling tandem part instead of the fixed-sized complete-graph part. Moreover, if $|C|$ also scales to infinity as $i\rightarrow\infty$, then we have $dim_{EM}(\mathcal{N})=\infty$ as the network degree is not bounded above. 
\end{remark}
\begin{figure}[ht!]
	\centering
	\includegraphics[width=0.55\columnwidth]{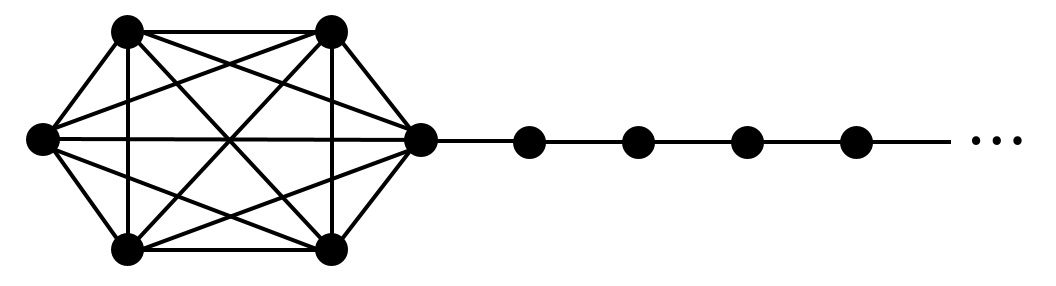}
	\caption{Complete-graph + Tandem Network.}
	\label{fig:clique-tandem}
\end{figure}
\begin{remark}\label{rmk:extm-dim-ladder}
	Consider an example where $N_i$ is a ladder network with $i$ rungs as shown in Figure \ref{fig:ladder}. The metric dimension of the underlying $G_i$ is $i$ (except when $i\!=\!2$ the metric dimension is $3$), as the two nodes on any rung have identical distances to any other node.  However, the extended metric dimension of $\mathcal{N}=\{N_i\}_{i=1}^\infty$ is $1$, as the leftmost rung is a $2$-extended resolving set with cardinality $1$ for $G_i$. In this case, the gap between the metric dimension and the extended metric dimension goes to infinity as $i\rightarrow\infty$. 
\end{remark}
\begin{figure}[ht!]
	\centering
	\includegraphics[width=0.5\columnwidth]{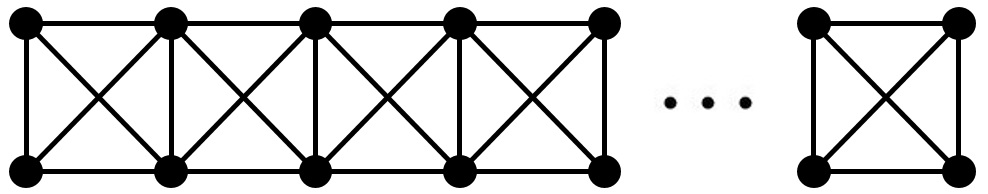}
	\caption{Ladder Network.}
	\label{fig:ladder}
\end{figure}

\subsection{Relationship Between Dimensions}\label{sec:relation-dims}
In this section, we explore the relationship between the scaling dimension and the extended metric dimension. We first show that the scaling dimension is bounded above by the extended metric dimension. Then we conjecture that the extended metric dimension is bounded above by the ceiling of the scaling dimension, which, if true, implies that the gap is less than 1. Finally, we present examples to show the cases when the two dimensions coincide and when there is a gap. 

\begin{lemma}\label{lem:relation-S-le-EM}
	Consider an infinite sequence of FJQN/Bs under Condition~\eqref{eqn:level-degree-bounded}. We have
	\begin{equation}
		dim_S(\mathcal{N})\le dim_{EM}(\mathcal{N}).
	\end{equation}
\end{lemma}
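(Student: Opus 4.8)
The plan is to prove the inequality directly from the counting interpretation of the extended metric dimension. If $dim_{EM}(\mathcal{N})=\infty$ there is nothing to show, so I would assume $k:=dim_{EM}(\mathcal{N})<\infty$. Fix an arbitrary admissible pair $(\bar{\mathcal I},\bar{\mathcal N})\in\Omega(\bar{\mathcal I},\bar{\mathcal N})$ and let $\mathcal{W}_{i_n}=\{W_1,\dots,W_k\}$ be a $\Lambda$-extended resolving set of the \emph{full} undirected graph $G_{i_n}$, with $\Lambda$ independent of $n$ (such a set exists by the definition of $dim_{EM}$). The goal is to bound $|\bar V_{i_n}|$ by a power of $\Delta(\bar N_{i_n})$ and then let $n\to\infty$.

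The main step is to \emph{localize} the representation. Since $\bar N_{i_n}$ is a connected subnetwork of $N_{i_n}$, every path inside $\bar N_{i_n}$ is also a path in $G_{i_n}$, so for any two of its nodes $u,v$ one has $dis_{G_{i_n}}(u,v)\le dis_{\bar N_{i_n}}(u,v)\le \Delta(\bar N_{i_n})$. Now fix a reference node $v_0\in\bar V_{i_n}$. Applying the triangle inequality to $dis(v,W_t)=\min_{w\in W_t}dis(v,w)$ gives $|dis(v,W_t)-dis(v_0,W_t)|\le dis_{G_{i_n}}(v,v_0)\le \Delta(\bar N_{i_n})$ for every $v\in\bar V_{i_n}$ and every $t$. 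Thus, although the sets $W_t$ may lie far away in $G_{i_n}$ and the coordinates $dis(v,W_t)$ may individually be large, the restriction of the extended metric representation $\bar{\bm r}(\cdot\,|\,\mathcal W_{i_n})$ to $\bar V_{i_n}$ takes values inside a box of side $2\Delta(\bar N_{i_n})+1$ in each of its $k$ coordinates.

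The counting then closes the argument. The number of distinct values of $\bar{\bm r}(v\,|\,\mathcal W_{i_n})$ as $v$ ranges over $\bar V_{i_n}$ is at most $(2\Delta(\bar N_{i_n})+1)^k$, and by the defining property of a $\Lambda$-extended resolving set each such value is shared by at most $\Lambda$ nodes of $G_{i_n}$, hence by at most $\Lambda$ nodes of $\bar V_{i_n}$. Therefore $|\bar V_{i_n}|\le \Lambda\,(2\Delta(\bar N_{i_n})+1)^k$. Taking logarithms and using $\Delta(\bar N_{i_n})\to\infty$ (condition 3 of Definition~\ref{def:scaling-dim}),
\[
\limsup_{n\to\infty}\frac{\log|\bar V_{i_n}|}{\log\Delta(\bar N_{i_n})}
\le \limsup_{n\to\infty}\frac{\log\Lambda+k\log\!\big(2\Delta(\bar N_{i_n})+1\big)}{\log\Delta(\bar N_{i_n})}=k.
\]
Since $(\bar{\mathcal I},\bar{\mathcal N})$ was arbitrary, taking the supremum over $\Omega(\bar{\mathcal I},\bar{\mathcal N})$ yields $dim_S(\mathcal N)\le k=dim_{EM}(\mathcal N)$.

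I expect the only real subtlety to be the interplay between distances in the subnetwork and in the full graph: the resolving set is defined on $G_{i_n}$, so its coordinates are not a priori controlled by $\Delta(\bar N_{i_n})$, and the scaling dimension is a supremum over subnetworks rather than a statement about the whole $N_{i_n}$. The localization in the second step is exactly what resolves this, by showing that only the \emph{differences} of coordinates across $\bar V_{i_n}$ matter and that these are governed by intra-subnetwork distances; the inequality $dis_{\bar N_{i_n}}\ge dis_{G_{i_n}}$ is what guarantees the subnetwork diameter genuinely controls the spread of $\bar V_{i_n}$. Everything else is routine logarithmic asymptotics.
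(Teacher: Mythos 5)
Your proposal is correct and follows essentially the same argument as the paper: both use the $\Lambda$-extended resolving set of the full graph $G_{i_n}$, the Lipschitz property $|dis(v,W_t)-dis(v_0,W_t)|\le dis(v,v_0)$ relative to a reference node of the subnetwork, the count $\Lambda(2\Delta(\bar N_{i_n})+1)^k$, and the logarithmic limit. Your explicit treatment of the $dim_{EM}(\mathcal{N})=\infty$ case and of the inequality $dis_{G_{i_n}}\le dis_{\bar N_{i_n}}$ only makes precise steps the paper leaves implicit.
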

\smallskip

\begin{proof}
	Suppose $dim_{EM}(\mathcal{N})=k$. By definition, there must exists a constant $\Lambda>0$ such that $\forall i\in \mathbb{Z}^+$, the graph $G_i=(V_i,E_i)$ has a $\Lambda$-extended resolving set  $\mathcal{W}_i=\{W_1^{(i)},W_2^{(i)},\dots,W_{k_i}^{(i)}\}$ with cardinality $k_i\le k$. 
	
	
	For all $i\in \mathbb{Z}^+$, consider any node $v^*\in V_i$. For all the nodes $v\in V_i$ that are within $n$ distance away from $v^*$, we have $|dis(v,W_t^{(i)})- dis(v^*,W_t^{(i)})|\!\le\! n$, $\forall t\!=\!1,\dots,k_i$. Thus, the number of distinct $\bm{r}(v|\mathcal{W})$ for all these nodes is bounded above by $(2n+1)^{k_i}$. By definition of the extended resolving set, there are at most $\Lambda (2n+1)^{k_i}$ many nodes in $V_i$ that are within $n$ distance away from any node~$v^*$. Thus, for any subsequence of connected subnetworks $\mathcal{\bar{N}}=\{\bar{N}_{i_n}\}_{n=1}^\infty$, where $\bar{N}_{i_n}=(\bar{V}_{i_n},\bar{E}_{i_n})$, we have $\forall i_n$
	\begin{equation}
		|\bar{V}_{i_n}|\le \Lambda (2\Delta(\bar{N}_{i_n})+1)^{k_i}\le \Lambda (2\Delta(\bar{N}_{i_n})+1)^{k}.
	\end{equation}
	Thus, for any $\mathcal{\bar{N}}=\{\bar{N}_{i_n}\}_{n=1}^\infty$ with $\Delta(\bar{N}_{i_n})\rightarrow\infty$ as $n\rightarrow\infty$, we have
	\begin{equation}
		\limsup_{n\rightarrow\infty}\frac{\log |\bar{V}_{i_n}|}{\log \Delta(\bar{N}_{i_n})}\le \limsup_{n\rightarrow\infty}\frac{\log \left(\Lambda\cdot (2\Delta(\bar{N}_{i_n})+1)^k\right)}{\log \Delta(\bar{N}_{i_n})}=k.
	\end{equation}
	Consequently,
	\begin{equation}
		dim_S(\mathcal{N})=\sup_{\mathcal{\bar{N}}}\left\{
		\limsup_{n\rightarrow\infty}\frac{\log |\bar{V}_{i_n}|}{\log \Delta(\bar{N}_{i_n})}
		\right\}
		\le k= dim_{EM}(\mathcal{N}).
	\end{equation}
\end{proof}\bigskip

We also conjecture that the extended metric dimension is bounded above by the ceiling of the scaling dimension. 
\begin{conjecture}\label{conj:relation-EM-le-S}
	Consider an infinite sequence of FJQN/Bs $
	\mathcal{N}\!=\!\{N_i\}_{i=1}^\infty$ under Condition~\eqref{eqn:level-degree-bounded}. We have 
	\begin{equation}
		dim_{EM}(\mathcal{N})\le \lceil dim_S(\mathcal{N})\rceil.
	\end{equation}
\end{conjecture}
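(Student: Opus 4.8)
The plan is to convert the scaling-dimension hypothesis into a uniform polynomial volume-growth bound, and then to build a coordinate system of dimension $\lceil dim_S(\mathcal{N})\rceil$ by peeling off one distance coordinate at a time, lowering the effective dimension by one at each step. Throughout, write $d=dim_S(\mathcal{N})$ and $k=\lceil d\rceil$, and recall that Lemma~\ref{lem:relation-S-le-EM} already supplies the reverse inequality $dim_S(\mathcal{N})\le dim_{EM}(\mathcal{N})$, so the present statement is its near-converse.

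First I would record the elementary but crucial consequence of the definition of $dim_S$: uniform polynomial volume growth. Fix $\epsilon>0$. For any $v\in V_i$ and radius $r$, the breadth-first tree rooted at $v$ of depth $r$ is a connected subnetwork $\bar N$ with vertex set $B_{G_i}(v,r)$, with $\Delta(\bar N)\le 2r$ and $\Delta(\bar N)\to\infty$ as $r\to\infty$. If the balls grew faster than polynomially --- i.e. if $|B_{G_i}(v_m,r_m)|>r_m^{\,d+\epsilon}$ along some sequence with $r_m\to\infty$ --- then (after passing to a strictly increasing index sequence, which is possible since each $N_i$ is finite) these subnetworks would witness $\limsup_n \log|\bar V|/\log\Delta(\bar N)\ge d+\epsilon$, contradicting $dim_S(\mathcal{N})=d$. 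Hence there are constants $C,R_0$ independent of $i$ with $|B_{G_i}(v,r)|\le C\,r^{\,d+\epsilon}$ for all $i$, all $v$, and all $r\ge R_0$. This is the only place the hypothesis enters.

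The core of the argument is a \emph{slicing step} that I would iterate $k$ times. The target is: for each $i$ choose a reference set $W_1^{(i)}\subseteq V_i$ so that the coordinate $f_1^{(i)}(v)=dis(v,W_1^{(i)})$ is a genuine distance-to-set function (it automatically satisfies the eikonal condition that every $v$ with $f_1^{(i)}(v)>0$ has a neighbour with value $f_1^{(i)}(v)-1$), and so that the family of \emph{slices} $\{\,v:f_1^{(i)}(v)=j\,\}$ --- thickened by a bounded number of adjacent levels to keep them connected --- assembles into a new sequence $\mathcal{N}'$ with $dim_S(\mathcal{N}')\le d-1$. Granting such a step, one recurses: after $k=\lceil d\rceil$ rounds the residual scaling dimension is $\le 0$, the slices have uniformly bounded size, and $\mathcal{W}_i=\{W_1^{(i)},\dots,W_k^{(i)}\}$ becomes a $\Lambda$-extended resolving set with $\Lambda=\prod_t\Lambda_t$ bounded independently of $i$, which is exactly $dim_{EM}(\mathcal{N})\le k$. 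The lattice and pyramid examples (Remarks~\ref{rmk:scal-dim-lattice}--\ref{rmk:scal-dim-tetrahedron}) are the clean model where slicing by one coordinate literally drops the lattice dimension by one, and the Sierpinski case (Remark~\ref{rmk:scal-dim-triangle}) is consistent: a slice should have dimension $\le\log_2 3-1<1$, so two coordinates suffice and the predicted gap is $2-\log_2 3<1$.

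The hard part is the slicing step, and this is where I expect the real difficulty --- and why the statement is posed only as a conjecture. Three issues must be overcome at once. (i) \emph{Dimension drop.} For an arbitrary, possibly non-converging or fractal sequence, level sets of a distance function can be highly disconnected and need not have scaling dimension exactly $d-1$; establishing $dim_S(\mathcal{N}')\le d-1$ requires a judicious choice of $W_1^{(i)}$ (e.g. the sphere structure of one endpoint of a diameter-realizing geodesic) together with a transverse growth estimate that does not follow formally from polynomial growth alone. (ii) \emph{Uniform simultaneous resolution.} The definition forces a \emph{single} family $\mathcal{W}_i$ to resolve \emph{every} slice at once up to one common constant $\Lambda$, so the transverse sets built for one level must keep fibers bounded on all other levels and for all $i$; the recursion cannot be carried out slice-by-slice with independent choices. (iii) \emph{Restriction to distance coordinates.} Unlike a general $1$-Lipschitz embedding into $\mathbb{Z}^k$ --- for which Assouad-type results would only give dimension $O(d)$ rather than the tight $\lceil d\rceil$ --- each coordinate must here be a bona fide distance-to-set function, tying the construction to actual graph geodesics and ruling out the freedom exploited by standard embedding theorems. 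It is precisely the insistence on the sharp bound $\lceil d\rceil$ (rather than $Cd$) that makes (i)--(iii) resist the usual doubling/embedding machinery and leaves the conjecture open.
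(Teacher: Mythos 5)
The statement you were asked to prove is not proved in the paper at all: it is posed as Conjecture~\ref{conj:relation-EM-le-S}, and the paper's only support for it is a collection of worked examples (Remarks~\ref{rmk:tandem-dim}--\ref{rmk:triangle-dim}) in which $dim_{EM}$ is computed by exhibiting explicit extended resolving sets for tandem, lattice, hexagon, pyramid, and Sierpinski sequences. So there is no paper proof to match, and your own text correctly reflects the situation: what you have written is a strategy outline, not a proof.

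The genuine gap is the one you name yourself: the slicing step. Everything before it is sound --- the uniform polynomial volume bound $|B_{G_i}(v,r)|\le C\,r^{\,d+\epsilon}$ does follow from Definition~\ref{def:scaling-dim} via the ball-subnetwork argument (and the extraction of a strictly increasing index sequence works because each $N_i$ is finite, so arbitrarily large violating balls must come from infinitely many distinct networks), and the reverse inequality is indeed Lemma~\ref{lem:relation-S-le-EM}. But the claim that one can choose $W_1^{(i)}$ so that the thickened level sets of $dis(\cdot,W_1^{(i)})$ assemble into a sequence of scaling dimension $\le d-1$, and that $k=\lceil d\rceil$ such coordinates can be chosen so that a \emph{single} family $\mathcal{W}_i$ keeps every joint fiber bounded by one constant $\Lambda$ uniformly in $i$, is exactly as strong as the conjecture itself; no candidate construction or verifiable criterion for $W_1^{(i)}$ is given, and your own issues (i)--(iii) explain why none is currently available. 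One small point in your favor that you asserted without justification does in fact hold: if after $k$ steps the residual scaling dimension is $<1$, bounded slice size follows, because a connected graph of diameter $\Delta$ has at least $\Delta+1$ nodes, so any sequence of connected subnetworks with unbounded diameter has scaling dimension $\ge 1$; combined with the degree bound in Condition~\eqref{eqn:level-degree-bounded}, bounded diameter forces bounded size. As it stands, though, the proposal reduces the conjecture to an unproven slicing lemma that is at least as hard, so it does not establish the statement --- consistent with the paper, where the statement remains open.
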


	To verify Conjecture \ref{conj:relation-EM-le-S}, we present the following remarks. 
	The remarks show that in common cases where the scaling dimension is an integer, the two dimensions coincide, as implied by Lemma~\ref{lem:relation-S-le-EM} and Conjecture~\ref{conj:relation-EM-le-S}. The last remark on a Sierpinski triangle shows that in the cases of fractals with non-integer scaling dimension, a non-trivial gap between the two dimensions exists. Moreover, we note that even in the Sierpinski triangle case, Conjecture \ref{conj:relation-EM-le-S} still holds. 
\begin{remark}\label{rmk:tandem-dim}
	Consider a sequence $\mathcal{N}$ that converges to an infinite tandem network (Figure \ref{fig:tandem}). Let $N_i$ be the tandem network with a source and $i$ downstream nodes. 
	In this case, 
	\begin{equation}
		dim_S(\mathcal{N})=dim_{EM}(\mathcal{N})=1.
	\end{equation}
	To see this, note that $dim_S(\mathcal{N})\!=\!1$ as discussed in Remark~\ref{rmk:scal-dim-tandem}. By Lemma \ref{lem:metric-dim-tandem}, graph $G_i$ has metric dimension $1$, $\forall i\!\in\! \mathbb{Z}^+$. Thus, $1\!= \!dim_S(\mathcal{N})\!\le\! dim_{EM}(\mathcal{N})\!\le\! \sup_i dim_{M}(G_i)\!=\!1$. The result also extends to other tandem-alike networks such as series-parallel networks (Table \ref{tbl:summary} (b)), tandem-component networks (Table \ref{tbl:summary}(c)), and ladder networks (Table \ref{tbl:summary}(d)). The extended metric dimension in all these cases remains to be one, although the metric dimension of the underlying graph may not be one. 
\end{remark}

\begin{remark}\label{rmk:lattice-dim}
	Consider a sequence $\mathcal{N}$ that converges to a $d$-dimensional lattice (see e.g. Figure \ref{fig:lattice} for a 2-D lattice). Let $N_i$ be the $d$-dimensional lattice network with $i$ nodes on each side. 
	In this case, 
	\begin{equation}
		dim_S(\mathcal{N})=dim_{EM}(\mathcal{N})=d.
	\end{equation}
	To see this, note that $dim_S(\mathcal{N})\!=\!d$ as discussed in Remark~\ref{rmk:scal-dim-lattice}. By Lemma \ref{lem:metric-dim-lattice}, the underlying graph $G_i$ has metric dimension $d$, $\forall i\!\in\! \mathbb{Z}^+$. Thus, $d\!= \!dim_S(\mathcal{N})\!\le\! dim_{EM}(\mathcal{N})\!\le\! \sup_i dim_{M}(G_i)\!=\!d$. 
\end{remark}

\begin{remark}\label{rmk:hexagon-dim}
	Consider a sequence $\mathcal{N}$ that converges to an infinite hexagon network as shown in Figure \ref{fig:hexagon}. Let $N_i$ be the network with $i$ hexagons on each side. 
	In this case, 
	\begin{equation}
		dim_S(\mathcal{N})=dim_{EM}(\mathcal{N})=2.
	\end{equation}
	To see this, note that $dim_S(\mathcal{N})\!=\!2$ as discussed in Remark~\ref{rmk:scal-dim-hexagon}. 
	On the other hand, we can let $W^{(i)}_1$ be the set of hexagons on one side of $N_i$ and let $W^{(i)}_2$ be the set of hexagons on an adjacent side of $N_i$. The set $\mathcal{W}_i=\{W^{(i)}_1,W^{(i)}_2\}$ resolves each hexagon and hence resolves the entire graph up to a constant level. Thus, $dim_{EM}(\mathcal{N})\!\le\! 2$. In sum, we have $2\!=\! dim_S(\mathcal{N})\!\le\! dim_{EM}(\mathcal{N})\!\le\! 2$. 
\end{remark}

\begin{remark}\label{rmk:pyramid-dim}
	Consider a sequence $\mathcal{N}$ that converges to an infinite tetrahedron pyramid as shown in Figure \ref{fig:tetrahedron}. Let $N_i$ be the network with a vertex and $i$ layers. 
	In this case, 
	\begin{equation}
		dim_S(\mathcal{N})=dim_{EM}(\mathcal{N})=3.
	\end{equation}
	To see this, note that $dim_S(\mathcal{N})\!=\!3$ as discussed in Remark~\ref{rmk:scal-dim-tetrahedron}. 
	On the other hand, we can let $W^{(i)}_1$ be the vertex of the pyramid, let $W^{(i)}_2$ be the set of nodes on one surface of the pyramid that contains the vertex, and let $W^{(i)}_3$ be the set of nodes on another surface of the pyramid that contains the vertex. In this way, every node in $N_i$ has a unique extended metric representation, as distance to $W^{(i)}_1$ represents layer number and distances to $W^{(i)}_2$ and $W^{(i)}_3$ represent the location on a layer. 
	Thus, $dim_{EM}(\mathcal{N})\!\le \!3$. In sum, we have $3\!=\! dim_S(\mathcal{N})\!\le\! dim_{EM}(\mathcal{N})\!\le\! 3$.  
\end{remark}

\begin{remark}\label{rmk:triangle-dim}
	Consider a sequence $\mathcal{N}$ that converges to a Sierpinski triangle as shown in Figure \ref{fig:Sierpinski}. Let $N_i$ be the network with $2^{i-1}$ edges on the side of the largest triangle. In this case, the extended metric dimension equals the celling of the scaling dimension, i.e.
	\begin{equation}
		dim_{EM}(\mathcal{N})=\lceil dim_{S}(\mathcal{N})\rceil=\lceil \log_2 3\rceil=2.
	\end{equation}
	To see this, note that no single subset of nodes in $N_i$ can resolve the graph up to any constant level. But we can let $W^{(i)}_1$ be the set of nodes on one side of the largest triangle in $N_i$ and let $W^{(i)}_2$ be the set nodes on another side of the largest triangle so that $\mathcal{W}_i=\{W^{(i)}_1,W^{(i)}_2\}$ can fully resolve the graph. Hence $dim_{EM}(\mathcal{N})=2$. Meanwhile, $dim_{S}(\mathcal{N})=\log_2 3$ as discussed in Remark~\ref{rmk:scal-dim-triangle}. This example verifies our Conjecture \ref{conj:relation-EM-le-S}. 
\end{remark}

\section{Examples}\label{sec:app}
This section further discusses the examples illustrated in Table~1. We illustrate how to use the relations between network dimensions and service time tails to identify throughput scalability. Recall that service times are i.i.d. regularly varying with index $\alpha\in \mathbb{R}^+$. 

\bigskip

\noindent{\bf Tandem/Tandem-alike Network}:
Consider a tandem FJQN/B as shown in Table~\ref{tbl:summary}(a). 
Let $N_i$ be the tandem network with a source and $i$ downstream nodes.

For this system,  $dim_{S}(\mathcal{N})=dim_{EM}(\mathcal{N})=1$ as discussed in Remark \ref{rmk:tandem-dim}. Based on Theorem \ref{thm:main-result}, the network throughput is scalable if $\alpha>2$ and only if $\alpha \ge 2$. 

Our results also extend to other tandem-alike networks, e.g. a series-parallel network with bounded network degree (Table~\ref{tbl:summary}(b)), 
a tandem-component network with finite-size components (Table~\ref{tbl:summary}(c)), 
and a ladder network (Table~\ref{tbl:summary}(d)). 
For all these networks as they expand, the scaling dimension and the extended metric dimension are equal to~1. Hence the throughput scalability conditions are the same as those in the tandem case. 

\bigskip

\noindent{\bf Lattice Network}:
Consider a 2-D lattice FJQN/B as shown in Table~\ref{tbl:summary}(e). 
Let $N_i$ be the 2-D lattice network with $i\times i$ nodes.

For this system,  $dim_{S}(\mathcal{N})=dim_{EM}(\mathcal{N})=2$ as discussed in Remark \ref{rmk:lattice-dim}. Based on Theorem \ref{thm:main-result}, the network throughput is scalable if $\alpha>3$ and only if $\alpha \ge 3$. Similar to the discussion of tandem and tandem-alike networks, the results here can be extended to other lattice-alike networks. In general, a $d$-dimensional lattice/lattice-alike FJQN/B with growing size is throughput scalable if $\alpha > d+1$ and only if $\alpha \ge d+1$. 

\bigskip

\noindent{\bf Hexagon Network}:
Consider a hexagon FJQN/B as shown in Table~\ref{tbl:summary}(g). 
Let $N_i$ be the hexagon network with $i$ hexagons on each side.

For this system,  $dim_{S}(\mathcal{N})=dim_{EM}(\mathcal{N})=2$ as discussed in Remark \ref{rmk:hexagon-dim}. Based on Theorem \ref{thm:main-result}, the network throughput is scalable if $\alpha>3$ and only if $\alpha \ge 3$. 

\bigskip

\noindent{\bf Tetrahedron Pyramid Network}:
Consider a tetrahedron pyramid FJQN/B with growing layers as shown in Table~\ref{tbl:summary}(h). 
Let $N_i$ be the pyramid network with a vertex and $i$ layers. 

For this system,  $dim_{S}(\mathcal{N})=dim_{EM}(\mathcal{N})=3$ as discussed in Remark \ref{rmk:pyramid-dim}. Based on Theorem \ref{thm:main-result}, the network throughput is scalable if $\alpha>4$ and only if $\alpha \ge 4$. 

\bigskip

\noindent{\bf Fractal Network}:
Consider a sequence of FJQN/Bs that converges to a Sierpinski triangle as shown in Table~\ref{tbl:summary}(i). 
Let $N_i$ be the network with $2^{i-1}$ edges on the side of the largest triangle. 

For this system, $dim_{S}(\mathcal{N})=\log_2 3$ and  $dim_{EM}(\mathcal{N})=2$ as discussed in Remark~\ref{rmk:triangle-dim}. 
Based on Theorem \ref{thm:main-result}, the network throughput is scalable if $\alpha>3$ and only if $\alpha \ge 1+\log_2 3$. 

\bigskip

\noindent{\bf Binary Tree Network}:
Consider a binary tree FJQN/B with growing leaves as shown in Table~\ref{tbl:summary}(j). 
Let $N_i$ be the binary tree network with a root and $i$ layers.

The network has exponential growth and $dim_{S}(\mathcal{N})=dim_{EM}(\mathcal{N})=\infty$. Based on Theorem \ref{thm:main-result}, the network throughput is not scalable for any $\alpha\in \mathbb{R}^+$. Similar discussion appears in \cite{Chaintreau} where light-tailed service time distribution is shown necessary for throughput scalability. In general, we have the following corollary. 
\begin{corollary}
	Consider an infinite sequence of FJQN/Bs $
	\mathcal{N}=\{N_i\}_{i=1}^\infty$ with $\limsup_{i\rightarrow\infty}|V_i|=\infty$. Suppose service times are i.i.d. regularly varying with index $\alpha\in \mathbb{R}^+$. 
	If $N_i$ grows exponentially fast, i.e. there exist constants $C_1,C_2>0$ such that 
	$|V_i|\ge C_1\cdot e^{C_2\Delta_i}$, $\forall i\in \mathbb{Z}^+$, then the sequence $\mathcal{N}$ is not throughput scalable. 
\end{corollary}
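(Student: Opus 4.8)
The plan is to reduce the statement to the necessary (``only if'') direction of Theorem~\ref{thm:main-result} by showing that exponential growth forces the scaling dimension to be infinite. First I would dispose of the degenerate regimes of Condition~\eqref{eqn:level-degree-bounded}. If $\limsup_{i\to\infty} D_i=\infty$, then $\mathcal{N}$ is already not throughput scalable by the necessity of bounded network degree established in \cite{XiaSig,Zeng2016}, which (as noted in Section~\ref{sec:scaling}) carries over verbatim to the heavy-tailed setting. Hence I may assume $\limsup_{i\to\infty} D_i<\infty$. Combined with the hypothesis $\limsup_{i\to\infty}|V_i|=\infty$, Lemma~\ref{lem:degree-diameter-scaling} then yields $\limsup_{i\to\infty}\Delta_i=\infty$, so there is a strictly increasing subsequence $\{i_n\}$ with $\Delta_{i_n}\to\infty$.

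Next I would evaluate the scaling dimension along this subsequence. Taking the full networks $\bar N_{i_n}=N_{i_n}$ as the admissible subnetworks in Definition~\ref{def:scaling-dim} (so that $\Delta(\bar N_{i_n})=\Delta_{i_n}\to\infty$), the exponential lower bound $|V_{i_n}|\ge C_1 e^{C_2\Delta_{i_n}}$ gives
\begin{equation}
\limsup_{n\to\infty}\frac{\log|V_{i_n}|}{\log \Delta_{i_n}}\;\ge\;\limsup_{n\to\infty}\frac{\log C_1+C_2\,\Delta_{i_n}}{\log \Delta_{i_n}}\;=\;\infty,
\end{equation}
because a term linear in $\Delta_{i_n}$ dominates $\log\Delta_{i_n}$. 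Taking the supremum over admissible pairs therefore yields $dim_S(\mathcal{N})=\infty$; this is exactly the phenomenon already recorded for trees in Remark~\ref{rmk:scal-dim-binarytree}. Since $\alpha\in\mathbb{R}^+$ is finite, $dim_S(\mathcal{N})=\infty>\alpha-1$, so the necessary condition \eqref{eqn:nec} fails and the ``only if'' half of Theorem~\ref{thm:main-result} gives non-scalability.

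The main obstacle is that this half of Theorem~\ref{thm:main-result} is stated under the full Condition~\eqref{eqn:level-degree-bounded}, i.e.\ bounded degree \emph{and} bounded minimum level, whereas the corollary assumes neither. The degree part is handled above; the remaining gap is the case of bounded degree but $\limsup_i L_i^*=\infty$. I would close it by a self-contained argument that bypasses Condition~\eqref{eqn:level-degree-bounded} entirely, reusing the last-passage lower bound from the preliminary analysis of Section~\ref{sec:preliminary-heavy}. Partitioning the precedence graph $\mathcal{G}_{i_n}$ into $m/(3\Delta_{i_n})$ blocks of temporal height $3\Delta_{i_n}$ and using super-additivity, each block's maximum-weight path dominates the single largest service weight over a middle layer $M_L$ of cardinality of order $\Delta_{i_n}|V_{i_n}|$; crucially this reachability uses only the diameter bound and so is valid for any connected DAG irrespective of degree or minimum level. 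Extreme value theory for regularly varying tails \cite[Theorem~3.3.7]{EVTEmbrechts} then gives, for $\alpha>1$,
\begin{equation}
\lim_{m\to\infty}\frac{\mathbb{E}\!\left[T_{m,v}(N_{i_n})\right]}{m}\;\gtrsim\;\frac{1}{3\Delta_{i_n}}\bigl(\Delta_{i_n}|V_{i_n}|\bigr)^{1/\alpha}\;=\;\frac{1}{3}\Bigl(\tfrac{|V_{i_n}|}{\Delta_{i_n}^{\,\alpha-1}}\Bigr)^{1/\alpha}.
\end{equation}
Under $|V_{i_n}|\ge C_1 e^{C_2\Delta_{i_n}}$ the ratio $|V_{i_n}|/\Delta_{i_n}^{\,\alpha-1}$ diverges (the exponential dominates the fixed polynomial power, whether $\Delta_{i_n}\to\infty$ or $\Delta_{i_n}$ stays bounded while $|V_{i_n}|\to\infty$), so the right-hand side tends to infinity. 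By the throughput formula~\eqref{eqn:throughput-def} this forces $\theta(N_{i_n})\to 0$, hence $\liminf_{i\to\infty}\theta(N_i)=0$ and $\mathcal{N}$ is not throughput scalable. Finally, the range $\alpha\le 1$ is only heavier-tailed, and the same middle-layer estimate (or the trivial single-node bound $\mathbb{E}[T_{m,v}]/m\ge \mathbb{E}[\sigma]$ when $\mathbb{E}[\sigma]=\infty$) again forces $\theta(N_i)\to 0$.
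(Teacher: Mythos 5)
Your proposal is correct, and its core route is exactly the one the paper intends: exponential growth forces $\limsup_n \log|V_{i_n}|/\log\Delta_{i_n}=\infty$ along any subsequence with diverging diameter, hence $dim_S(\mathcal{N})=\infty$ (the phenomenon of Remark~\ref{rmk:scal-dim-binarytree}), so the necessary condition \eqref{eqn:nec} fails and non-scalability follows from the ``only if'' half of Theorem~\ref{thm:main-result} (equivalently, Theorem~\ref{thm:nec}). The paper, however, states the corollary with no proof at all, presenting it as an immediate consequence of Theorem~\ref{thm:main-result}; read literally, that derivation has two hypothesis mismatches, since the corollary assumes neither Condition~\eqref{eqn:level-degree-bounded} (needed both in Definition~\ref{def:scaling-dim} and in the statements of Theorems~\ref{thm:main-result} and~\ref{thm:nec}) nor $\alpha>1$. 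Your proposal is more careful on precisely these points: you dispatch unbounded degree via the degree-necessity argument the paper itself invokes from \cite{XiaSig,Zeng2016}, use Lemma~\ref{lem:degree-diameter-scaling} to extract a subsequence with $\Delta_{i_n}\to\infty$, and then close the remaining gap (bounded degree but possibly unbounded $L^*_i$, and $\alpha\le 1$) with a self-contained rerun of the middle-layer lower bound from Section~\ref{sec:preliminary-heavy}. Your key observation there is sound: that lower bound rests only on super-additivity (Lemma~\ref{lem:super-additive}) and diameter-based reachability in the precedence graph (Lemma~\ref{lem:exist-path}), neither of which involves the network degree or the minimum level, so the EVT estimate $\mathbb{E}[T_{m,v}]/m \gtrsim \tfrac{1}{3}\bigl(|V_{i_n}|/\Delta_{i_n}^{\alpha-1}\bigr)^{1/\alpha}$ (up to a slowly varying factor, which the exponential divergence of $|V_{i_n}|/\Delta_{i_n}^{\alpha-1}$ absorbs, exactly as in Part b of the paper's Appendix~\ref{apd:pf-nec}) applies without Condition~\eqref{eqn:level-degree-bounded}, and the $\alpha\le 1$ regime is handled by the same estimate or trivially by infinite-mean service times. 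In short, you prove a statement that matches the corollary as actually written, whereas the paper's implicit argument proves it only under the additional standing hypotheses of Theorem~\ref{thm:main-result}; this extra generality is what your self-contained middle-layer argument buys.
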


\section{Proof of Theorem 1}\label{sec:pf-main}
Consider an infinite sequence of FJQN/Bs $
\mathcal{N}=\{N_i\}_{i=1}^\infty$, where $N_i=(V_i,E_i)$ is associated with an underlying undirected graph~$G_i$, network degree $D_i$, minimum level $L^*_i$, and diameter $\Delta_i$. We first propose the following theorems. See Appendices~\ref{apd:pf-nec} and~\ref{apd:pf-suf} for the detailed proofs. 

\begin{theorem}\label{thm:nec}
	Consider an infinite sequence of FJQN/Bs $
	\mathcal{N}=\{N_i\}_{i=1}^\infty$ with $\limsup_{i\rightarrow\infty}|V_i|=\infty$. Suppose service times are i.i.d. regularly varying with index $\alpha\!>\!1$. Under Condition~\eqref{eqn:level-degree-bounded}, the sequence $\mathcal{N}$ is NOT throughput scalable if $dim_S(\mathcal{N})>\alpha-1$.
\end{theorem}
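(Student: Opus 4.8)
The plan is to promote the lower-bound argument sketched for the tandem and binary-tree examples in Section~\ref{sec:preliminary-heavy} from an explicit chain to an arbitrary \emph{critical} subnetwork, and to show that $dim_S(\mathcal{N})>\alpha-1$ forces the per-job completion time to blow up along a subsequence, so that $\theta(N_{i_n})\to 0$ and hence $\liminf_i\theta(N_i)=0$. First I would unpack the hypothesis: by Definition~\ref{def:scaling-dim}, $dim_S(\mathcal{N})>\alpha-1$ yields a subsequence $\{i_n\}$ and connected subnetworks $\bar N_{i_n}=(\bar V_{i_n},\bar E_{i_n})\subseteq N_{i_n}$ with $\bar\Delta_n:=\Delta(\bar N_{i_n})\to\infty$ and $\limsup_n \frac{\log|\bar V_{i_n}|}{\log\bar\Delta_n}>\alpha-1$. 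Passing to a further subsequence I may fix $\varepsilon>0$ with $|\bar V_{i_n}|\ge \bar\Delta_n^{\,\alpha-1+\varepsilon}$ for all $n$. Everything then reduces to proving $\theta(N_{i_n})\to 0$.

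The heart of the proof is a layered lower bound on $\mathbb{E}[T_{m,v}(N_{i_n})]$ via the precedence graph $\mathcal{G}_{i_n}$ of Section~\ref{sec:precedence} and Lemma~\ref{lem:max-weight}. Fixing $v\in\bar V_{i_n}$ and $m$ a multiple of $c\bar\Delta_n$ (with $c$ a constant multiple of the buffer size $b$), I partition $[0,m]$ into $m/(c\bar\Delta_n)$ layers of height $c\bar\Delta_n$ and apply super-additivity (Lemma~\ref{lem:super-additive}) to bound $T_{m,v}$ below by the sum of the layers' maximum weighted paths, minus one service weight per duplicated junction, exactly as in the tandem analysis. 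Within a layer I would exhibit, for \emph{every} node $(m',v')$ of the middle band $M_L=\{(m',v'):m'\in I,\ v'\in\bar V_{i_n}\}$, where $I$ is a sub-interval of the layer of length $\asymp\bar\Delta_n$, a directed path in $\mathcal{G}_{i_n}$ through $(m',v')$: one routes from $v$ to $v'$ and back along a path of length $\le\bar\Delta_n$ inside $\bar N_{i_n}$, where each $\mathcal{E}^{I}$ step fixes the $m$-coordinate and each $\mathcal{E}^{III}$ step lowers it by $b$, so the round-trip descent is at most $2b\bar\Delta_n$; padding with $\mathcal{E}^{II}$ steps makes the total descent exactly $c\bar\Delta_n$, which is feasible once $c\ge 2b$. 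Since service weights are nonnegative, each layer maximum is at least $\max_{(m',v')\in M_L}S_{m',v'}(N_{i_n})$, and as $|M_L|\asymp\bar\Delta_n|\bar V_{i_n}|$ with i.i.d. weights, taking expectations gives
\begin{equation*}
\mathbb{E}[T_{m,v}(N_{i_n})]\ \ge\ \frac{m}{c\bar\Delta_n}\Big(\mathbb{E}\big[\textstyle\max_{(m',v')\in M_L}S_{m',v'}(N_{i_n})\big]-\mathbb{E}[\sigma]\Big).
\end{equation*}

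To estimate the layer maximum I would use extreme value theory, but through a one-sided bound that avoids uniform integrability: for $N=|M_L|$ i.i.d. copies $\sigma_1,\dots,\sigma_N$ of $\sigma$ and any threshold $t_N$, $\mathbb{E}[\max_{k\le N}\sigma_k]\ge t_N\Pr[\max_{k\le N}\sigma_k\ge t_N]=t_N\big(1-(1-\overline F_\sigma(t_N))^N\big)$. Choosing $t_N$ so that $N\overline F_\sigma(t_N)\to\lambda\in(0,\infty)$ makes the probability tend to $1-e^{-\lambda}>0$, while regular variation gives $t_N\sim c'N^{1/\alpha}L_1(N)$ for a slowly varying $L_1$; hence $\mathbb{E}[\max_{M_L}S]\gtrsim(\bar\Delta_n|\bar V_{i_n}|)^{1/\alpha}L_1(\bar\Delta_n|\bar V_{i_n}|)$. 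Substituting into the display, letting $m\to\infty$ (the limit in \eqref{eqn:throughput-def} exists by \cite{Dallery}), and using $|\bar V_{i_n}|\ge\bar\Delta_n^{\,\alpha-1+\varepsilon}$ yields
\begin{equation*}
\lim_{m\to\infty}\frac{\mathbb{E}[T_{m,v}(N_{i_n})]}{m}\ \gtrsim\ \frac{\big(\bar\Delta_n^{\,\alpha+\varepsilon}\big)^{1/\alpha}L_1(\cdot)}{c\,\bar\Delta_n}\ =\ \frac{1}{c}\,\bar\Delta_n^{\,\varepsilon/\alpha}L_1(\cdot)\ \xrightarrow[n\to\infty]{}\ \infty,
\end{equation*}
because a strictly positive power dominates any slowly varying factor. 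By \eqref{eqn:throughput-def}, $\theta(N_{i_n})\to 0$, which is the claim.

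The step I expect to be the main obstacle is the per-layer routing for a \emph{general} connected subnetwork: unlike the tandem chain, where the middle band and feasible paths are transparent, here I must verify that the bounded subnetwork diameter genuinely lets a \emph{simple} directed path in $\mathcal{G}_{i_n}$ reach every spatial node $v'\in\bar V_{i_n}$ and return to $v$ with the prescribed $m$-descent, never revisiting a vertex of $\mathcal{G}_{i_n}$, while keeping $|M_L|$ of order $\bar\Delta_n|\bar V_{i_n}|$. Monotonicity of the $m$-coordinate along precedence paths makes simplicity plausible, but the bookkeeping of $\mathcal{E}^{III}$ descents against $\bar\Delta_n$ and $b$, together with the uniform-in-$n$ handling of the slowly varying corrections, is where the care lies; Condition~\eqref{eqn:level-degree-bounded} enters only to keep degrees and levels bounded so these constants stay independent of $n$.
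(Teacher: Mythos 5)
Your proposal is correct and follows essentially the same route as the paper's proof of Theorem~\ref{thm:nec}: extract a subsequence of subnetworks realizing $dim_S(\mathcal{N})>\alpha-1$, partition the precedence graph into layers of height proportional to $\bar\Delta_n b$, use super-additivity (Lemma~\ref{lem:super-additive}) and path existence (Lemma~\ref{lem:exist-path}) to lower-bound $\mathbb{E}\left[T_{m,v}\right]$ by $\frac{m}{c\bar\Delta_n}$ times the expected maximum of order $\bar\Delta_n|\bar V_{i_n}|$ i.i.d.\ service times, and conclude via regular variation that this maximum grows like $(\bar\Delta_n|\bar V_{i_n}|)^{1/\alpha}$ up to slowly varying factors, which drives the throughput to zero. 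The only differences are cosmetic: you bound the expected maximum with an elementary threshold/quantile argument where the paper invokes Fr\'echet convergence plus a Fatou-type bound, and you restrict paths to the subnetwork inside the full precedence graph where the paper first passes to the subnetwork via throughput monotonicity; also, the path-simplicity concern you flag is automatic since the precedence graph is acyclic.
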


\begin{theorem}\label{thm:suf}
	Consider an infinite sequence of FJQN/Bs $\mathcal{N}=\{N_i\}_{i=1}^\infty$ with $\limsup_{i\rightarrow\infty}|V_i|=\infty$. Suppose service times are i.i.d. regularly varying with index $\alpha\!>\!1$. Under Condition~\eqref{eqn:level-degree-bounded}, the sequence $\mathcal{N}$ is throughput scalable if $\exists K\in \mathbb{Z}^+$, $K\ge 1$ such that \eqref{eqn:FsigmaK} holds and  $dim_{EM}(\mathcal{N})\le K-1$. 
\end{theorem}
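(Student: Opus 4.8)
The plan is to prove Theorem~\ref{thm:suf} by establishing an upper bound on $\mathbb{E}[T_{m,v}(N_i)]/m$ that is uniform in $i$, which by~\eqref{eqn:throughput-def} yields a uniform positive lower bound on $\theta(N_i)$ and hence throughput scalability. The starting point is Lemma~\ref{lem:max-weight}, which identifies $T_{m,v}(N_i)$ with a maximum-weight path in the precedence graph $\mathcal{G}_i$. The key structural idea, already previewed for the lattice case in Section~\ref{sec:preliminary-heavy}, is to \emph{relax} the maximum-weight path by cutting the time axis $[0,m]$ into blocks of length proportional to $\Delta_i$ and allowing the path to connect freely across block boundaries; super-additivity becomes sub-additivity in the upper-bound direction, so that $\mathbb{E}[T_{m,v}(N_i)]$ is bounded above by $(m/\Delta_i)$ times the expected maximum weight of a path confined to a single block. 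The whole argument then reduces to controlling this single-block maximum uniformly in $i$.

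The crucial step is to convert the single-block path problem into a problem on a fixed-dimensional lattice using the extended metric dimension. First I would take a $\Lambda$-extended resolving set $\mathcal{W}_i=\{W_1^{(i)},\dots,W_{k_i}^{(i)}\}$ with $k_i\le k$ (where $k=dim_{EM}(\mathcal{N})\le K-1$, so $k\le K-1$) guaranteed to exist by the definition of the extended metric dimension. Using the extended metric representation $\bar{\bm{r}}(\cdot\,|\mathcal{W}_i)$, I would map each node $v\in V_i$ to the integer point $\bar{\bm{r}}(v|\mathcal{W}_i)\in\mathbb{Z}^{k_i}$, and then map each precedence-graph node $(m',v')$ to the point $(m',\bar{\bm{r}}(v'|\mathcal{W}_i))\in\mathbb{Z}^{k_i+1}$ on a $(k_i+1)$-dimensional lattice (at most a $K$-dimensional lattice). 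Because $\mathcal{W}_i$ is $\Lambda$-extended resolving, at most $\Lambda$ nodes of $N_i$ collide at any lattice point, so each lattice point carries at most $\Lambda$ service-time weights; since arcs change a coordinate of $\bar{\bm{r}}$ by at most $1$ and the time coordinate by a bounded amount, a single-block path of length $O(\Delta_i)$ maps to a connected cluster (a \emph{lattice animal}) of size $O(\Delta_i)$ on this fixed $(K)$-dimensional lattice. Hence the single-block maximum weight is dominated by the weight of a greedy lattice animal of linear size on a lattice of dimension at most $K$, with each site weight being the maximum of at most $\Lambda$ i.i.d.\ copies of $\sigma$.

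The final step invokes the greedy lattice animal bounds of~\cite{Martin-LatticeAnimal}: under the moment-type condition~\eqref{eqn:FsigmaK} with exponent $1/K$, the maximum weight of a lattice animal of size $n$ on a lattice of dimension $\le K$ grows at most \emph{linearly} in $n$, i.e.\ the ratio (expected maximum animal weight)$/n$ stays bounded by a constant depending only on $K$, $\Lambda$, and $F_\sigma$ (not on $i$). Here I would need to check that condition~\eqref{eqn:FsigmaK} is inherited after replacing each site weight $\sigma$ by $\max$ of $\Lambda$ i.i.d.\ copies; since $\overline{F}_{\max}(x)\le\Lambda\,\overline{F}_\sigma(x)$, the integral $\int_0^\infty\big(1-F_{\max}(x)\big)^{1/K}dx\le\Lambda^{1/K}\int_0^\infty\big(\overline{F}_\sigma(x)\big)^{1/K}dx<\infty$, so the moment condition is preserved with the same exponent. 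Combining, the single-block maximum is $O(\Delta_i)$ with a constant uniform in $i$, giving $\mathbb{E}[T_{m,v}(N_i)]/m = O(1)$ and thus $\liminf_i\theta(N_i)>0$.

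The main obstacle I anticipate is the embedding step: making rigorous the claim that the relaxed single-block path maps to a genuine lattice animal whose size is linear in $\Delta_i$ with a constant independent of $i$. One must verify that consecutive precedence-graph nodes on a path land at lattice points that are within a bounded distance of one another under the embedding (so that the image is connected after filling in $O(1)$ intermediate sites per step), that the bounded degree $D_i\le D$ and bounded buffer $b$ keep the time-coordinate increments under control, and that the $\Lambda$-fold collisions only inflate per-site weights by a constant factor rather than the animal size. Getting these constants to depend only on $D$, $b$, $\Lambda$, and $K$ — and crucially not on $i$ — is the delicate part that ties the extended metric dimension directly to the lattice dimension feeding into the lattice-animal growth rate.
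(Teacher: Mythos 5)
Your overall route is the same as the paper's: relax the max-weight path in the precedence graph across time blocks of length proportional to the diameter, embed each block path into a $K$-dimensional lattice through a $\Lambda$-extended resolving set, cover its image by a lattice animal of size linear in the diameter, and conclude with the linear-growth bound for greedy lattice animals of \cite{Martin-LatticeAnimal} under \eqref{eqn:FsigmaK}. (The paper organizes this as Lemma~\ref{lem:scal-suf-metric-d} for the one-to-one metric-dimension embedding and then upgrades to the many-to-one extended embedding.) However, your treatment of the $\Lambda$-fold collisions contains a step that fails as stated.

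You assert that the single-block maximum is dominated by a greedy lattice animal ``with each site weight being the maximum of at most $\Lambda$ i.i.d.\ copies of $\sigma$,'' and you verify \eqref{eqn:FsigmaK} for that maximum. But a simple path in the precedence graph may pass through several \emph{distinct} nodes, up to $\Lambda$ of them, whose images coincide at one lattice point, and the path weight accumulates the \emph{sum} of their service times. The maximum of $\Lambda$ i.i.d.\ copies does not stochastically dominate this sum (already with two visited nodes at one site, $\sigma_1+\sigma_2>\max\{\sigma_1,\sigma_2\}$ whenever both are positive), so the claimed domination is false. The repair, which is what the paper does, is to give each lattice site a weight distributed as the $\Lambda$-fold convolution $F_{\sigma}^{*\Lambda}$ and to check that \eqref{eqn:FsigmaK} is inherited, using that the sum exceeds $x$ only if some summand exceeds $x/\Lambda$:
\begin{equation}
\int_0^\infty \big(1-F_{\sigma}^{*\Lambda}(x)\big)^{1/K}dx
\le \int_0^\infty\Big(\Lambda\,\overline{F}_{\sigma}(x/\Lambda)\Big)^{1/K}dx
= \Lambda^{1+1/K}\int_0^\infty\big(1-F_{\sigma}(x)\big)^{1/K}dx<\infty .
\end{equation}
With this substitution your argument matches the paper's, provided you also settle the two items you flagged: the linear bound on block-path length requires the bounded minimum level $L^*_i$ from \eqref{eqn:level-degree-bounded} (via the potential-function argument of Lemma~\ref{lem:path-upper-bound}), not merely bounded degree and buffer size; and the block maxima over free endpoints should first be anchored at a fixed node via super-additivity (Lemma~\ref{lem:super-additive}), so that after the centered embedding every resulting animal contains the origin, as required by the greedy-lattice-animal theorem you invoke.
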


Theorems \ref{thm:nec} and \ref{thm:suf} allow us to complete the proof of the main result. The necessary part directly follows from Theorem \ref{thm:nec}. Note that when $\limsup_{i\rightarrow\infty}|V_i|=\infty$ and Condition~\eqref{eqn:level-degree-bounded} holds, we must have $\limsup_{i\rightarrow\infty}\Delta_i=\infty$ and hence the network scaling dimension is well defined. The sufficient part is by having
\begin{equation}
	K=dim_{EM}(\mathcal{N})+1<\alpha.
\end{equation}
Thus, there exists $0<\epsilon<\alpha-K$ such that~$\mathbb{E}\left[\sigma^{K+\epsilon}\right]<~\!\infty$ which is a condition stronger than \eqref{eqn:FsigmaK}. Then applying Theorem \ref{thm:suf} yields the result. \qed

\bigskip

Lemma \ref{lem:relation-S-le-EM} and Conjecture \ref{conj:relation-EM-le-S} imply that when the scaling dimension is an integer, it equals the extended metric dimension. In such case, we have the following corollary. 

\begin{corollary}
	Consider an infinite sequence of FJQN/Bs $\mathcal{N}=\{N_i\}_{i=1}^\infty$ with $\limsup_{i\rightarrow\infty}|V_i|=\infty$. Suppose service times are i.i.d. regularly varying with index $\alpha$. Suppose Condition~\eqref{eqn:level-degree-bounded} holds and $dim_S(\mathcal{N})=dim_{EM}(\mathcal{N})=K-1$, where $K\ge 2$, $K\in \mathbb{Z}^+$. The sequence $\mathcal{N}$ is throughput scalable if $\mathbb{E}\left[\sigma^{K+\epsilon}\right]<\infty$ and only if $\mathbb{E}\left[\sigma^{K}\right]<\infty$. 
\end{corollary}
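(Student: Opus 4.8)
The plan is to deduce the corollary from Theorems~\ref{thm:suf} and~\ref{thm:nec} by translating the two dimension conditions into moment conditions via the regular-variation calculus, handling separately the boundary case $\alpha=K$ that the main theorem leaves open. Throughout I use the hypothesis $dim_S(\mathcal{N})=dim_{EM}(\mathcal{N})=K-1$ together with the standard facts that, for $\sigma$ regularly varying with index $\alpha$, one has $\mathbb{E}[\sigma^\beta]<\infty$ for every $\beta<\alpha$ and $\mathbb{E}[\sigma^\beta]=\infty$ for every $\beta>\alpha$, while $\mathbb{E}[\sigma^\alpha]$ may be finite or infinite depending on the slowly varying factor. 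In these terms, $\mathbb{E}[\sigma^{K+\epsilon}]<\infty$ for some $\epsilon>0$ is equivalent to $\alpha>K$, whereas $\mathbb{E}[\sigma^{K}]<\infty$ additionally allows the borderline $\alpha=K$ with an integrable slowly varying factor. The gap between the two hypotheses is therefore exactly the case $\alpha=K$.

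For the sufficiency (``if'') direction I would first show that $\mathbb{E}[\sigma^{K+\epsilon}]<\infty$ implies Condition~\eqref{eqn:FsigmaK} for this same $K$. By Markov's inequality $\overline{F}_\sigma(x)\le \mathbb{E}[\sigma^{K+\epsilon}]\,x^{-(K+\epsilon)}$, so $\overline{F}_\sigma(x)^{1/K}\le C\,x^{-1-\epsilon/K}$ is integrable on $[1,\infty)$, while $\overline{F}_\sigma(x)^{1/K}\le 1$ on $[0,1]$; hence $\int_0^\infty \overline{F}_\sigma(x)^{1/K}\,dx<\infty$. Since $dim_{EM}(\mathcal{N})=K-1\le K-1$, Theorem~\ref{thm:suf} applies verbatim and yields throughput scalability. (Equivalently, $\mathbb{E}[\sigma^{K+\epsilon}]<\infty$ forces $\alpha\ge K+\epsilon>K$, so $dim_{EM}(\mathcal{N})=K-1<\alpha-1$, and the sufficient part of Theorem~\ref{thm:main-result} applies directly.)

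For the necessity (``only if'') direction I would argue by contraposition: assume $\mathbb{E}[\sigma^K]=\infty$ and show $\mathcal{N}$ is not throughput scalable. For $\sigma$ regularly varying this forces $\alpha\le K$. If $\alpha<K$, then $dim_S(\mathcal{N})=K-1>\alpha-1$, and Theorem~\ref{thm:nec} immediately gives non-scalability. The only remaining situation is the boundary $\alpha=K$ with $\mathbb{E}[\sigma^K]=\infty$; here $dim_S(\mathcal{N})=K-1=\alpha-1$ exactly, so Theorem~\ref{thm:nec} (which requires the \emph{strict} inequality $dim_S>\alpha-1$) no longer applies, and a sharper argument is needed. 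I expect this boundary case to be the main obstacle.

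To treat the boundary I would refine the last-passage-percolation lower bound of the necessity proof (Appendix~\ref{apd:pf-nec} and the preliminary tandem analysis). Using $dim_S(\mathcal{N})=K-1$, I would extract a subsequence of connected subnetworks along which $\log|\bar V_{i_n}|/\log\Delta(\bar N_{i_n})\to K-1$, so that a middle layer of the associated precedence graph carries on the order of $\Delta|\bar V|\sim\Delta^{K}$ service-time points. The difficulty is that the crude estimate $\frac{1}{3\Delta}\,\mathbb{E}\big[\max_{\sim\Delta^{K}}S\big]$ only reproduces the polynomial order $\Delta^{K/\alpha-1}$, which has exponent $0$ at $\alpha=K$ and whose slowly varying correction need not diverge even when $\mathbb{E}[\sigma^K]=\infty$. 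The hard part is thus to replace this single-maximum bound by one whose growth is governed by the divergent integral $\int_0^\infty x^{K-1}\overline{F}_\sigma(x)\,dx=\infty$ rather than by the tail index alone, thereby showing that the per-customer cycle time $\lim_{m\to\infty}\mathbb{E}[T_{m,v}(N_i)]/m$ diverges precisely when the $K$-th moment diverges. This is exactly the sharp threshold established for tandem networks in~\cite{Martin}, and completing the boundary step amounts to carrying that sharp estimate through the critical subnetwork extracted above.
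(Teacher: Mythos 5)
Your sufficiency argument and your treatment of the necessity case $\alpha<K$ coincide with the paper's intended derivation: Markov's inequality turns $\mathbb{E}[\sigma^{K+\epsilon}]<\infty$ into Condition~\eqref{eqn:FsigmaK} (this is precisely the observation the paper itself makes when deducing Theorem~\ref{thm:main-result} from Theorem~\ref{thm:suf}), and Theorem~\ref{thm:nec} with $dim_S(\mathcal{N})=K-1>\alpha-1$ handles necessity whenever $\alpha<K$.

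The one step you could not close --- the boundary $\alpha=K$ with $\mathbb{E}[\sigma^K]=\infty$ --- is not closed in the paper either. The corollary is stated without a proof of its own, as an immediate consequence of Theorems~\ref{thm:nec} and~\ref{thm:suf}; and since Theorem~\ref{thm:nec} requires the \emph{strict} inequality $dim_S(\mathcal{N})>\alpha-1$, what those theorems actually deliver is ``scalable only if $\alpha\ge K$,'' which for regularly varying $\sigma$ is strictly weaker than ``scalable only if $\mathbb{E}[\sigma^K]<\infty$'': the tail $\overline{F}_\sigma(x)=x^{-K}$ has index exactly $K$ and infinite $K$-th moment, yet is untouched by Theorem~\ref{thm:nec}, and (as you correctly compute) the single-maximum/Fr\'{e}chet lower bound of Appendix~\ref{apd:pf-nec} then yields only a bounded quantity, not a diverging one. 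This reading is consistent with Table~\ref{tbl:summary}, where the paper records its necessary conditions in index form ($\alpha\ge d+1$), not in moment form. So your diagnosis is accurate and amounts to identifying an overstatement in the corollary itself: either its necessity clause should be read as $\alpha\ge K$, or it requires exactly the sharper estimate you sketch --- a lower bound on $\lim_m\mathbb{E}[T_{m,v}(N_i)]/m$ driven by the divergence of $\int x^{K-1}\overline{F}_\sigma(x)\,dx$, in the spirit of the tandem threshold of \cite{Martin}, carried through the critical subnetwork. That repair is plausible but is only sketched in your proposal and is supplied nowhere in the paper; relative to the paper's own proof you have lost nothing, while relative to the literal statement both you and the paper are short by this same step.
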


\section{Conclusion}\label{sec:conclusion}
This paper investigates throughput scalability of fork-join queueing networks with blocking under heavy-tailed service times. In particular, we focus on cases where service times are regularly varying with index $\alpha$. 
We introduce two topological concepts for generally structured FJQN/Bs: scaling dimension and extended metric dimension. 
We show that a sequence of FJQN/Bs is throughput scalable if its extended metric dimension $<\alpha-1$ and only if its scaling dimension $\le \alpha-1$. The results apply to a list of FJQN/Bs including tandem, lattice, hexagon, and tetrahedron pyramid networks, where the two dimensions coincide and the proposed conditions are almost tight. Even for fractals where the two dimensions don't coincide, we conjecture the gap between the necessary condition and the sufficient condition is less than one. Our analysis is based on last-passage percolation, extreme value theory, and lattice animal argument. The results can be useful for designing large-scale parallel and distributed processing systems in heavy-tailed service time environment as well as for analysis of other scaling networks or fractals such as social networks, electrical grid, Internet of Things, etc. 

In this paper, we conjecture that the extended metric dimension is upper bounded by the ceiling of the scaling dimension. Future research could focus on exploring the conjecture so as to close the gap for fractals with non-integer Hausdorff dimensions. A second research direction is to investigate the benefit of job replication strategies for system performance. Job replication strategies are applied in~\cite{Wang2015} in parallel computing systems with heavy-tailed execution times to reduce latency. It is still open to devise the optimal replication strategy for general parallel and distributed processing systems and to understand the role of job replication in guaranteeing throughput scalability. A third direction is to generalize our results to scenarios where resource capabilities, such as storage and processing speed, are also improving as the network scales in size. These improvements should intuitively reduce synchronization burden of parallel and distributed processing systems and mitigate throughput degradation. Lastly, it is interesting to generalize our analyses from the basic FCFS queueing discipline to other disciplines, such as processor sharing and priority rules, so as to design scalable computing systems with processor sharing virtual machines and user-specified quality-of-service targets.

\appendix
\section{Preliminaries on Precedence \\Graph}\label{apd:precedence}
Given a FJQN/B $N=(V, E)$,  the corresponding precedence graph $\mathcal{G}=(\mathcal{V},\mathcal{E})$ 
has been defined in section \ref{sec:precedence}. The following lemmas will be useful in later proofs. 

Name the arcs in $\mathcal{E}^I,\mathcal{E}^{II},\mathcal{E}^{III}$ (defined in \eqref{eqn:G2})
as Type~I, Type II, Type III arcs, respectively. 
The following lemma provides an upper bound on the number of arcs on a path in the precedence graph. 

\begin{lemma}\label{lem:path-upper-bound}
	Consider a path $\pi_{(m,v)\leadsto(0,v)}$ from $(m,v)$ to $(0,v)$ for any $v\in V$. The number of arcs on $\pi_{(m,v)\leadsto(0,v)}$ is upper bounded by
	\begin{equation}
		\left|\pi_{(m,v)\leadsto(0,v)}\right|\le \max\{L^*(N)+1,b\}\cdot m/b,
	\end{equation}
	and the number of Type III arcs on $\pi_{(m,v)\leadsto(0,v)}$ is upper bounded by
	\begin{equation}
		\left|\pi^{III}_{(m,v)\leadsto(0,v)}\right|\le m/b.
	\end{equation}
\end{lemma}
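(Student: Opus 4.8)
The plan is to decompose the arcs of the path $\pi_{(m,v)\leadsto(0,v)}$ by type and to track how two scalar quantities evolve along it: the time index (the first coordinate of a node in $\mathcal{G}$) and the value of an optimal topological labelling $l^*_N$ evaluated at the second coordinate. Write $n_I, n_{II}, n_{III}$ for the numbers of Type~I, Type~II, and Type~III arcs on the path. Since every arc in $\mathcal{E}$ either preserves or strictly decreases the first coordinate, the directed path is monotone in the time index, so this count is well posed.

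First I would read off how each arc type changes the time index. A Type~I arc $(m,v)\to(m,u)$ leaves it unchanged, a Type~II arc $(m,v)\to(m-1,v)$ decreases it by $1$, and a Type~III arc $(m,v)\to(m-b,w)$ decreases it by $b$. As the path runs from time index $m$ down to $0$, summing these increments yields the conservation law $n_{II}+b\,n_{III}=m$. In particular $n_{III}\le m/b$, which is exactly the second claimed bound; note that this accounting says nothing about $n_I$.

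The main obstacle is therefore to control $n_I$, since Type~I arcs are ``horizontal'' in the time index and invisible to the previous count. Here I would invoke an optimal topological labelling $l^*_N$, which by definition satisfies $1\le l^*_N(j)-l^*_N(i)\le L^*(N)$ for every arc $(i,j)\in E$. Tracking $l^*_N$ of the second coordinate along the path, a Type~I arc (which traverses an arc of $E$ backwards) decreases the label by an amount in $[1,L^*(N)]$, a Type~II arc leaves it unchanged, and a Type~III arc (which follows an arc of $E$) increases it by an amount in $[1,L^*(N)]$. Crucially, the path begins and ends at the same node $v$, so the total change of $l^*_N$ is zero. Writing the balance of the negative Type~I contributions against the positive Type~III contributions, and using that each Type~I decrement is at least $1$ while each Type~III increment is at most $L^*(N)$, gives $n_I \le \sum_{\text{Type III}}(\text{increment}) \le L^*(N)\,n_{III}$.

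Finally I would assemble the pieces. Combining $n_I\le L^*(N)\,n_{III}$ with the conservation law,
\[
n_I+n_{II}+n_{III} \;\le\; n_{II} + \bigl(L^*(N)+1\bigr)\,n_{III}.
\]
Since $n_{II},n_{III}\ge 0$ and the coefficients $1$ and $L^*(N)+1$ are dominated by $\max\{L^*(N)+1,b\}/b$ and $\max\{L^*(N)+1,b\}$ respectively, the right-hand side is at most $\tfrac{\max\{L^*(N)+1,b\}}{b}\bigl(n_{II}+b\,n_{III}\bigr)=\max\{L^*(N)+1,b\}\cdot m/b$, which is the first claimed bound. I expect the only delicate point to be the labelling argument for $n_I$; everything else is bookkeeping on the three arc types together with a one-line coefficient comparison.
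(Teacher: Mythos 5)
Your proof is correct, and it reaches both bounds by a route that is equivalent in substance to the paper's but organized differently. The paper folds the two quantities you track into a single potential $\phi(m,v)=\max\{(L^*(N)+1)/b,\,1\}\cdot m+l^*_N(v)$ and verifies, arc type by arc type, that $\phi$ drops by at least $1$ along every arc of $\mathcal{E}$; the first bound is then read off as the total potential drop $\phi(m,v)-\phi(0,v)=\max\{L^*(N)+1,b\}\cdot m/b$, and the Type III bound is dismissed as trivial since each such arc decreases the time index by $b$. You instead keep two separate ledgers: the exact conservation law $n_{II}+b\,n_{III}=m$ for the time index (which immediately gives the Type III bound), and a balance argument on $l^*_N$ exploiting that the path starts and ends at the same node $v$, which yields the intermediate inequality $n_I\le L^*(N)\,n_{III}$ that the paper never states explicitly. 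Your closing coefficient comparison then reconstructs, in effect, exactly the linear combination defining the paper's $\phi$. What the paper's version buys is compactness: one uniform check per arc type and no case analysis at the end. What your version buys is finer structural information: it isolates how the ``horizontal'' Type I arcs are paid for by Type III arcs at rate at most $L^*(N)$ apiece, which makes transparent why the minimum level and the buffer size enter the bound in the way they do. Both arguments rest on the same two facts --- that an optimal topological labelling satisfies $1\le l^*_N(j)-l^*_N(i)\le L^*(N)$ on every arc $(i,j)\in E$, and that the three arc types decrement the time index by $0$, $1$, and $b$ respectively --- so your proposal is a valid, slightly more informative rearrangement of the paper's potential-function proof.
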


\begin{proof}
	Introduce the following function
	\begin{equation}
		\phi:(m,v)\mapsto \max\{(c+1)/b,1\}\cdot m+l^*_N(v),
	\end{equation}
	where $c=L^*(N)$ is the minimum level of the FJQN/B network $N=(V, E)$, and $l^*_N(v)$ is the corresponding optimal topological labelling on node $v$. 
	
	For Type I arcs: $(m,v)\rightarrow (m,u)$ with $(u,v)\in E$, we have
	\begin{equation}
		\phi(m,v)-\phi(m,u)=l^*_N(v)-l^*_N(u)\ge 1.
	\end{equation}
	For Type II arcs: $(m,v)\rightarrow (m-1,v)$ with $m\ge 1$, we have
	\begin{equation}
		\phi(m,v)-\phi(m-1,v)=\max\{(c+1)/b,1\}\ge 1.
	\end{equation}
	For Type III arcs: $(m,v)\rightarrow (m-b,u)$ with $(v,u)\in E,m\ge b$, we have
	\begin{eqnarray}
		\phi(m,v)\!-\!\phi(m\!-\!b,u)\!&\!\!\!=\!\!\!&\!\max\{(c\!+\!1)/b,1\}b \!-\!\left[l^*(u)\!-\!l^*(v)\right]\nonumber\\
		\!&\!\!\!\ge\!\!\!&\! \max\{c+1,b\}-c\ge 1.
	\end{eqnarray}
	In summary, we know that $\phi$ decreases by at least one for each arc in $\mathcal{E}$. Thus,
	\begin{eqnarray}
		\left|\pi_{(m,v)\leadsto(0,v)}\right|&\le & \phi(m,v)-\phi(0,v)\nonumber\\
		&=&\max\{(c+1)/b,1\}\cdot m\nonumber\\
		&=&\max\{c+1,b\}\cdot m/b.
	\end{eqnarray}		
	The upper bound on $\left|\pi^{III}_{(m,v)\leadsto(0,v)}\right|$ is trivial, as each Type~III arc goes from $m$ to $m-b$. 
\end{proof}
%
\smallskip 

Let $Wei(\pi^*_{(m,v)\leadsto(m'',v'')})\!=\!\max\{Wei(\pi)\big| \pi\!:\!(m,v)\!\leadsto\!(m'',v'')\}$. We have the following lemma on the super-additive property of the maximum weighted path (see e.g. \cite{Chaintreau,Martin,Martin-LPP}). 
\begin{lemma}\label{lem:super-additive}
	$\forall v,v',v''\in V$, $\forall m,m',m''$ such that paths $(m,v)\leadsto(m',v')$ and $(m',v')\leadsto(m'',v'')$ exist in $\mathcal{G}$, 
	\begin{eqnarray}
		Wei(\pi^*_{(m,v)\leadsto(m'',v'')})&\!\!\!\ge\!\!\!&   Wei(\pi^*_{(m,v)\leadsto(m',v')})\nonumber\\
		&&+  Wei(\pi^*_{(m',v')\leadsto(m'',v'')})\nonumber\\
		&&-S_{m',v'}(N).
	\end{eqnarray}
\end{lemma}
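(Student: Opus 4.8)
The plan is to exploit the fact that $Wei(\pi^*_{(m,v)\leadsto(m'',v'')})$ is by definition a maximum over all simple directed paths, so it suffices to exhibit one particular path from $(m,v)$ to $(m'',v'')$ that passes through $(m',v')$ and whose weight equals the right-hand side. First I would let $\pi_1$ be a maximum weighted simple path from $(m,v)$ to $(m',v')$ and $\pi_2$ a maximum weighted simple path from $(m',v')$ to $(m'',v'')$; both exist (and attain their maxima over finitely many simple paths) by the hypothesis that the two connecting paths exist in $\mathcal{G}$. Concatenating $\pi_1$ and $\pi_2$ at their common endpoint $(m',v')$ produces a directed walk $\pi$ from $(m,v)$ to $(m'',v'')$.

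The key step is to verify that $\pi$ is in fact a \emph{simple} path, i.e. that $\pi_1$ and $\pi_2$ meet only at $(m',v')$. For this I would reuse the potential $\phi(m,v)=\max\{(c+1)/b,1\}\cdot m+l^*_N(v)$ from the proof of Lemma~\ref{lem:path-upper-bound}, which strictly decreases along every arc of $\mathcal{G}$. Along $\pi_1$ the value of $\phi$ decreases monotonically from $\phi(m,v)$ down to its minimum $\phi(m',v')$, so every node $Z$ on $\pi_1$ satisfies $\phi(Z)\ge \phi(m',v')$, with equality only at the endpoint $(m',v')$; dually, every node on $\pi_2$ satisfies $\phi\le \phi(m',v')$, with equality only at $(m',v')$. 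Hence any node shared by $\pi_1$ and $\pi_2$ must have $\phi$-value exactly $\phi(m',v')$ and therefore must equal $(m',v')$. Thus $\pi$ visits each vertex at most once and is a legitimate simple path admissible in the maximum defining $Wei(\pi^*_{(m,v)\leadsto(m'',v'')})$.

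It then remains only to compute $Wei(\pi)$. Since node weights add along a path and the single vertex $(m',v')$ is counted once in $\pi_1$ and once in $\pi_2$ but should be counted only once in $\pi$, we obtain $Wei(\pi)=Wei(\pi_1)+Wei(\pi_2)-S_{m',v'}(N)=Wei(\pi^*_{(m,v)\leadsto(m',v')})+Wei(\pi^*_{(m',v')\leadsto(m'',v'')})-S_{m',v'}(N)$. By maximality of $Wei(\pi^*_{(m,v)\leadsto(m'',v'')})$ over all simple paths, we have $Wei(\pi^*_{(m,v)\leadsto(m'',v'')})\ge Wei(\pi)$, which is precisely the claimed inequality.

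I expect the only genuinely delicate point to be the simplicity argument. Without it one could neither conclude that the concatenation qualifies as a simple path (hence is admissible in the $\max$) nor that the double-counted weight is exactly the single term $S_{m',v'}(N)$ rather than possibly several repeated vertices. The strictly monotone potential $\phi$ resolves both issues simultaneously by ruling out any overlap of $\pi_1$ and $\pi_2$ beyond the junction $(m',v')$, and it is available essentially for free from Lemma~\ref{lem:path-upper-bound}.
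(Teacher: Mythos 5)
Your proof is correct. For comparison: the paper does not actually prove Lemma~\ref{lem:super-additive} at all; it states it as a standard super-additivity property of last-passage percolation and points to \cite{Chaintreau,Martin,Martin-LPP}. Your argument is the standard concatenation proof of that property, written out in full: splice optimal paths $\pi_1$ and $\pi_2$ at $(m',v')$, verify the splice is an admissible simple path, and subtract the one double-counted node weight $S_{m',v'}(N)$. You correctly identify simplicity of the splice as the only delicate step, and your treatment of it is sound: since the potential $\phi$ from the proof of Lemma~\ref{lem:path-upper-bound} strictly decreases along every arc of $\mathcal{G}$, every node of $\pi_1$ other than the junction satisfies $\phi>\phi(m',v')$ while every node of $\pi_2$ other than the junction satisfies $\phi<\phi(m',v')$, so no other node can be shared. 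An equivalent shortcut would be to invoke acyclicity of the precedence graph directly (which the paper uses elsewhere, citing \cite{Dallery}): a node $Z\neq(m',v')$ common to $\pi_1$ and $\pi_2$ would close a directed cycle $Z\leadsto(m',v')\leadsto Z$. Since a strictly decreasing potential is precisely a certificate of acyclicity, your route and this one are the same argument in different clothing; yours has the merit of being self-contained in quantities the paper already constructs, at the cost of a few extra lines.
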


\smallskip
The following lemma is immediate by construction. 
\begin{lemma}\label{lem:exist-path}
	$\forall v,v'\in V$, $\forall m,m'$ such that $m-m'\ge \Delta(N) b$, we can always find a path $\pi\!:\!(m,v)\!\leadsto\!(m',v')$ in $\mathcal{G}$. 
\end{lemma}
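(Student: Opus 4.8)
The plan is to give a direct, constructive proof that exploits the correspondence between edges of the underlying undirected graph $G$ and the three arc types of the precedence graph $\mathcal{G}$. First I would invoke connectivity of $G$: since $\Delta(N)$ is the diameter of $G$, there exists an undirected path $v=v_0,v_1,\dots,v_d=v'$ with $d=dis(v,v')\le \Delta(N)$. Each consecutive pair $\{v_j,v_{j+1}\}$ corresponds to a directed edge of $N$ in exactly one orientation (both orientations cannot occur, as $N$ is acyclic), and I would translate each step into a precedence arc according to that orientation.

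Concretely, starting from $(m,v_0)$, for the $j$-th step I would move from $v_j$ to $v_{j+1}$ using either a Type I arc $(\cdot,v_j)\!\to\!(\cdot,v_{j+1})$ if $(v_{j+1},v_j)\in E$ (a ``backward'' edge, which keeps the $m$-coordinate fixed), or a Type III arc $(\cdot,v_j)\!\to\!(\cdot\!-\!b,v_{j+1})$ if $(v_j,v_{j+1})\in E$ (a ``forward'' edge, which decreases $m$ by exactly $b$). After traversing all $d$ edges I land at some node $(m'',v')$ with $m''=m-kb$, where $k\le d\le \Delta(N)$ is the number of forward edges used. The key budget inequality is then $m''=m-kb\ge m-\Delta(N)b\ge m'$, so the arrival coordinate satisfies $m'\le m''\le m$. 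Finally I would append $m''-m'\ge 0$ Type II arcs $(\cdot,v')\!\to\!(\cdot\!-\!1,v')$ to descend from $(m'',v')$ to $(m',v')$, completing the path.

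The remaining task is routine verification of the arc-validity constraints along this construction, and I do not expect any genuine obstacle. I would check that every Type III arc is applied at a coordinate $\ge b$: its downstream coordinate is at least the minimum value $m''=m-kb\ge 0$ reached on the path, so its upstream coordinate is at least $b$. I would likewise check that every Type II arc is applied at a coordinate $\ge 1$, since the final descent runs through coordinates $m'',\dots,m'+1\ge 1$, and that all coordinates visited lie in $[m',m]$ with $m'\ge 0$ (as $(m',v')$ is assumed to be a valid vertex of $\mathcal{G}$). The only care needed is the bookkeeping to confirm the $m$-coordinate never drops below the thresholds required by each arc type, which the worst-case budget $\Delta(N)b$ (all edges forward) is precisely calibrated to guarantee.
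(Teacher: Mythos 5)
Your proof is correct and is exactly the construction the paper has in mind: the paper states this lemma as ``immediate by construction'' without writing out a proof, and your argument (walk an undirected shortest path of length $\le \Delta(N)$, realize backward edges as Type I arcs, forward edges as Type III arcs costing $b$ each, then pad down with Type II arcs, with the budget $m-m'\ge \Delta(N)b$ covering the worst case of all-forward edges) supplies precisely that construction, including the coordinate-threshold checks for Type II/III arcs.
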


\section{Maxima and Extreme Value Theory}
Let $M_n=\max\left\{\sigma_1,\sigma_2,\dots,\sigma_n \right\}$, where $\sigma_1,\sigma_2,\dots,\sigma_n$ are i.i.d. with distribution $F_{\sigma}$. Suppose there exist normalizing constants $a_n>0, b_n$ such that 
\begin{equation}\label{eqn:MDA}
	\mathbb{P}\left[\frac{M_n-b_n}{a_n}\le x\right]=F_{\sigma}(a_nx+b_n)^n\rightarrow H(x)~\text{as}~n\rightarrow \infty.
\end{equation}
When \eqref{eqn:MDA} holds, we say that $F_{\sigma}$ belongs to the Maximum Domain of Attraction (MDA) of $H$.  
By extreme value theory (see e.g. \cite{EVTGalambos,EVTEmbrechts,EVT}), $H(x)$ must fall into one of three distribution classes: Weibull, Gumbel, and Fr{\'e}chet. 
When $F_{\sigma}$ has a short tail or an exponential tail, it belongs to the MDA of $H$ for the first two classes, respectively. 
When $F_{\sigma}$ is regularly varying with index $\alpha>0$, it has a long tail and belongs to the MDA of $H$ for the Fr{\'e}chet class 
\cite{EVT},
where 
\begin{eqnarray}
	\text{Fr{\'e}chet}: & \Phi_{\alpha}(x)=
	\left\{\begin{aligned}
		& 0, && x< 0;\\
		& \text{exp}\{-x^{\alpha}\}, && x\ge 0,\alpha >0.
	\end{aligned}\right.
\end{eqnarray}

\section{Proof of Theorem 2}\label{apd:pf-nec}
Suppose $dim_S(\mathcal{N})>\alpha-1$. We will show the sequence $\mathcal{N}$ is not throughput scalable. The proof consists of two parts: a) constructing an upper bound on liminf of throughput; b) showing the upper bound equals zero by extreme value theory. 

\bigskip
\noindent {\it Part a: constructing an upper bound on liminf of throughput.}

Since $dim_S(\mathcal{N})>\alpha-1$, there must exist a constant $K$ such that $dim_S(\mathcal{N})>K-1>\alpha-1$. By Definition~\ref{def:scaling-dim},
there must exist a subsequence of subnetworks $
\mathcal{\bar{N}}=\{\bar{N}_{i_j}\}_{j=1}^\infty$ with $\bar{V}_{i_j}\subseteq V_{i_j}$, $\bar{E}_{i_j}\subseteq E_{i_j}$, $\Delta(\bar{N}_{i_j})\rightarrow\infty$ as $j\rightarrow\infty$, and 
\begin{equation}
	\limsup_{j\rightarrow\infty}\frac{\log |\bar{V}_{i_j}|}{\log \Delta(\bar{N}_{i_j})}>K-1. 
\end{equation}
This implies there further exists a subsequence of $
\mathcal{\bar{N}}$, denoted as $
\mathcal{\bar{N}}'=\{\bar{N}_{i_{j_n}}\}_{n=1}^\infty$, with $\Delta(\bar{N}_{i_{j_n}})\rightarrow\infty$ as $n\rightarrow\infty$ and
$\lim_{n\rightarrow\infty}\frac{\log |\bar{V}_{i_{j_n}}|}{\log \Delta(\bar{N}_{i_{j_n}})}>K-1.$ 
For simplicity, let $\bar{N}'_{n}= \bar{N}_{i_{j_n}}$,  $\bar{V}'_{n}= \bar{V}_{i_{j_n}}$, and $\bar{E}'_{n}=\bar{E}_{i_{j_n}}$, for $n\in\mathbb{Z}^+$. Then $\mathcal{\bar{N}}'=\{\bar{N}'_{n}\}_{n=1}^\infty$ is an infinite sequence of connected FJQN/Bs with  $\Delta(\bar{N}'_{n})\rightarrow\infty$ as $n\rightarrow\infty$, and 
\begin{equation}\label{eqn:hausdorff-ge-K-1}
	\lim_{n\rightarrow\infty}\frac{\log |\bar{V}'_{n}|}{\log \Delta(\bar{N}'_{n})}>K-1. 
\end{equation}
By the monotonicity on throughput with respect to network inclusion (see \cite{Bacelli}) and the construction of subsequences, 
\begin{equation}
	\liminf_{i\rightarrow \infty} \theta(N_i)\!\le\! \liminf_{j\rightarrow \infty} \theta(N_{i_j})\!\le\! 
	\liminf_{j\rightarrow \infty} \theta(\bar{N}_{i_j})\!\le\! 
	\liminf_{n\rightarrow \infty} \theta(\bar{N}'_{n}).
\end{equation}

Next, we will construct an upper bound on $\liminf_{n\rightarrow \infty} \theta(\bar{N}'_{n})$. 
Denote $\Delta(\bar{N}'_{n})\!=\!\Delta_n$. For large $m$ as a multiple of $3\Delta_nb$, by dividing $[0,m]$ into equal intervals of length  $3\Delta_nb$, we can partition the precedence graph $\mathcal{P}_n$ of network $\bar{N}'_{n}$ into $\frac{m}{3\Delta_nb}$ layers. Let $\pi^*_j$ denote the maximum weighted path in layer $j$ from  $(m-3j\Delta_nb,v)$ to $(m-3(j+1)\Delta_nb,v)$ where $j=0,1,\dots,\frac{m}{3\Delta_nb}-1$. By the super-additive property (see Lemma \ref{lem:super-additive}), the weight of the maximum weighted path from $(m,v)$ to $(0,v)$ is bounded below by the sum of $Wei(\pi^*_j)-S_{m-3j\Delta_nb,v}$ over all $j$. Essentially, this lower bound comes from adding a constraint that the path contains nodes $(m-3j\Delta_nb,v)$ for all~$j$. Together with Lemma~\ref{lem:max-weight}, we have
\begin{equation}
	T_{m,v}(\bar{N}'_{n})\ge\!\!\! \sum_{j=0}^{\frac{m}{3\Delta_nb}-1}Wei(\pi^*_j)-S_{m-3j\Delta_nb,v}(\bar{N}'_{n}). 
\end{equation} 
Within each layer $j$, by Lemma \ref{lem:exist-path}, we can always find a path $(m-3j\Delta_nb,v)\leadsto (m',v')\leadsto (m-3(j+1)\Delta_nb,v)$ in the precedence graph $\mathcal{G}_n$ for any $(m',v')$ satisfying
\begin{eqnarray}
	m'&\!\!\!\in\!\!\!&\left[m-3j\Delta_nb-2\Delta_nb,m-3j\Delta_nb-\Delta_nb-1\right]\\
	v'&\!\!\!\in\!\!\!&\bar{V}'_{n}
\end{eqnarray} 
Therefore, $Wei(\pi^*_j)-S_{m-3j\Delta_nb,v}(\bar{N}'_{n})$ must be larger than the maximum weight of $(m',v')$ among all above possible choices. Using Lemma~\ref{lem:max-weight} and the above discussion on the maximum weighted path, we have
\begin{equation}\label{eqn:Tmv-lowerbound}
	T_{m,v}(\bar{N}'_{n})\ge\!\!\! \sum_{j=0}^{\frac{m}{3\Delta_nb}-1}\max\left\{S_{m',v'}\left|
	\begin{aligned}
		m'&\ge m- 3j\Delta_nb-2\Delta_nb\\
		m'&< m- 3j\Delta_nb-\Delta_nb\\
		v'&\in \bar{V}'_{n}
	\end{aligned}
	\right.
	\right\}.
\end{equation}
Since the service times are i.i.d., we have
\begin{equation}
	\mathbb{E}\left[\text{RHS in \eqref{eqn:Tmv-lowerbound}}\right]=
	\frac{m}{3\Delta_nb}\mathbb{E}\left[\max\{S_{m',v'}|0\!\le\! m'\!<\!\Delta_nb, v'\!\in\! \bar{V}'_n\}\right].
\end{equation}
Thus,
\begin{equation}
	\mathbb{E}\left[T_{m,v}(\bar{N}'_{n})\right]\ge
	\frac{m}{3\Delta_nb}\mathbb{E}\left[\max\{S_{m',v'}|0\!\le\! m'\!<\!\Delta_nb, v'\!\in\! \bar{V}'_n\}\right].
\end{equation}
Thus, by throughput definition \eqref{eqn:throughput-def}, 
\begin{equation}\label{eqn:App-B-1}
	\theta(\bar{N}'_{n})\le\left(\frac{\mathbb{E}\left[\max\{S_{m',v'}|0\le m'< \Delta_nb, v'\in \bar{V}'_n\}\right]}{3\Delta_nb}\right)^{-1}
	.
\end{equation}
Equation \eqref{eqn:hausdorff-ge-K-1} implies there exist constants $c>0, n_0\in \mathbb{Z}^+$ such that 
\begin{equation}
	|\bar{V}'_n|\ge c \cdot \Delta_n^{K-1}, ~~~\forall n\ge n_0.
\end{equation} 
Thus, for all $n\ge n_0$, the total number of different choices in the max term in \eqref{eqn:App-B-1} is bounded below by $c\Delta_n^{K-1}\Delta_nb$. Consider i.i.d. random variables $\{\sigma_h\}_{h\ge 1}$ following distribution~$F_\sigma$. For all $n\ge n_0$, we have
\begin{equation}
	\mathbb{E}\!\left[\max\{S_{m',v'}|0\!\le \!m'\!<\! \Delta_nb, v'\!\in\! \bar{V}'_n\}\right]
	\ge\mathbb{E}\left[M_{g_n}\right].
\end{equation}
where $g_n:= c \Delta_n^{K}b$ and  $M_{h}:=\max\left\{\sigma_1,\sigma_2,\dots, \sigma_{h}\right\}$. 
Thus,
\begin{equation}
	\liminf_{i\rightarrow \infty} \theta(N_i)\le \liminf_{n\rightarrow \infty} \theta(\bar{N}'_n)\le \left(\limsup_{n\rightarrow \infty}\frac{\mathbb{E}\left[M_{g_n}\right]}{3\Delta_nb}\right)^{-1}\!\!\!.
\end{equation}

\bigskip

\noindent {\it Part b: showing the upper bound equals zero by extreme value theory.}

By construction, $\Delta_n\rightarrow\infty$ as $n\rightarrow\infty$ and hence $g_n\rightarrow\infty$ as $n\rightarrow\infty$. As $F_{\sigma}$ is regularly varying with index $\alpha$, by extreme value theory \cite[Theorem 3.3.7]{EVTEmbrechts}, we know that $F_{\sigma}$ belongs to the Maximum Domain of Attraction of Fr{\'e}chet distribution~$\Phi_\alpha$ and
\begin{equation}
	Y_n:=\frac{\max\left\{\sigma_1,\sigma_2,\dots, \sigma_{g_n}\right\}}{g_n^{1/\alpha}L_1(g_n)}\xrightarrow{d}\Phi_\alpha,
\end{equation}
where $L_1$ is some slowly varying function. 
By the property of convergence in distribution (see e.g. \cite[Theorem 25.11]{ProbMeasureBP}), we have
\begin{equation}
	\liminf_{n\rightarrow \infty} \mathbb{E}\left[Y_n\right]\ge \mathbb{E}\left[\Phi_\alpha\right]>0.
\end{equation}
This implies that $\forall \epsilon>0$, there exists $n_\epsilon>0$ such that $\forall n\ge n_\epsilon$ we have 
\begin{equation}
	\mathbb{E}\left[Y_n\right]\ge \liminf_{n\rightarrow \infty} \mathbb{E}\left[Y_n\right]-\epsilon\ge \mathbb{E}\left[\Phi_\alpha\right]-\epsilon.
\end{equation}
Let $0<\epsilon<\mathbb{E}\left[\Phi_\alpha\right]$. We have, $\forall n\ge n_\epsilon$ 
\begin{equation}\label{eqn:nec-heavy-pf-liminfEYn}
	\mathbb{E}\left[Y_n\right]\ge \mathbb{E}\left[\Phi_\alpha\right]-\epsilon>0.
\end{equation}
Now consider
\begin{equation}\label{eqn:nec-heavy-pf-limsupMax}
	\limsup_{n\rightarrow \infty}\frac{\mathbb{E}\left[M_{g_n}\right]}{3\Delta_nb}=
	\frac{(cb)^{1/K}}{3b}\limsup_{n\rightarrow \infty}\left(\mathbb{E}\left[Y_n\right]\cdot L_1(g_n)\cdot g_n^{1/\alpha-1/K}\right).
\end{equation}
Note that $\lim_{n\rightarrow \infty}g_n^{1/\alpha-1/K}=\infty$ as $\alpha<K$. By the basic property of slowly varying function (see e,g. \cite[Remark 1.2.3]{RegularVariation-Mikosch}), we have
\begin{equation}
	\lim_{n\rightarrow \infty}\left(L_1(g_n)\cdot g_n^{1/\alpha-1/K}\right)=\infty.
\end{equation}
Applying \eqref{eqn:nec-heavy-pf-liminfEYn} yields
\begin{eqnarray}
	&&\limsup_{n\rightarrow \infty}\left(\mathbb{E}\left[Y_n\right]\cdot L_1(g_n)\cdot g_n^{1/\alpha-1/K}\right)\nonumber\\
	&\!\!\!\ge\!\!\! &
	\limsup_{n\rightarrow \infty}\left(\left(\mathbb{E}\left[\Phi_\alpha\right]-\epsilon\right)\cdot L_1(g_n)\cdot g_n^{1/\alpha-1/K}\right)\nonumber\\
	&\!\!\!= \!\!\!&\left(\mathbb{E}\left[\Phi_\alpha\right]-\epsilon\right)
	\limsup_{n\rightarrow \infty}\left( L_1(g_n)\cdot g_n^{1/\alpha-1/K}\right)\nonumber\\
	&\!\!\!=\!\!\!&\infty.
\end{eqnarray}
Thus,
\begin{equation}
	\limsup_{n\rightarrow \infty}\frac{\mathbb{E}\left[M_{g_n}\right]}{3\Delta_nb}=\infty.
\end{equation}
Consequently,
\begin{equation}
	\liminf_{i\rightarrow \infty} \theta(N_i)\le \liminf_{n\rightarrow \infty} \theta(\bar{N}'_n)\le\left(\limsup_{n\rightarrow \infty}\frac{\mathbb{E}\left[M_{g_n}\right]}{3\Delta_nb}\right)^{-1}\!=0.
\end{equation}

\section{Proof of Theorem 3}\label{apd:pf-suf}
Before showing Theorem \ref{thm:suf}, we need the following lemma which says that if the metric dimension of all $G_i$'s in the sequence is no larger than $K-1$, then we can embed all precedence graphs $\mathcal{P}_i$'s of $N_i$'s into a $K$-dimensional lattice, and Condition~\eqref{eqn:FsigmaK} ensures that the sequence is throughput scalable. 
\begin{lemma}\label{lem:scal-suf-metric-d}
	Consider an infinite sequence of FJQN/Bs $\mathcal{N}=\{N_i\}_{i=1}^\infty$ under i.i.d. regularly varying service times with index $\alpha\!>\!1$. Under Condition~\eqref{eqn:level-degree-bounded}, the sequence $\mathcal{N}$ is throughput scalable if $\exists K\in \mathbb{Z}^+$, $K\ge 1$ such that 
	\eqref{eqn:FsigmaK} holds 
	and 
	$dim_M(G_i)\le K-1$, $\forall i\in \mathbb{Z}^+$.   
\end{lemma}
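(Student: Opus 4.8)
The plan is to turn the hypothesis $dim_M(G_i)\le K-1$ into, for each $i$, an embedding of the precedence graph $\mathcal{G}_i$ of $N_i$ into the integer lattice $\mathbb{Z}^{K}$ in which every arc is a bounded move, and then to dominate the last-passage weight $T_{m,v}(N_i)$ of Lemma~\ref{lem:max-weight} by the weight of a greedy lattice animal on $\mathbb{Z}^K$. The exponent $1/K$ in Condition~\eqref{eqn:FsigmaK} is precisely the integrability needed for the linear-growth theorem for greedy lattice animals in $K$ dimensions, and it will produce a lower bound on $\theta(N_i)$ that is uniform in $i$, hence throughput scalability.

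First I would build the embedding. Fix $i$ and let $W=\{w_1,\dots,w_k\}$ with $k\le K-1$ be a basis (resolving set) of $G_i$. The metric representation $v\mapsto\bm{r}(v|W)=(dis(v,w_1),\dots,dis(v,w_k))$ is injective into $\mathbb{Z}^{k}\subseteq\mathbb{Z}^{K-1}$, so adjoining the index $m$ as an extra coordinate gives an injection $\phi_i:(m,v)\mapsto(m,\bm{r}(v|W))\in\mathbb{Z}^{K}$ of the precedence-graph nodes. The crucial observation is that each arc of $\mathcal{G}_i$ becomes a bounded move: for a Type~I arc the $m$-coordinate is unchanged while the triangle inequality gives $|dis(u,w_t)-dis(v,w_t)|\le dis(u,v)=1$ for adjacent $u,v$; for a Type~II arc only $m$ changes, by one; for a Type~III arc $m$ changes by $b$ and each spatial coordinate by at most one. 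Inserting the at most $b+K-2$ intermediate lattice points on each arc turns the image of any path into a nearest-neighbour-connected set, i.e. a lattice animal $\tilde\xi$. By Lemma~\ref{lem:path-upper-bound} any path $(m,v)\leadsto(0,v')$ uses at most linearly many arcs in $m$ (the dependence on the endpoint $v'$ contributes only an $i$-dependent additive term negligible in the $m\to\infty$ limit), so $|\tilde\xi|\le C m$ with $C=C(b,K,\sup_i L^*_i)$ finite and, by $\limsup_i L^*_i<\infty$ in~\eqref{eqn:level-degree-bounded}, uniform in $i$.

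Next I would transfer the randomness. Since $\phi_i$ is injective, I can couple the i.i.d. service times with a single i.i.d. field $\{X_z\}_{z\in\mathbb{Z}^K}$ of law $F_\sigma$ by setting $S_{m',v'}(N_i)=X_{\phi_i(m',v')}$; the images are distinct, so this is legitimate. Because service times are nonnegative and the path nodes' images all lie in $\tilde\xi$, Lemma~\ref{lem:max-weight} yields
\begin{equation}
T_{m,v}(N_i)=\max_{\pi:(m,v)\leadsto(0,v')}Wei(\pi)\le\sum_{z\in\tilde\xi}X_z\le N(Cm),
\end{equation}
where $N(n)=\max\{\sum_{z\in\xi}X_z:\xi\text{ a lattice animal},\ \phi_i(m,v)\in\xi,\ |\xi|\le n\}$; by translation invariance of the field, $\mathbb{E}[N(n)]$ is independent of the base point. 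Taking expectations and invoking the linear-growth theorem for greedy lattice animals in $\mathbb{Z}^K$ \cite{Martin-LatticeAnimal}, which under~\eqref{eqn:FsigmaK} gives $\mathbb{E}[N(n)]/n\to\gamma_K<\infty$, I obtain
\begin{equation}
\lim_{m\rightarrow\infty}\frac{\mathbb{E}[T_{m,v}(N_i)]}{m}\le C\gamma_K,
\end{equation}
the limit existing by~\cite{Dallery}. Hence $\theta(N_i)\ge(C\gamma_K)^{-1}$ for every $i$ by~\eqref{eqn:throughput-def}, so $\liminf_{i\rightarrow\infty}\theta(N_i)>0$.

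The main obstacle is the embedding step: establishing that metric dimension $\le K-1$ forces every precedence-graph arc to be a bounded lattice move, so that filling in intermediate points inflates the animal only by a constant factor, and that the resulting animal size remains linear in $m$ with a constant \emph{uniform} over the whole sequence. Matching the lattice dimension $K$ to the integrability exponent $1/K$ in~\eqref{eqn:FsigmaK} is exactly what makes the greedy-lattice-animal theorem applicable, so a secondary point requiring care is checking that Martin's hypotheses are genuinely met under~\eqref{eqn:FsigmaK} rather than under a bare $K$-th moment bound.
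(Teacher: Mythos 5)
Your proposal is correct, and its essential mathematical content coincides with the heart of the paper's proof: use the resolving set of cardinality $\le K-1$ to embed the precedence graph injectively into $\mathbb{Z}^K$ via $(m,v)\mapsto(m,\bm{r}(v|W_i))$, note that Type I/II/III arcs become moves of bounded $\ell^\infty$-length so that any path inflates into a lattice animal by only a constant factor (uniform in $i$ thanks to $\limsup_i L^*_i<\infty$), and then invoke Martin's linear-growth theorem for greedy lattice animals, whose hypothesis is exactly Condition~\eqref{eqn:FsigmaK} with exponent $1/K$. Where you genuinely differ is in the reduction preceding the animal bound. The paper's proof has a ``Part a'' that partitions the time axis $[0,m]$ into layers of length $\tilde{\Delta}_i b$, relaxes connectivity at layer boundaries, and uses super-additivity (Lemma~\ref{lem:super-additive}) and Lemma~\ref{lem:exist-path} to reduce everything to a single max-weight path over a window of length $3\tilde{\Delta}_i b$; only that diameter-scale object is then covered by an animal of size $f(\tilde{\Delta}_i)$. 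You skip the layering entirely: you cover the whole optimal path $(m,v)\leadsto(0,v')$ by one animal of size $Cm+C_i'$ and let $m\to\infty$, absorbing the endpoint-label discrepancy $l^*_i(v)-l^*_i(v')\le L^*_i\Delta_i$ into the $i$-dependent additive term. This works because Martin's bound $\sup_n \mathbb{E}\bigl[\max_{\xi\in\mathcal{A}^K(n)}Wei(\xi)\bigr]/n<\infty$ is uniform over all animal sizes, so nothing forces the animal size to be tied to the diameter rather than to $m$; your route is shorter and shows the layer decomposition is dispensable for the sufficiency direction (it is essential only in the necessity proof, where the layers supply the identically distributed blocks for the extreme-value lower bound). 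What the paper's layered formulation buys is a throughput bound expressed through diameter-scale quantities, keeping the sufficiency proof structurally parallel to the necessity proof, and a form that carries over verbatim to the many-to-one ($\Lambda$-extended) setting of Theorem~\ref{thm:suf}; your argument also extends there, but you would need to add the $\Lambda$-fold convolution step just as the paper does.
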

\begin{proof} {(\bf for Lemma~\ref{lem:scal-suf-metric-d}).} The proof consists of two parts: a) constructing a lower bound on liminf of throughput; b) showing the lower bound is strictly positive by mapping networks onto lattices. 
	
	\bigskip
	\noindent {\it Part a: constructing a lower bound on liminf of throughput.}
	
	Let  $\tilde{\Delta}_{i+1}=\max\{\Delta_{i+1}, \tilde{\Delta}_{i}\}$ $\forall i\in \mathbb{Z}^+$, and $\tilde{\Delta}_{1}=\Delta_1$. We have $\tilde{\Delta}_{i}\ge \Delta_i$ and $\tilde{\Delta}_{i+1}\ge \tilde{\Delta}_{i}$,  $\forall i\in \mathbb{Z}^+$. Note that $\{\tilde{\Delta}_i\}_{i=1}^\infty$ is monotone and $\lim_{i\rightarrow\infty} \tilde{\Delta}_i\!=\!\infty$. 
	
	Let $\mathcal{P}_i=(\mathcal{V}_i,
	\mathcal{E}_i)$ be the precedence graph of $N_i$. 
	By Lemma \ref{lem:max-weight}, $T_{m,v}(N_i)$ is given by the maximum weighted path from $(m,v)$ to $(0,v')$ in  $\mathcal{P}_i$ for all $v'\in V_i$. Denote this path by $\Pi^*$. For large $m$ as a multiple of $\tilde{\Delta}_ib$, by dividing $[0,m]$ into equal intervals of length  $\tilde{\Delta}_ib$, we can partition $\mathcal{P}_i$ into $\frac{m}{\tilde{\Delta}_ib}$ layers, where 
	\begin{equation}
		\text{layer $j$:~}[m-(j+1)\tilde{\Delta}_ib,m-j\tilde{\Delta}_ib],j\!=\!0,1,\dots,\frac{m}{\tilde{\Delta}_ib}-1. 
	\end{equation} 
	The chunk of $\Pi^*$ within layer $j$ can always be fully covered by some path $\pi_j$ from the top of the layer to the bottom of the layer, where the weight of $\pi_j$ is bounded above by the weight of the maximum weighted path in layer $j$ as follows. 
	\begin{equation}
		Wei(\pi_j)\le \max_{\pi}\left\{Wei(\pi)\left|
		\begin{aligned}
			\pi & \text{~from~}  (m-j\tilde{\Delta}_ib,v'_j)\\
			& \text{~to~}  (m-(j+1)\tilde{\Delta}_ib,v''_j)\\
			\forall & v'_j,v''_j  \in V_i
		\end{aligned}
		\right.
		\right\}.
	\end{equation} 
	Thus, we can upper bound $T_{m,v}(N_i)$ by the sum weight of maximum weighted paths in each layer, namely,
	\begin{equation}\label{eqn:Tmv-upperbound}
		T_{m,v}(N_i)
		\!\le\!\!\! \sum_{j=0}^{\frac{m}{\tilde{\Delta}_ib}-1}\!\!\!\max_{\pi}\left\{Wei(\pi)\left|
		\begin{aligned}
			\pi & \text{~from~}  (m-j\tilde{\Delta}_ib,v'_j)\\
			& \text{~to~}  (m-(j+1)\tilde{\Delta}_ib,v''_j)\\
			\forall & v',v''  \in V_i
		\end{aligned}
		\right.
		\right\}.
	\end{equation}
	Essentially, this upper bound comes from relaxing the constraint that the path is connected between adjacent layers. Since the service times are i.i.d., we have
	\begin{equation}
	\mathbb{E}\left[\text{RHS in \eqref{eqn:Tmv-upperbound}}\right]=
	\frac{m}{\tilde{\Delta}_ib}\mathbb{E}\left[\max_{\pi}\!\left\{Wei(\pi)\left|
	\begin{aligned}
	\pi & \text{~from~}  (2\tilde{\Delta}_ib,v')\\
	& \text{~to~} (\tilde{\Delta}_ib,v'')\\
	\forall & v',v''  \in V_i
	\end{aligned}
	\right.
	\right\}\right].
	\end{equation}
	Thus, by throughput definition \eqref{eqn:throughput-def}, 
	\begin{equation}\label{eqn:pf2-1}
	\theta(N_i)\ge
	\left(\frac{1}{\tilde{\Delta}_ib}\mathbb{E}\left[\max_{\pi}\!\left\{Wei(\pi)\left|
	\begin{aligned}
	\pi & \text{~from~}  (2\tilde{\Delta}_ib,v')\\
	& \text{~to~} (\tilde{\Delta}_ib,v'')\\
	\forall & v',v''  \in V_i
	\end{aligned}
	\right.
	\right\}\right]\right)^{-1}.
	\end{equation}
	By Lemma \ref{lem:exist-path}, there always exists a path from $(3\tilde{\Delta}_ib,v)$ to $(2\tilde{\Delta}_{i}b,v')$ and a path from $(\tilde{\Delta}_{i}b,v'')$ to $(0,v)$, $\forall v,v',v''\in V_i$. Thus, we have 
	\begin{eqnarray}
		&&\hspace{-0.3in}\max_{\pi}\left\{Wei(\pi)|\pi:(2\tilde{\Delta}_ib,v')\leadsto (\tilde{\Delta}_ib,v''),~\forall v',v''\in V_i \right\}\nonumber\\
		&\hspace{-0.1in}=\hspace{-0.1in}&
		\max_{v',v''\in V_i}\left\{Wei(\pi^*_{(2\tilde{\Delta}_ib,v')\leadsto (\tilde{\Delta}_ib,v'')})\right\}
		\nonumber\\	
		&\hspace{-0.1in}\le \hspace{-0.1in}&\max_{v',v''\in V_i}\left\{Wei(\pi^*_{(3\tilde{\Delta}_ib,\tilde{v})\leadsto (2\tilde{\Delta}_ib,v')})-S_{2\tilde{\Delta}_ib,v'}(N_i)\right.\nonumber\\
		&&\hspace{0.5in}\left.+Wei(\pi^*_{(2\tilde{\Delta}_ib,v')\leadsto (\tilde{\Delta}_ib,v'')})
		\right.\nonumber\\
		&&\hspace{0.5in}\left.+Wei(\pi^*_{(\tilde{\Delta}_ib,v'')\leadsto (0,\tilde{v})})-S_{\tilde{\Delta}_ib,v'}(N_i)\right\}
		\nonumber\\
		&\hspace{-0.1in}\le \hspace{-0.1in}&\max_{\pi}\left\{Wei(\pi)|\pi:(3\tilde{\Delta}_ib,\tilde{v})\leadsto (0,\tilde{v})\right\},\label{eqn:pf2-2}
	\end{eqnarray}
	where the last inequality is by the supper additive property of the maximum weighted path (see Lemma \ref{lem:super-additive}). 
	Let 
	\begin{equation}
		\pi^*_{3\tilde{\Delta}_ib,\tilde{v}}:=\argmax_{\pi}\left\{Wei(\pi)|\pi:(3\tilde{\Delta}_ib,\tilde{v})\leadsto (0,\tilde{v})\right\}.
	\end{equation}
	Combining \eqref{eqn:pf2-1} and \eqref{eqn:pf2-2} yields
	\begin{equation}\label{eqn:pf2-3}
		\liminf_{i\rightarrow \infty} \theta(N_i)
		\ge \liminf_{i\rightarrow \infty}\left(\mathbb{E}\left[\frac{Wei\big(\pi^*_{3\tilde{\Delta}_ib,\tilde{v}}\big)}{\tilde{\Delta}_ib}\right]\right)^{-1}.
	\end{equation}
	
	\bigskip
	\noindent {\it Part b: showing the lower bound is strictly positive by mapping networks onto lattices.}

	Since $dim_M(G_i)\le K-1$, there must exists a resolving set $W_i$ with cardinality $k_i\le K-1$ for $G_i$. By definition of the resolving set, we can embed $\mathcal{V}_i$ into a $K$-dimensional lattice by mapping each node $(m,v)\in \mathcal{V}_i$ to a unique point $\big(m,\bm{r}(v|W_i)-\bm{r}(\tilde{v}|W_i)\big)\in \mathcal{L}^{K}$, where zero elements are added if $k_i<K-1$. 
	Denote this one-to-one mapping as $\mathbf{M}$: 
	\begin{equation}
		\mathbf{M}(m,v)=\big(m,\bm{r}(v|W_i)-\bm{r}(\tilde{v}|W_i)\big). 
	\end{equation}

	Consider a path $\pi_{3\tilde{\Delta}_ib,\tilde{v}}:(3\tilde{\Delta}_ib,\tilde{v})\leadsto(0,\tilde{v})$. Let $|\pi_{3\tilde{\Delta}_ib,\tilde{v}}|$ denote the number of arcs on $\pi_{3\tilde{\Delta}_ib,\tilde{v}}$. 
	Let $|\pi^{I}_{3\tilde{\Delta}_ib,\tilde{v}}|$, $|\pi^{II}_{3\tilde{\Delta}_ib,\tilde{v}}|$, $|\pi^{III}_{3\tilde{\Delta}_ib,\tilde{v}}|$ denote the number of Type I, Type II, Type III arcs on $\pi_{3\tilde{\Delta}_ib,\tilde{v}}$, respectively. By Lemma \ref{lem:path-upper-bound}, we have 
	\begin{equation}
		\left|\pi_{3\tilde{\Delta}_ib,\tilde{v}}\right|\le \max\{L^*_i+1,b\}\cdot 3\tilde{\Delta}_i.
	\end{equation}
	\begin{equation}
		\left|\pi^{III}_{3\tilde{\Delta}_ib,\tilde{v}}\right|\le 3\tilde{\Delta}_i.
	\end{equation}
	
	Next, we show that the nodes on any $\pi_{3ib,\tilde{v}}$, after the one-to-one mapping $\mathbf{M}$, can be covered by a lattice animal $\xi$ 
	on a $K$-dimensional lattice. We adopt the concepts of lattice and lattice animal from \cite{Cox1993,Martin-LatticeAnimal}. A $K$-dimensional lattice, denoted as $\mathcal{L}^K$, is a graph of $\mathbb{Z}^K$. Two points $\bm{x}$ and $\bm{y}$ on $\mathcal{L}^K$ are adjacent if and only if $|\bm{x}-\bm{y}|=1$. A set of points on $\mathcal{L}^K$ is connected if and only if any pair of points in the set can be connected by a sequence of adjacent points within the set. A $K$-dimensional {\it lattice animal} $\xi$ is defined as a finite connected subset of a $K$-dimensional lattice.

	For each Type I arc $(m,v)\rightarrow(m,u)$ in $\mathcal{E}_i$ with $(u,v)\in E_i$, we have
	\begin{eqnarray}
		&&||\mathbf{M}(m,v)-\mathbf{M}(m,u)||_\infty\nonumber\\
		&=&
		||\big(m,\bm{r}(v|W_i)-\bm{r}(\tilde{v}|W_i)\big)-\big(m,\bm{r}(u|W_i)-\bm{r}(\tilde{v}|W_i)\big)||_\infty\nonumber\\
		&=&||\bm{r}(v|W_i)-\bm{r}(u|W_i)||_\infty\nonumber\\
		&=& \max_{w\in W_i}\{|dis(v,w)-dis(u,w)|\}\nonumber\\
		&\le & dis(u,v)=1.
	\end{eqnarray}
	Thus, the two points $\mathbf{M}(m,v)$ and $\mathbf{M}(m,u)$ can be connected by at most $K-1$ intermediate points (when diagonal) in the $K$-dimensional lattice. 
	
	For each Type II arc $(m,v)\rightarrow(m-1,v)$ in $\mathcal{E}_i$, the two points $\mathbf{M}(m,v)$ and $\mathbf{M}(m-1,v)$ is directly adjacent in $\mathcal{L}^K$. 
	
	For each Type III arc $(m,v)\rightarrow(m-b,u)$ in $\mathcal{E}_i$, the two points $\mathbf{M}(m,v)$ and $\mathbf{M}(m-b,u)$ can be connected by adjacent intermediate points $\mathbf{M}(m-1,v),\mathbf{M}(m-2,v),\dots,\mathbf{M}(m-b,v)$ along with at most $K-1$ points from $\mathbf{M}(m-b,v)$ to $\mathbf{M}(m-b,u)$ in $\mathcal{L}^K$, as $(v,u)\in E_i$. 
	
	In summary, the nodes on $\pi_{3\tilde{\Delta}_ib,\tilde{v}}$ and the extra adjacent intermediate nodes for Type I and Type III arcs on $\pi_{3\tilde{\Delta}_ib,\tilde{v}}$ together form a lattice animal $\xi$ on $\mathcal{L}^K$ that covers all nodes on $\pi_{3\tilde{\Delta}_ib,\tilde{v}}$. The size of such lattice animal is bounded above by
	\begin{eqnarray}
		\left|\xi\right|&\le& K\left|\pi^{I}_{3\tilde{\Delta}_ib,\tilde{v}}\right|+\left|\pi^{II}_{3\tilde{\Delta}_ib,\tilde{v}}\right|+(b+K)\left|\pi^{III}_{3\tilde{\Delta}_ib,\tilde{v}}\right|+1
		\nonumber\\
		&\le&
		K\max\{L^*_i+1,b\}\cdot 3\tilde{\Delta}_i+3b\tilde{\Delta}_i+1,
	\end{eqnarray} 
	where the term $+1$ is by considering the point $(0,\bm{0})$ where the path ends. Since $\limsup_{i\rightarrow\infty}L^*_i\!<\!\infty$, there must exist a constant $c<\infty$ such that $L^*_i<c$ for all $i\in \mathbb{Z}^+$. Thus,
	\begin{equation}
		\left|\xi\right|\le K\max\{c+1,b\}\cdot 3\tilde{\Delta}_i+3b\tilde{\Delta}_i+1=:f(\tilde{\Delta}_i)\in \mathbb{Z}^+
	\end{equation}
	This suggests that the nodes on any path $\pi_{3\tilde{\Delta}_ib,\tilde{v}}$, after the one-to-one mapping $\mathbf{M}$, can be covered by a lattice animal $\xi$ on $\mathcal{L}^K$ of size $f(\tilde{\Delta}_i)$ containing $(0,\bm{0})$. Then $Wei\big(\pi_{3\tilde{\Delta}_ib,\tilde{v}}\big)$ is bounded above by $Wei(\xi)$, where $Wei(\xi):=\sum_{(m,\bm{r})\in \xi}S_{m,\bm{r}}$ is the weight of $\xi$ and $S_{m,\bm{r}}$ follows i.i.d. $F_\sigma$ for all $(m,\bm{r})$ on $\mathcal{L}^K$. 
	
	Consequently, the weight of the maximum weighted path $\pi^*_{3\tilde{\Delta}_ib,\tilde{v}}$ is bounded above by the weight of a ``greedy lattice animal" on $\mathcal{L}^K$ of size $f(\tilde{\Delta}_i)$ containing $(0,\bm{0})$, i.e.
	\begin{equation}\label{eqn:greedy-lattice-animal-bound}
		Wei\big(\pi^*_{3\tilde{\Delta}_ib,\tilde{v}}\big)\le \max_{\xi\in \mathcal{A}^K\big(f(\tilde{\Delta}_i)\big)}Wei(\xi),
	\end{equation}
	where $\mathcal{A}^K\big(f(\tilde{\Delta}_i)\big)$ is the set of all lattice animals on $\mathcal{L}^K$ of size $f(\tilde{\Delta}_i)$ containing~$(0,\bm{0})$. Thus, we have
	\begin{eqnarray}
		\liminf_{i\rightarrow \infty} \theta(N_i)
		&\!\!\!\!\ge\!\!\!\!& \liminf_{i\rightarrow \infty}\left(\mathbb{E}\!\left[\frac{\max_{\xi\in \mathcal{A}^K(f(\tilde{\Delta}_i))}Wei(\xi)}{\tilde{\Delta}_ib}\right]\right)^{-1}\nonumber\\
		&\!\!\!\!=\!\!\!\!&\!
		\left(\!\limsup_{i\rightarrow \infty}\mathbb{E}\!\left[\frac{\max_{\xi\in \mathcal{A}^K(f(\tilde{\Delta}_i))}Wei(\xi)}{f(\tilde{\Delta}_i)}\!\cdot\!\frac{f(\tilde{\Delta}_i)}{\tilde{\Delta}_ib}\right]\!\right)^{-1}
		.
	\end{eqnarray}
	Since $\tilde{\Delta}_i$ diverges monotonically to infinite as $i\rightarrow\infty$, we have 
	
	$\bullet$ $\{ f(\tilde{\Delta}_i)\}_{i=1}^\infty$ is monotone;
	
	$\bullet$ $\lim_{i\rightarrow\infty} f(\tilde{\Delta}_i)=\infty$;
	
	$\bullet$ $\lim_{i\rightarrow\infty} \frac{f(\tilde{\Delta}_i)}{\tilde{\Delta}_ib}=3K\max\{(c+1)b,1\}+3<\infty$.
	
	\noindent Thus, 
	\begin{eqnarray}
		&&\liminf_{i\rightarrow \infty} \theta(N_i)\nonumber\\
		&\ge&\left(c'\cdot \lim_{i'\rightarrow \infty}\sup_{~i\ge i'}\mathbb{E}\left[\frac{\max_{\xi\in \mathcal{A}^K(f(\tilde{\Delta}_i))}Wei(\xi)}{f(\tilde{\Delta}_i)}\right]\right)^{-1}\nonumber\\
		&\ge&\left(c'\cdot\lim_{i'\rightarrow \infty}\sup_{~~f\ge f(\tilde{\Delta}_{i'}),f\in \mathbb{Z}^+}\mathbb{E}\left[\frac{\max_{\xi\in \mathcal{A}^K(n)}Wei(\xi)}{f}\right]\right)^{-1}\nonumber\\
		&=&\left(c'\cdot\!\!\!\lim_{~~f(\tilde{\Delta}_{i'})\rightarrow \infty}\sup_{f\ge f(\tilde{\Delta}_{i'}),f\in \mathbb{Z}^+}\mathbb{E}\left[\frac{\max_{\xi\in \mathcal{A}^K(f)}Wei(\xi)}{f}\right]\right)^{-1}\nonumber\\
		&=&\left(c'\cdot \limsup_{n\rightarrow\infty}\mathbb{E}\left[\frac{\max_{\xi\in \mathcal{A}^K(n)}Wei(\xi)}{n}\right]\right)^{-1},
	\end{eqnarray}
	where $c'=3K\max\{(c+1)b,1\}+3$ is a constant. 
	By \cite[Theorem 2.3]{Martin-LatticeAnimal}, under Condition \eqref{eqn:FsigmaK}, 
	there exists a constant $c''<\infty$ such that
	\begin{equation}\label{eqn:Martin-result}
		\sup_{n}\mathbb{E}\left[\frac{\max_{\xi\in \mathcal{A}^K(n)}Wei(\xi)}{n}\right]\le c''\cdot\int_0^\infty \big(1-F_{\sigma}(x)\big)^{1/K}dx<\infty.
	\end{equation}
	Thus, 
	\begin{equation}
		\liminf_{i\rightarrow \infty} \theta(N_i)\ge \left(c'\cdot c''\cdot\int_0^\infty \big(1-F_{\sigma}(x)\big)^{1/K}dx\right)^{-1}>0.
	\end{equation}
\end{proof}
\smallskip

Since $dim_{EM}(\mathcal{N})\le K-1$, there must exists a constant $\Lambda>0$ such that $\forall i\in \mathbb{Z}^+$, the graph $G_i$ has a $\Lambda$-extended resolving set $\mathcal{W}_i$ with cardinality $k_i\le K-1$.
Thus, $\forall i\in \mathbb{Z}^+$, we can embed $\mathcal{V}_i$ (the node set of the precedence graph $\mathcal{P}_i$) into a $K$-dimensional lattice by mapping no more than $\Lambda$ nodes $(m,v)\in\mathcal{V}_i$ to a unique point $\big(m,\bm{r}(v|\mathcal{W}_i)-\bm{r}(\tilde{v}|\mathcal{W}_i)\big)\in \mathcal{L}^K$, where zero elements are added if $k_i<K-1$. Denote this many-to-one mapping as $\mathbf{M}^\dagger$:
\begin{equation}\label{eqn:many-to-one-mapping}
	\mathbf{M}^\dagger(m,v)=\big(m,\bm{r}(v|\mathcal{W}_i)-\bm{r}(\tilde{v}|\mathcal{W}_i)\big). 
\end{equation}

Consider a path $\pi_{3\tilde{\Delta}_ib,\tilde{v}}:(3\tilde{\Delta}_ib,\tilde{v})\leadsto(0,\tilde{v})$ and consider the set of nodes on the path. Let 
\begin{equation}\vspace{-0.05in}
	\mathcal{L}(\pi_{3\tilde{\Delta}_ib,\tilde{v}})=\bigcup_{(m,v)\in \pi_{3\tilde{\Delta}_ib,\tilde{v}}} \{\mathbf{M}^\dagger(m,v)\}
\end{equation}
be the set of points in $\mathcal{L}^K$ that the nodes on $\pi_{3\tilde{\Delta}_ib,\tilde{v}}$ maps to. Because of the many-to-one mapping, each point $(m,\bm{r})\in \mathcal{L}^K$ could be visited multiple times along the path~$\pi_{3\tilde{\Delta}_ib,\tilde{v}}$. However, since each node $(m,v)\in\mathcal{V}_i$ can be visited only once by~$\pi_{3\tilde{\Delta}_ib,\tilde{v}}$ (as the precedence graph is acyclic by \cite[Lemma 5.1]{Dallery}) and no more than $\Lambda$ nodes in~$\mathcal{V}_i$ can be mapped to the same point in $\mathcal{L}^K$, each point $(m,\bm{r})\in \mathcal{L}^K$ can be visited at most $\Lambda$ many times by $\pi_{3\tilde{\Delta}_ib,\tilde{v}}$. Thus, $Wei(\pi_{3\tilde{\Delta}_ib,\tilde{v}})$ is bounded above by counting the weight of all points in $\mathcal{L}(\pi_{3\tilde{\Delta}_ib,\tilde{v}})$ for $\Lambda$ many times independently, where the weight of each point in $\mathcal{L}(\pi_{3\tilde{\Delta}_ib,\tilde{v}})$ follows $F_{\sigma}$. Alternatively, we can let the weight of each point in  $\mathcal{L}^K$ follows $F_{\sigma}^{*\Lambda}$, i.e. the $\Lambda$-fold convolution of $F_{\sigma}$, and then $Wei(\pi_{3\tilde{\Delta}_ib,\tilde{v}})$ is bounded above by the weight of all points in $\mathcal{L}(\pi_{3\tilde{\Delta}_ib,\tilde{v}})$. 

For any $(u,v)\in E_i$ and for any $W\in \mathcal{W}_i$, let 
\begin{equation}\vspace{-0.1in}
	u^*=\argmin_{w\in W}\{dis(u,w)\},
\end{equation}
\begin{equation}
	v^*=\argmin_{w\in W}\{dis(v,w)\}. 
\end{equation}
We have
\begin{eqnarray}
	&&|dis(v,W)-dis(u,W)|\nonumber\\
	&=&|dis(v,v^*)-dis(u,u^*)|\nonumber\\
	&=&\max\{dis(v,v^*)-dis(u,u^*),dis(u,u^*)-dis(v,v^*)\}\nonumber\\
	&\le & \max\{dis(v,u^*)-dis(u,u^*),dis(u,v^*)-dis(v,v^*)\}\nonumber\\
	&\le & dis(v,u).
\end{eqnarray}
Thus, 
\begin{eqnarray}
	&&||\mathbf{M}^\dagger(m,v)-\mathbf{M}^\dagger(m,u)||_\infty\nonumber\\
	&=&
	||\big(m,\bm{r}(v|\mathcal{W}_i)-\bm{r}(\tilde{v}|\mathcal{W}_i)\big)-\big(m,\bm{r}(u|\mathcal{W}_i)-\bm{r}(\tilde{v}|\mathcal{W}_i)\big)||_\infty\nonumber\\
	&=&||\bm{r}(v|\mathcal{W}_i)-\bm{r}(u|\mathcal{W}_i)||_\infty\nonumber\\
	&=& \max_{W\in \mathcal{W}_i}\{|dis(v,W)-dis(u,W)|\}\nonumber\\
	&\le & dis(u,v)= 1.
\end{eqnarray}
Similar to the proof of Lemma \ref{lem:scal-suf-metric-d}, we can cover the nodes in $\mathcal{L}(\pi_{3\tilde{\Delta}_ib,\tilde{v}})$ by a lattice animal $\xi$ on $\mathcal{L}^K$ of size $f(\tilde{\Delta}_i):=K\max\{c+1,b\}\cdot 3\tilde{\Delta}_i+3b\tilde{\Delta}_i+1$ containing $(0,\bm{0})$. Thus, $Wei\big(\pi_{3\tilde{\Delta}_ib,\tilde{v}}\big)$ is bounded above by $Wei(\xi)=\sum_{(m,\bm{r})\in \xi}S_{m,\bm{r}}$, where $|\xi|=f(\tilde{\Delta}_i)$, $(0,\bm{0})\in \xi$, and $S_{m,\bm{r}}$ follows i.i.d. $F_\sigma^{*\Lambda}$. 

As $\int_0^\infty \big(1-F_{\sigma}(x)\big)^{1/K}dx<\infty$ and $\Lambda>0$ is a constant, we have 
\begin{eqnarray}\label{eqn:Lambda-convolution-tail}
	\int_0^\infty \big(1-F_{\sigma}^{*\Lambda}(x)\big)^{1/K}dx \!\!\!
	&=&\!\!\!\int_0^\infty \mathbb{P}\{\sigma_1+\dots+\sigma_{\Lambda}>x \}^{1/K}dx  \nonumber\\
	\!\!\!&\le &\!\!\! \int_0^\infty\left(\sum_{n=1}^\Lambda\mathbb{P}\left\{\sigma_n>\frac{x}{\Lambda}\right\}\right)^{1/K}\!\!\!dx  \nonumber\\
	\!\!\!&=&\!\!\! (\Lambda)^{1/K}\Lambda\int_0^\infty \big(1-F_{\sigma}(x)\big)^{1/K}dx
	\nonumber\\
	\!\!\!&<&\!\!\!\infty,
\end{eqnarray}
where $\sigma_1,\sigma_2,\dots,\sigma_{\Lambda}$ follow i.i.d. $F_{\sigma}$. The rest of the proof follows as the proof of Lemma \ref{lem:scal-suf-metric-d}.

\section{Acknowledgments}
	This work was supported by the National Science Foundation under grants CNS-1717060, IIS-0916440, ECCS-1232118, SES-1409214. 

\bibliographystyle{abbrv}
\bibliography{FJQN-B-heavytail-bib}

\end{document}